\documentclass[acmsmall,screen,nonacm]{acmart}

\authorsaddresses{}

\citestyle{acmauthoryear}

\usepackage{caption}
\usepackage[noabbrev,nameinlink,capitalise]{cleveref}

\usepackage{todonotes}
\usepackage{bm}
\usepackage{multirow}
\usepackage[shortlabels]{enumitem}
\usepackage{graphics}
\usepackage{wrapfig}

\usepackage{algorithm2e}
\SetKw{Real}{real}
\SetKw{Int}{int}
\SetKw{Assert}{assert}

\newcommand{\Z}{\mathbb{Z}}
\newcommand{\D}{\mathbb{D}}
\newcommand{\N}{\mathbb{N}}
\newcommand{\R}{\mathbb{R}}
\newcommand{\Q}{\mathbb{Q}}

\newcommand{\NP}{\mathsf{NP}}
\newcommand{\ram}{\exists^{\mathsf{ram}}}

\newcommand{\bx}{\bm{x}}
\newcommand{\ba}{\bm{a}}

\newcommand{\bd}{\bm{d}}

\newcommand{\bu}{\bm{u}}

\newcommand{\bv}{\bm{v}}
\newcommand{\bA}{\bm{A}}
\newcommand{\bB}{\bm{B}}

\newcommand{\bzero}{\bm{0}}

\newcommand{\NEXPTIME}{\mathsf{NEXPTIME}}
\newcommand{\EXPSPACE}{\mathsf{EXPSPACE}}
\newcommand{\TwoNEXPTIME}{\mathsf{2NEXPTIME}}
\newcommand{\PSPACE}{\mathsf{PSPACE}}
\newcommand{\coNP}{\mathsf{coNP}}

\newcommand{\OMIT}[1]{}
\newcommand{\struct}{\mathfrak{A}}
\newcommand{\ZLin}{\langle \mathbb{Z}; +,<,1,0\rangle}
\newcommand{\RLin}{\langle \mathbb{R}; +,<,1,0\rangle}

\newcommand{\QfLin}{\langle \mathbb{Q}; \lfloor\cdot\rfloor,+,<,1,0\rangle}
\newcommand{\QLin}{\langle \mathbb{Q};+,<,1,0\rangle}

\newif\ifdraft\drafttrue
\ifdraft
\newcommand\anthony[1]{{\color{blue}
[#1 - \textbf{Anthony}]}}

\else
\newcommand\anthony[1]{}

\fi

\newcommand{\Theory}{\mathcal{T}}
 
\begin{document}

\title{Ramsey Quantifiers in Linear Arithmetics}

\author{Pascal Bergsträßer}
\email{bergstraesser@cs.uni-kl.de}
\orcid{0000-0002-4681-2149}
\affiliation{\institution{University of Kaiserslautern-Landau}
  \city{Kaiserslautern}
  \country{Germany}
}

\author{Moses Ganardi}
\email{ganardi@mpi-sws.org}
\orcid{0000-0002-0775-7781}
\affiliation{\institution{Max Planck Institute for Software Systems (MPI-SWS)}
  \city{Kaiserslautern}
  \country{Germany}
}

\author{Anthony W. Lin}
\email{lin@cs.uni-kl.de}
\orcid{0000-0003-4715-5096}
\affiliation{\institution{University of Kaiserslautern-Landau}
  \city{Kaiserslautern}
  \country{Germany}
}
\affiliation{\institution{Max Planck Institute for Software Systems (MPI-SWS)}
  \city{Kaiserslautern}
  \country{Germany}
}

\author{Georg Zetzsche}
\email{georg@mpi-sws.org}
\orcid{0000-0002-6421-4388}
\affiliation{\institution{Max Planck Institute for Software Systems (MPI-SWS)}
  \city{Kaiserslautern}
  \country{Germany}
}

\begin{abstract}
    We study Satisfiability Modulo Theories (SMT) enriched with the so-called 
    Ramsey quantifiers, which assert the existence of cliques (complete graphs)
    in the graph induced by some formulas. The extended framework is known to 
    have 
    applications in
    proving program termination (in particular, whether a transitive binary
    predicate is well-founded), and monadic decomposability of SMT formulas.
Our main result is a new algorithm for eliminating Ramsey
quantifiers from three common SMT theories: Linear Integer Arithmetic (LIA),
Linear Real Arithmetic (LRA), and Linear Integer Real Arithmetic (LIRA).
In particular, if we work only with existentially quantified formulas, then our
algorithm runs in polynomial time and produces a formula of linear size. One 
immediate consequence is
that checking well-foundedness of a given formula in the aforementioned theory
defining a transitive predicate
can be straightforwardly handled by highly optimized SMT-solvers. We show also 
    how this provides a
    uniform semi-algorithm for verifying termination and liveness with 
    completeness guarantee
    (in fact, with an optimal computational complexity) for several well-known
    classes of
    infinite-state systems, which include succinct timed systems, one-counter 
    systems, and monotonic counter systems.
    Another immediate consequence is a solution to an open problem on checking
    monadic decomposability of a given relation in quantifier-free fragments of
    LRA and LIRA, which is an 
    important problem in automated reasoning and constraint databases. Our
    result immediately implies decidability of this problem with an optimal
    complexity (coNP-complete) and enables exploitation of SMT-solvers. It also
    provides a termination guarantee for the generic monadic decomposition 
    algorithm of Veanes et al. for LIA, LRA, and LIRA. We report encouraging
    experimental results on a prototype implementation of our algorithms on
    micro-benchmarks.
\end{abstract}
 
\begin{CCSXML}
	<ccs2012>
	<concept>
	<concept_id>10003752.10003790.10002990</concept_id>
	<concept_desc>Theory of computation~Logic and verification</concept_desc>
	<concept_significance>500</concept_significance>
	</concept>
    <concept>
<concept_id>10003752.10003790.10003794</concept_id>
<concept_desc>Theory of computation~Automated reasoning</concept_desc>
<concept_significance>500</concept_significance>
</concept>
<concept>
<concept_id>10003752.10003777.10003778</concept_id>
<concept_desc>Theory of computation~Complexity classes</concept_desc>
<concept_significance>300</concept_significance>
</concept>
<concept>
<concept_id>10003752.10003790.10011192</concept_id>
<concept_desc>Theory of computation~Verification by model checking</concept_desc>
<concept_significance>300</concept_significance>
</concept>
<concept>
<concept_id>10003752.10010124.10010138.10010142</concept_id>
<concept_desc>Theory of computation~Program verification</concept_desc>
<concept_significance>500</concept_significance>
</concept>
<concept>
<concept_id>10003752.10010124.10010138.10010143</concept_id>
<concept_desc>Theory of computation~Program analysis</concept_desc>
<concept_significance>100</concept_significance>
</concept>
	</ccs2012>
\end{CCSXML}

\ccsdesc[500]{Theory of computation~Logic and verification}
\ccsdesc[500]{Theory of computation~Automated reasoning}
\ccsdesc[500]{Theory of computation~Program verification}
\ccsdesc[300]{Theory of computation~Complexity classes}
\ccsdesc[300]{Theory of computation~Verification by model checking}
\ccsdesc[100]{Theory of computation~Program analysis}

\keywords{Ramsey Quantifiers, Satisfiability Modulo Theories, Linear Integer Arithmetic, Linear Real Arithmetic, Monadic Decomposability, Liveness, Termination, Infinite Chains, Infinite Cliques}

\maketitle
\section{Introduction}\label{sec:introduction}
The last two decades have witnessed significant advances in software
verification \cite{SMC-survey}. One prominent and fruitful approach to software 
verification is that of
\emph{deductive verification} and \emph{program logics}
\cite{NO80,Sho84,Leino-book}, whereby one models a 
specification of a program $P$ as a formula $\varphi_P$ over some logical 
theory, usually in first-order logic (FO), or a fragment thereof (e.g. 
quantifier-free 
formulas or existential formulas). That way, the original problem is reduced to 
satisfiability of the formula $\varphi_P$ (i.e. whether it has a solution). 
One decisive factor of the success of
this software verification approach is that solvers for satisfiability of 
boolean formulas and extensions to quantifier-free and existential theories 
(a.k.a. SAT-solvers and SMT-solvers, respectively) have made an enormous stride
forward in the last decades to the extent that they are now capable of solving 
practical industrial instances. The cornerstone theories in the SMT
framework include the theory of Linear Integer Arithmetic (LIA),
the theory of Linear Real Arithmetic (LRA), and the mixed theory of 
Linear Integer Real Arithmetic (LIRA).
Among others, these can be used to naturally model numeric programs
\cite{DBLP:conf/cav/HagueL11}, 
programs with clocks \cite{DBLP:conf/cav/HagueL11,BH06,D01,DPK01,DIBKS00}, 
linear hybrid systems \cite{BH06},
and numeric abstractions of programs that manipulate lists and arrays
\cite{lists-counters,DBLP:conf/cav/HagueL11}.

Most program logics in software verification can be formulated directly within 
FO. For example, to specify a \emph{safety} property, a programmer may provide a
formula $Inv$ asserting a desired invariant for the program. In turn, the
property that $Inv$ is an invariant is definable in FO. In fact, if we stay
within the quantifier-free fragment of FO, this check can be easily and
efficiently verified by SMT-solvers.
Some program logics, however, require us to go beyond FO. Most notably, when
verifying that a program terminates, a programmer must provide well-founded
relations (or a finite disjunction thereof) and prove that this covers the 
transitive closure of the program (e.g. see \cite{PR04}). [Some techniques
realize the proof rules of \cite{PR04} (e.g. see \cite{CPR11}) by constructing
relations that are guaranteed to be well-founded by construction, but these
limit the shapes of the well-founded relations that can be constructed.]
In case such a relation is synthesized and not guaranteed to be well-founded,
one may want to check the well-foundedness property automatically.
Since well-foundedness of a transitive
predicate is in general not a first-order property (e.g. see Problem 1.4.1 
of \cite{model-theory-book}), an extension of FO is required to be able to
reason about well-foundedness. One solution is to simply enrich FO with
an ad-hoc condition for checking well-foundedness of a relation \cite{BPR13}.
A more general solution is to extend FO with \emph{Ramsey quantifiers}
\cite{lics22-ramsey} (see also Chapter VII of \cite{BF85}) and study
elimination of such quantifiers in the logical theory under consideration.
This latter solution is known \cite{lics22-ramsey} to also provide an approach 
to analyze variable dependencies (a.k.a. \emph{monadic decomposability})
in a first-order formula, which has
applications in formal verification \cite{DBLP:journals/jacm/VeanesBNB17}
and query optimization in constraint databases \cite{GRS01,CDB-book}.

\OMIT{
requires a second-order quantifier asserting the existence of an 
\emph{invariant} $Inv$ for the program, resulting in a formula of the form
$\exists Inv( \varphi )$, where $\varphi$ is first-order. That is also the 
reason why proving safety often requires a synthesis algorithm 
to synthesize such an invariant,
in combination with an SMT-solver for checking satisfiability/validity of
$\varphi$. Such synthesis algorithms are implemented in solvers for 
\emph{Constraint Horn Clauses} (CHC), where $Inv$ is treated as a
recursive predicate, e.g., see \cite{BMR12,BGMR15}. Proving termination is
typically more challenging, where the ability to express that a
transitive relation
is well-founded is required \cite{PR04,CPR11}. Since well-foundedness is not 
definable in first-order logic even with the help of recursive predicates
(equivalently, second-order quantifiers), the program logic is sometimes
simply enriched by an ad-hoc condition for checking well-foundedness of a 
relation \cite{BPR13}. 
}

\subsubsection*{SMT with Ramsey quantifiers} 
\OMIT{
Recently, it was shown
\cite{lics22-ramsey} that enriching first-order logic with the so-called 
\emph{Ramsey quantifiers} could lead to a sufficiently powerful logic for
expressing termination/liveness properties, while at the same time allowing
decision procedures for certain first-order theories to be extended with such
quantifiers.
}
In a nutshell,
a Ramsey quantifier asserts the existence of an infinite sequence
of elements forming a clique (i.e. a complete graph) in the 
graph induced by a given formula. [There are in fact two flavors of Ramsey 
quantifiers, of which one asserts the existence of an undirected clique (e.g.
see Chapter VII of \cite{BF85}), and the other of a \emph{directed} clique
\cite{lics22-ramsey}. In the sequel, we will only deal with the latter because
of the applications to reasoning about liveness and variable dependencies.]
More precisely,
if $\varphi(\bm x,\bm y)$ is a formula over a structure $\struct$ with universe $D$
and $\bm x$, $\bm y$ are $k$-tuples of variables,
the formula $\ram \bm x,\bm y\colon \varphi(\bm x,\bm y)$ asserts the existence of an \emph{infinite (directed) $\varphi$-clique}, i.e.\ a sequence $\bm{v}_1,\bm {v}_2,\ldots$ of pairwise
distinct $k$-tuples in $D^k$ such that $\struct \models \varphi(\bm{v}_i,\bm{v}_j)$ for all $i < j$. 
For example,
in the theory $T = \RLin$ of Linear Real Arithmetic, we have $T \models 
\ram x,y\colon ( x < y \wedge x > 99 \wedge y < 100)$ because there
are infinitely many numbers between 99 and 100. 

How do Ramsey quantifiers connect to proving termination/liveness? Let us take a
proof rule for termination/liveness from \cite{PR04}, which
concerns covering the transitive closure $R^+$ of a relation $R$ by well-founded
relations (or a finite disjunction thereof). At its simplest form, we obtain a
verification condition of asserting 
\begin{equation}
	R \subseteq T \quad\wedge\quad T \circ R \subseteq T \quad\wedge\quad T \text{ is well-founded}.\label{proof-rule-well-founded}
\end{equation}
Such a $T$ satisfying the first two conjuncts is said to be an \emph{inductive
relation} \cite{PR04}. The disjunctive well-founded version can be stated
similarly, but with a disjunction of relations $T_i$ instead of just a
single $T$. Here, one defines a relation to be \emph{well-founded} if 
there is no infinite $T$-chain, i.e., $s_1, s_2, \ldots$ such
that $(s_i,s_{i+1}) \in T$ for each $i$. 
Clearly, if $T$ is well-founded, then there is no $T$-loop (i.e.\ $x$ with $T(x,x)$) and no infinite $T$-clique. Thus, $T$ also satisfies
\begin{equation}
	R \subseteq T \quad\wedge\quad T \circ R \subseteq T \quad\wedge\quad \text{no $T$-loop}\quad \wedge\quad \text{no infinite $T$-clique}.\label{proof-rule-re}
\end{equation}
Note that \eqref{proof-rule-re} also implies termination of $R$, despite imposing a weaker requirement on $T$.
However, the conditions in \eqref{proof-rule-re} are easily expressed with the Ramsey quantifier: The absence of $T$-loops is a first-order
property (i.e. $\neg \exists x \colon T(x,x)$) and the
absence of an infinite $T$-clique is definable with the help of a Ramsey
quantifier $\neg \ram x,y \colon T(x,y)$.
As an example for a covering that satisfies \eqref{proof-rule-re} but not \eqref{proof-rule-well-founded}, 
consider the well-founded relation $R=\{(i+1,i) \mid i\in\mathbb{N}\}$ 
and the covering $T=\{(i,j) \mid i,j\in \mathbb{N},~i>j\} \cup \{(i,i-1) \mid i\le 0\}$. 
Then $R^+\subseteq T$ and $T$ is loop-free and contains no (directed) infinite clique, 
hence \eqref{proof-rule-re} proves termination of $R$. However, $T$ is not well-founded.
\OMIT{
There
are many proof rules for termination/liveness of a program (e.g.  
\cite{CPR11,PR04,AProVE,UGK16,LR16}), but most require 
some kind of reasoning of well-foundedness of a relation. 
In addition, well-foundedness
of a relation $R$ can easily be expressed with the help of a Ramsey quantifier:
\[
    \neg \ram x,y R(x,y)
\]
}

Most techniques for handling Ramsey quantifiers proceed by eliminating them.
In the early 1980s, \citet{schmerlsimpson1982} showed that 
in LIA, Ramsey quantifiers $\ram \bm x,\bm y$ can be eliminated if $\bm x$ and 
$\bm y$ are single variables (hence, it is about cliques of numbers, not 
vectors). [Actually, their result concerns only undirected cliques, but the 
proof easily generalizes to directed cliques.] At the turn of the 21st centry,
\citet{DangI02} provided a procedure to decide whether a given 
relation $R$ described in LIRA admits an infinite directed clique. Their proof 
yields that general Ramsey quantifiers (i.e.\ about vectors) can be eliminated 
in LIRA: The procedure transforms the input formula into a LIRA formula that 
holds if and only if $R$ admits an infinite directed clique. However, the 
procedure of \citet{DangI02} (i)~requires the input formula to be
quantifier-free (this is also the case for \citet{schmerlsimpson1982}) and
(ii)~yields a formula with several quantifier alternations.  Because of~(ii),
the algorithm needs to then decide the truth of a LIRA formula with quantifier
alternations, for which \citet[p.~924]{DangI02} provide (based on
\citet{Weispfenning99}) a doubly exponential time bound of $2^{{L^{n^c}}}$ for
a constant $c$, where $L$ is the length of the input formula and $n$ is the
number of variables.
Because of (i), applying the algorithm to an existential LIRA formula $\varphi$
for $R$ necessitates an elimination of the existential quantifiers from
$\varphi$. If $\varphi$ is of length $\ell$ with $q$ quantified variables, then
according to~\citet[Theorem 5.1]{Weispfenning99}, this results in a
quantifier-free formula of size $2^{\ell^{q^d}}$ for some constant $d$.
Plugging this into the construction of \citet{DangI02} yields a triply exponential time
bound of $2^{2^{n^c\cdot \ell^{q^d}}}$.

More recent results \cite{lics22-ramsey,TL08} on eliminating Ramsey quantifiers 
over \emph{theories of string (resp. tree) automatic structures} are also worth
mentioning. These are rich classes of logical structures whose domains/relations can be 
encoded using string/tree automata \cite{BLSS03,BG00}, and subsume various
arithmetic theories including LIA and Skolem
Arithmetic (i.e. $\langle \Z; \times, \leq, 1, 0\rangle$). Among others, this gives
rise to a decision procedure for LIA with Ramsey quantifiers, which runs in
exponential time.
The main problem with the decision procedures given in \cite{lics22-ramsey,TL08}
is that
they cannot be implemented directly on top of an existing (and highly optimized)
SMT-solver, and their complexity is rather high. Secondly, it does not yield
algorithms for LRA and LIRA. In fact, the common extension of LIRA and automatic
structures are the so-called $\omega$-automatic structures, for which 
eliminability of Ramsey quantifiers is a long-standing open problem
\cite{K10}.

\subsubsection*{Monadic decomposability.} Another application of Ramsey quantifiers is the analysis of variable 
dependencies (i.e. monadic decomposability 
\cite{DBLP:journals/jacm/VeanesBNB17}) of formulas. Loosely speaking, a formula 
$\varphi(x_1,\ldots,x_n)$ is \emph{monadically decomposable} in the theory $\Theory$ if it is
equivalent (over $\Theory$) to a boolean combination of $\Theory$-formulas of
the form $\varphi(x_i)$, i.e., with at most one $x_i$ as a free variable.
This boolean combination of monadic formulas is a monadic decomposition of
$\varphi$.
For example, the formula $x_1+x_2 \geq 2 \wedge x_1 \geq 0 \wedge x_2 \geq 0$
is monadically decomposable in LIA as it is equivalent to
\[
    (x_1 \geq 2 \wedge x_2 \geq 0) \vee (x_1 \geq 1 \wedge x_2 \geq 1)
    \vee (x_1 \geq 0 \wedge x_2 \geq 0).
\]
Monadic decompositions have numerous applications in formal verification 
including string analysis \cite{DBLP:journals/jacm/VeanesBNB17,DBLP:conf/cade/HagueLRW20}
and query optimization in constraint databases \cite{GRS01,CDB-book}.
\citet{DBLP:journals/jacm/VeanesBNB17} gave a generic
semi-algorithm that is guaranteed to output a monadic decomposition of SMT
formulas, if such a decomposition exists. To make this semi-algorithm
terminating, one may incorporate a monadic decomposability check, which 
exists for numerous theories 
\cite{DBLP:journals/jacm/VeanesBNB17,DBLP:journals/tocl/Libkin03,BHLLN19,DBLP:conf/cade/HagueLRW20,lics22-ramsey}.
However, most of these algorithms have very high computational complexity, and 
for some theories the precise computational complexity is still an open problem.
Recently, \citet{DBLP:conf/cade/HagueLRW20} have shown
that monadic decomposability of quantifier-free LIA formulas is $\coNP$-complete,
in contrast to the previously known double exponential-time algorithm 
\cite{DBLP:journals/tocl/Libkin03}. 
In case of quantifier-free LRA and LIRA the precise complexity is still open. 
Although both of which can be shown to be decidable in $\PSPACE$ \cite{DBLP:journals/corr/abs-2304-11034}.

\subsubsection*{Contributions.} The main contribution of our paper is
new algorithms for eliminating Ramsey quantifiers 
for three common SMT theories: Linear Integer
Arithmetic (LIA), Linear Real Arithmetic (LRA), and Linear Integer Real 
Arithmetic (LIRA). If we restrict to existential fragments, the algorithms 
run in polynomial time and produce formulas of linear size.
[Here, in the definition of size we assume that every variable occurrence has length one.]
As a consequence,
SMT over these theories can be extended with Ramsey quantifiers 
only with a small overhead on SMT-solvers. Our results substantially improve 
the complexity of the elimination procedures of Ramsey quantifiers from
\cite{schmerlsimpson1982,DangI02}, which run in at least double exponential
time. This has direct applications in
proving program termination (especially, connected to well-foundedness checks)
and monadic decomposability (including, the precise complexity for 
LIA/LRA/LIRA). We detail our contributions below.

\paragraph{Key novel ingredients:} We circumvent the high complexities 
of \cite{schmerlsimpson1982,DangI02} as follows. The \emph{first key 
ingredient} is a procedure to eliminate existential quantifiers \emph{in the context of Ramsey quantifiers}: We prove that any formula
\begin{equation} \ram\bm{x},\bm{y}\colon\exists\bm{w}\colon\varphi(\bm{x},\bm{y},\bm{w}) \label{intro-before-qe}\end{equation}
with some quantifier-free $\varphi$ and quantifier block $\exists\bm{w}$ is equivalent to 
\begin{equation} \ram (\bm{x},\bm{r},\bm{s}),(\bm{y},\bm{t},\bm{u})\colon \varphi(\bm{x},\bm{y},\bm{s}+\bm{t}). \label{intro-after-qe}\end{equation}
Note that the formula \eqref{intro-before-qe} says that there exists a sequence
$\bm{a}_1,\bm{a}_2,\ldots$ of vectors such that for any $i<j$, there exists a
$\bm{b}_{i,j}$ with $\varphi(\bm{a}_i,\bm{a}_j,\bm{b}_{i,j})$. The equivalence
says that if such $\bm{b}_{i,j}$ exist, then there are $\bm{b}_i,\bm{b}'_i$
such that one can choose $\bm{b}_{i,j}:=\bm{b}'_i+\bm{b}_j$ to satisfy
$\varphi$. This comes as a surprise, because instead of needing to choose a
vector for each \emph{edge} of an infinite clique, it suffices to merely choose
an additional vector at each \emph{node}. This non-obvious structural result
about infinite cliques yields an algorithmically extremly simple elimination of
quantifiers, which just replaces \eqref{intro-before-qe} with
\eqref{intro-after-qe}.

Our \emph{second key ingredient} allows us to express the existence of an infinite directed clique in an \emph{existential} formula. Very roughly speaking, \citet{DangI02} express the existence of an infinite directed clique by saying that for each $k$, there exists a vector $\bm{a}_k$ for which some measure is at least $k$. By choosing appropriate measures, this ensures that the sequence $\bm{a}_1,\bm{a}_2,\ldots$ has certain unboundedness or convergence properties. In contrast, our formula expresses the existence of vectors $\bm{a}$, $\bm{d}_c$, and $\bm{d}_\infty$, such that the sequence 
\begin{equation} \bm{a}_k=\bm{a}-\tfrac{1}{k}\bm{d}_c+k\bm{d}_\infty \label{intro-general-form}\end{equation}
has an infinite clique as a subsequence. Here, the vector $\bm{d}_c$ is used to
enable convergence behavior (and is not needed in the case of LIA). 
Note that this is possible despite the fact that there are
formulas $\varphi$ for which no infinite $\varphi$-clique can be written in the
form above. However, one can always find an infinite $\varphi$-clique as a
subsequence of a sequence as in \eqref{intro-general-form}.

\OMIT{
\paragraph{Eliminating Ramsey quantifiers.} 
Our 
polynomial-time algorithm takes as input a formula $\psi(\bm z) ::= \ram \bm x,\bm y\colon \varphi(\bm x,\bm y, \bm z)$, where $\varphi$ is an existentially quantified formula in
LIA/LRA/LIRA, and computes an existentially quantified LIA/LRA/LIRA formula
$\theta(\bm z)$ equivalent to $\psi(\bm z)$. Moreoever, $\theta$ is of size
\emph{linear} in the size of $\psi$. Thus, satisfiability of $\psi$
can be checked by simply running an SMT-solver on $\theta$. 
}

\OMIT{
\paragraph{Applications.} 
For example, our result can be combined with existing results 
(e.g. on computation of reachability relations 
\cite{BFLP08,BFLS05,BH06,BLW03,TORMC})
to derive automatic methods for
proving program termination/non-termination.
In fact, we obtain a uniform algorithm (with an optimal computational
complexity) for solving termination and non-termination over timed systems \cite{?}, succinct one-counter systems \cite{?}, and 
monotonic counter systems \cite{?}. 
In addition, our
result leads to a solution to an open problem on checking \emph{monadic 
decomposability} for formulas in quantifier-free LRA/LIRA, which we show to 
be solvable in $\coNP$ and easily reduced to SMT-solvers. We detail our results
below.

\paragraph{Proving termination/non-termination.}

Veanes
et al. have provided for multiple SMT theories.

\subsubsection*{Contributions.} The main contribution of our paper is to 
show that some of the most common SMT theories --- including Linear Integer
Arithmetic (LIA), Linear Real Arithmetic (LRA), and Linear Integer Real 
Arithmetic (LIRA) --- can be enriched with Ramsey quantifiers only with
a small overhead on SMT-solvers. This leads to an extension of
SMT framework that can directly encode program termination. 
For example, our result can be combined with existing results 
(e.g. on computation of reachability relations 
\cite{BFLP08,BFLS05,BH06,BLW03,TORMC})
to derive automatic methods for
proving program termination/non-termination.
In fact, we obtain a uniform algorithm (with an optimal computational
complexity) for solving termination and non-termination over timed systems \cite{?}, succinct one-counter systems \cite{?}, and 
monotonic counter systems \cite{?}. 
In addition, our
result leads to a solution to an open problem on checking \emph{monadic 
decomposability} for formulas in quantifier-free LRA/LIRA, which we show to 
be solvable in $\coNP$ and easily reduced to SMT-solvers. We detail our results
below.

\paragraph{Eliminating Ramsey quantifiers.} Our main result is an algorithm for
eliminating Ramsey quantifiers. More precisely, we give a 
polynomial-time algorithm that given a formula $\psi(\bm z) ::= \ram \bm x,\bm y\colon\varphi(\bm x,\bm y,\bm z)$, where $\varphi$ is an existentially quantified formula in
LIA/LRA/LIRA, computes an existentially quantified LIA/LRA/LIRA formula
$\theta(\bm z)$ equivalent to $\psi(\bm z)$. Moreoever, $\theta$ is of size
\emph{linear} in the size of $\psi$. As a consequence, satisfiability of $\psi$
can be checked by simply running an SMT-solver on $\theta$. \todo[inline]{GZ: this paragraph seems to be a duplicate}

\paragraph{Proving termination/non-termination.}

\paragraph{Monadic decomposition.}
}

\paragraph{Proving termination/non-termination.}
Since well-foundedness of an inductive relation can be expressed by means of
an SMT formula with a Ramsey quantifier, our quantifier elimination procedure
yields a formula $\psi$ without Ramsey quantifiers, whose size is linear in the 
size of the original formula
$\varphi$ expressing the verification condition. This means that we can prove 
termination by simply checking satisfiability of $\psi$, which can be checked
easily by an SMT-solver. Similarly, if we provide an \emph{underapproximation}
$T \subseteq R^+$, we may use this to prove non-termination of $R$ by simply
checking satisfiability of
$
    T(\bm x,\bm x) \vee \ram \bm x,\bm y \colon T(\bm x,\bm y). 
$
By the same token, the Ramsey quantifier can be eliminated using our algorithm,
resulting in a formula of linear size that can be easily handled by SMT-solvers.

In fact, one can combine our results with various semi-algorithms for
computing approximations of reachability relations (e.g.
\cite{BFLP08,BFLS05,BH06,BLW03,TORMC}),
yielding a semi-algorithm for deciding
termination/non-termination with completeness guarantee for many classes of
infinite-state systems operating over integer and real variables. These include 
classes of hybrid systems and timed systems (e.g. timed pushdown systems), 
reversal-bounded counter systems,  
and continuous vector addition systems with states. For these, we also
obtain tight computational complexity for the problem.

\paragraph{Monadic decomposition.}
Our procedure reduces a monadic decomposability check for a LIA, LRA, or LIRA formula
to linearly many unsatisfiability queries over the same theory. As before, the
resulting formulas without Ramsey quantifiers is of linear size, which can be
handled easily by SMT-solvers. This reduction also shows that monadic
decomposability for LIA/LRA/LIRA is in $\coNP$, which can be shown to be the
precise complexity for the problems. The $\coNP$ complexity for LIA was shown
already by \citet{DBLP:conf/cade/HagueLRW20}, but with a completely different
reduction (and no experimental validation). 
The $\coNP$ complexity of monadic decomposability for LRA/LIRA is new and answers
the open questions posed by \citet{DBLP:journals/jacm/VeanesBNB17} and \citet{DBLP:journals/corr/abs-2304-11034}.

\paragraph{Implementation.}
We have implemented a prototype of our elimination algorithms for LIA, LRA, and LIRA
and tested it on two sets of micro-benchmarks.
The first benchmarks contain examples where a single Ramsey quantifier has to be eliminated.
Such formulas can for example be derived from program (non-)termination.
With the second benchmarks we use our algorithms to check monadic decomposability as described above.
Here, we compare our algorithm to the ones in \cite{DBLP:journals/jacm/VeanesBNB17} and \cite{MSL21}.
For both sets of benchmarks we obtain promising results.

\OMIT{
\paragraph{Organization.} We begin in Section with a few more detailed examples
illustrating the problems of termination/non-termination proving, and our
approach of removing Ramsey quantifiers.
}
 
\section{More Detailed Examples}\label{sec:examples}

In this section, we give concrete examples illustrating the problems of 
proving termination and non-termination, and how these give rise to 
verification conditions involving Ramsey quantifiers. We then discuss how 
our algorithms eliminate Ramsey quantifiers from these verification conditions.

\paragraph*{Proving termination.} We consider a simplified version of 
McCarthy 91 program \cite{MP70}. The program has two integer variables $n,m$
and applies the following rules till termination:
\begin{enumerate}
\item $n := n-1$ and $m := m-1$, if $n > 0$ and $m \geq 0$.
\item $n := n+1$ and $m := m+2$, if $n > 0$ and $m < 0$.
\end{enumerate}
The interesting case in this termination proof is when $-1 \leq m \leq 1$.
For simplicity, we will only deal with this. In the sequel, we will write $R
\subseteq \N^2 \times \N^2$ to denote the relation generated by the program
restricted to $-1 \leq m \leq 1$.

To prove termination, we will need to annotate the program with an inductive
relation $T \subseteq \N^2 \times \N^2$ that is well-founded. Define $T$
as the conjunction of $n > 0 \wedge -1 \leq m,m' \leq 1 \wedge n' \geq 0$ and 
disjunctions of 
relations $T_1,\ldots,T_6$ as specified below.
\begin{align*}
    T_1 & := m' = 0 \wedge n' = n \wedge m = -1 \wedge n = 1 &
    T_2 & := m' = 1 \wedge n' = n+1 \wedge m = -1 \wedge n \geq 1
        \\
    T_3 & := m' > m \wedge n' = n \wedge n \geq 2 &
    T_4 & := n' < n \wedge n' \geq 0 \wedge m \leq 0 \\
    T_5 & := n' < n \wedge n' \geq 0 \wedge m = 1 \wedge m' \geq 0 &
    T_6 & := n' < n-1 \wedge n' \geq 0 \wedge m = 1 \wedge m' = -1
\end{align*}
The condition that $T$ is inductive is easily phrased as satisfiability
of a quantifier-free LIA formula:
\[
    [R(n,m,n',m') \wedge \neg T(n,m,n',m')] \vee [T(n,m,n',m') \wedge
        R(n',m',n'',m'') \wedge \neg T(n,m,n'',m'')].
\]
We need to prove unsatisfiability of this formula, which can be easily checked 
using a LIA solver, which is supported by major
SMT-solvers (e.g. Z3 \cite{MB08}). To prove well-foundedness of $T$, we consider
two cases. The looping case also easily translates into a LIA formula:
\[
    T(n,m,n',m') \wedge T(n',m',n',m').
\]
Again, we need to prove that this is unsatisfiable. The non-looping case is 
the one that requires a Ramsey quantifier where we need to prove unsatisfiability of:
\[
    \ram (n,m),(n',m') \colon T(n,m,n',m').
\]

\paragraph{Proving non-termination.}

\begin{wrapfigure}[14]{r}{0.5\textwidth}
\centering
\begin{minipage}{0.3\textwidth}
\begin{algorithm}[H]\footnotesize
\Real $x_1 \gets$ input-real()\;
\Int $x_2 \gets$ input-int()\;
\Assert $x_1 > 0$\;
\While{$x_2 > 0$}{
	\Real $t_1 \gets$ input-real()\;
	\Assert $t_1 \ge 0.5 x_1 + 0.5$\;
	$x_1 \gets t_1$\;
	\Int $t_2 \gets$ input-int()\;
	\Assert $t_2 \ge 0$\;
	$x_2 \gets x_2 - \lfloor x_1 \rfloor - t_2$\;
}
\end{algorithm}
\end{minipage}
\caption{Example of a non-terminating program.}
\label{fig:program}
\end{wrapfigure}

Let us now present an example of a program (see \Cref{fig:program}) where Ramsey quantifiers can be used to prove non-termination.

The reachability relation $\to_1$ for $x_1$ and $x_2$ after the first iteration of the while-loop is
$(x_1,x_2) \to_1 (y_1,y_2)$ such that
$x_1 > 0 \wedge x_2 > 0 \wedge y_1 \ge 0.5x_1+0.5 \wedge y_2 \le x_2 - \lfloor x_1 \rfloor$.
It turns out that $\to_1$ is already an under-approximation of the reachability relation $\to^+$ after at least one iteration.
Thus, to show non-termination, it suffices to show that $\to_1$ and therefore $\to^+$ has an infinite clique.
For example we find the clique $(a_i,b_i)_{i \ge 1}$ with 
$a_1 = 0.5$, $a_{i+1} = 0.5 a_{i}+0.5$, and $b_i = 1$ for all $i \ge 1$
which corresponds to choosing $x_1 = 0.5$ and $x_2 = 1$ at the beginning and $t_1 = 0.5x_1+0.5$ and $t_2 = 0$ in each iteration.
Here we can see that $(a_i)_{i\ge1}$ converges against 1 but never reaches 1, which means that $\lfloor a_i \rfloor$ is always 0 for all $i \ge 1$.

\paragraph{Illustration of how to remove a Ramsey quantifier.}
Let us see an example of Ramsey quantifier elimination. Consider the following formula:
\[ \varphi(\bm x,\bm y,\bm z)\quad =\quad x_1 + \frac{z_1-x_1}{2} \le y_1 \le z_1\quad \wedge \quad x_2+\frac{z_2-x_2}{2}\le y_2\le z_2 \quad \wedge \quad y_2=\lfloor y_1\rfloor, \]
in which $\bm x=(x_1,x_2)$, $\bm y=(y_1,y_2)$, and $\bm z=(z_1,z_2)$. Now we claim that
$\ram \bm x,\bm y\colon \varphi(\bm x,\bm y,\bm z)$ is equivalent to
\[ z_2=\lfloor z_1\rfloor \quad \vee \quad (z_1=\lfloor z_1\rfloor \wedge z_2=z_1-1). \]
Note that $\ram\bm x,\bm y\colon\varphi(\bm x,\bm y,\bm z)$ expresses that
there exists a sequence $\bm{a}_1,\bm{a}_2,\ldots\in\R^2$ such that the first
components converge from below against $z_1$: The first conjunct in $\varphi$ requires the
first component of $\bm a_j$ to have at most half the distance to $z_1$ as the
first component of $\bm a_i$, for every $i<j$. Similarly, the second conjunct
forces the second components to converge against $z_2$.

Furthermore, the third conjunct requires the second components of $\bm a_i$ to
be the floor of the first component of $\bm a_i$, for every $i$. Thus, if $z_1$
is not an integer, then the first components of $\bm a_1,\bm a_2,\ldots$ will
eventually be between $\lfloor z_1\rfloor$ and $z_1$. Thus, the second
components must eventually be equal to $\lfloor z_1\rfloor$ and thus also their
limit: $z_2=\lfloor z_1\rfloor$. However, if $z_1$ is an integer, then there is
another option: The first components can all be strictly smaller than
$z_1=\lfloor z_1\rfloor$. But then the second components must eventually be
equal to $\lfloor z_1\rfloor-1$, and thus $z_2=z_1-1$.

\section{Preliminaries}\label{sec:preliminaries}

We denote a vector with components $(x_1,\dots,x_k)$ of dimension $k$ with a boldface letter $\bm x$ and
for numbers $n$ we write $\bm n$ for a vector $(n,\dots,n)$ of appropriate dimension.
On vectors $\bm x$ and $\bm y$ of dimension $k$ we define the usual pointwise partial order $\bm x \le \bm y$
such that $x_i \le y_i$ for all $1 \le i \le k$.
Moreover, we define $\bm x \ll \bm y$ if $x_i < y_i$ for \emph{every} $i$.

To reduce the usage of parentheses, we assume the binding strengths of logical operators to be
$\neg$, $\wedge$, $\vee$, $\to$
in decreasing order and quantifiers bind the weakest.

We define the \emph{size} of a formula by the length of its usual encoding 
where we assume that every variable occurrence has length one.
In the following we formally only define formulas with constants $0$ and $1$,
but we will also use arbitrary constants that, when encoded in binary,
can be eliminated with only a linear blow-up in the above size definition. 
Note that for implementations, it would also make sense to measure the length of writing the formula using a fixed alphabet, 
which would incur a logarithmic-length string per variable occurrence.

\subsection*{Linear Integer Arithmetic}
\emph{Linear Integer Arithmetic} (LIA) is defined as the first-order theory with the structure $\ZLin$.
LIA is also called \emph{Presburger arithmetic} and we will use these terms interchangeably.
We will only work on the existential fragment of LIA, i.e., formulas of the form $\exists \bm x \colon \varphi(\bm x,\bm z)$
where the variables in $\bm x$ are bound by the existence quantifier and $\bm z$ is a vector of free variables.
\begin{proposition}[\cite{borosh1976bounds}]\label{prop:sat-presburger}
Satisfiability of existential formulas in LIA is $\NP$-complete.
\end{proposition}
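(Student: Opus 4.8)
The plan is to prove both membership in $\NP$ and $\NP$-hardness separately, since the statement asserts $\NP$-completeness.

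For the \textbf{upper bound}, the key tool is a small-model property for existential Presburger arithmetic: if a system of linear Diophantine (in)equalities over $\Z$ has a solution, it has one whose bit-length is polynomial in the size of the system. I would first put a given existential formula $\exists\bm{x}\colon\varphi(\bm x,\bm z)$ into disjunctive normal form only implicitly — i.e., a nondeterministic machine guesses one disjunct of the DNF (a conjunction of literals $\bm a_i\cdot\bm x \sim_i c_i$, where $\sim_i\in\{=,<,\le\}$, strict inequalities handled by adding a slack term), so we never actually write out the exponentially large DNF. Then I would invoke the bound of \citet{borosh1976bounds} (or the standard polytope-vertex / Carathéodory-style argument) to conclude that this conjunctive system, if satisfiable, has a witness $\bm x$ of polynomially bounded bit-size. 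The verifier then guesses that witness, substitutes it in, and checks each linear (in)equality in polynomial time. Hence the problem is in $\NP$.

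For the \textbf{lower bound}, I would reduce from a standard $\NP$-complete problem. The cleanest route is from \textsc{3SAT} (or directly from \textsc{Subset-Sum / 0-1 Integer Programming}): given a 3-CNF formula with variables $p_1,\dots,p_n$, introduce integer variables $x_1,\dots,x_n$, add the constraints $0\le x_i\le 1$ for each $i$ to force each $x_i$ to encode a truth value, and for each clause $\ell_1\vee\ell_2\vee\ell_3$ add the linear inequality stating that the sum of the corresponding terms ($x_i$ for a positive literal, $1-x_i$ for a negative one) is at least $1$. The resulting existential LIA formula (with no free variables, so the question is plain satisfiability) is satisfiable iff the 3-CNF formula is, and the translation is clearly polynomial. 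This establishes $\NP$-hardness.

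The \emph{main obstacle} — really the only nontrivial point — is the small-model bound underlying the $\NP$ upper bound: one must be careful that after guessing a disjunct and introducing slack variables for strict inequalities, the coefficients and the number of constraints are still polynomial in the input size, so that the polynomial bit-length bound on solutions genuinely applies; this is exactly what \Cref{prop:sat-presburger} cites from \citet{borosh1976bounds}, so in the write-up I would simply cite that bound rather than reprove it. Everything else (DNF-via-nondeterminism, the hardness reduction) is routine.
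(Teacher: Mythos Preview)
The paper does not give its own proof of this proposition; it is stated as a known result with a citation to \citet{borosh1976bounds} and no further argument. Your proof sketch is the standard one and is correct: nondeterministically selecting a conjunction of atoms (rather than materializing the DNF), invoking the small-solution bound of \citet{borosh1976bounds} for the resulting linear system, and then guessing and verifying a polynomial-size integer witness gives the $\NP$ upper bound; the 3SAT reduction via $0$--$1$ variables is a correct and routine hardness argument. One tiny remark: over $\Z$ you do not need slack variables for strict inequalities, since $a<b$ is equivalent to $a+1\le b$; this simplifies the upper-bound argument slightly but does not affect correctness.
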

To admit quantifier elimination, one has to enrich the structure $\ZLin$ by modulo constraints.
A modulo constraint is a binary predicate $\equiv_e$ with $e > 0$ such that $s \equiv_e t$ is fulfilled if and only if $e | s-t$.
Note that modulo constraints are definable in $\ZLin$ using existence quantifiers,
which means that $\langle \Z;+,<,0,1,(\equiv_e)_{e > 0} \rangle$ is still a structure for LIA. The following was famously shown by \citet{presburger1929} (see \cite{weispfenning1997} for complexity considerations):
\begin{proposition}\label{prop:qe-presburger}
LIA with the structure $\langle \Z;+,<,0,1,(\equiv_e)_{e > 0} \rangle$ admits quantifier elimination.
\end{proposition}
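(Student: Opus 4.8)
The plan is to prove the classical result via Cooper's quantifier-elimination procedure, eliminating one quantifier at a time. Since $\forall y\,\chi \equiv \neg\exists y\,\neg\chi$ and the language is closed under negation and conjunction, it suffices to show that for any quantifier-free $\varphi(y,\bm z)$ over $\langle\Z;+,<,0,1,(\equiv_e)_{e>0}\rangle$ one can compute a quantifier-free $\psi(\bm z)$ with $\psi\equiv\exists y\,\varphi$; the general statement then follows by induction on the number of quantifiers, always eliminating an innermost one. First I would put $\varphi$ in disjunctive normal form and use that $\exists y$ distributes over $\vee$, reducing to the case where $\varphi$ is a conjunction of literals. Each literal can be rewritten as a \emph{positive} atom: $\neg(s<t)$ becomes $t<s+1$, $\neg(s=t)$ becomes $s<t\vee t<s$ (then redistribute the disjunction), and $\neg(s\equiv_e t)$ becomes $\bigvee_{i=1}^{e-1}s\equiv_e t+i$. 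So we may assume every literal is of the form $s<t$, $s=t$, or $s\equiv_e t$ with $s,t$ linear terms.

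Next I would normalize the coefficient of $y$. Let $m$ be the least common multiple of the absolute values of the coefficients of $y$ occurring in $\varphi$; multiplying each literal through by a suitable positive integer (using $s<t\iff cs<ct$ and $s\equiv_e t\iff cs\equiv_{ce}ct$ for $c>0$) we may assume $y$ occurs in each literal with coefficient exactly $m$ or $0$. Replacing $my$ by a fresh variable and conjoining $y\equiv_m 0$, we reduce to the situation where $y$ occurs with coefficient $1$ or not at all. The literals involving $y$ now split into lower bounds $b_i<y$, upper bounds $y<c_j$, equalities $y=t_k$, and divisibilities $y\equiv_{e_\ell}t_\ell$; the remaining literals form a subformula $\varphi_0(\bm z)$. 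If some equality $y=t_k$ is present, substituting $t_k$ for $y$ throughout already yields an equivalent quantifier-free formula, and we are done.

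The heart of the argument is the case with no equality. Let $\delta$ be the least common multiple of the moduli $e_\ell$ (or $\delta=1$ if there are none), let $\varphi'$ be the conjunction of the $y$-literals, and let $\varphi_{-\infty}$ be the \emph{$-\infty$ form} obtained from $\varphi'$ by replacing each atom $b_i<y$ by $\bot$, each $y<c_j$ by $\top$, and leaving divisibilities untouched; note $\varphi_{-\infty}$ is periodic with period $\delta$. The key observation is a dichotomy: if $\varphi'$ has a solution at all, then either it has solutions of arbitrarily small value — captured by $\bigvee_{j=1}^{\delta}\varphi_{-\infty}(j)$ — or it has a least solution, which must lie in $\{\,b_i+j : 1\le j\le\delta\,\}$ for some lower bound $b_i$. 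Hence I would output
\[
 \psi(\bm z)\ :=\ \varphi_0(\bm z)\ \wedge\ \Bigl(\ \bigvee_{j=1}^{\delta}\varphi_{-\infty}(j)\ \vee\ \bigvee_{i}\bigvee_{j=1}^{\delta}\varphi'(b_i+j)\ \Bigr),
\]
which is quantifier-free since every divisibility atom that appears uses a modulus dividing $\delta$ (or equal to $m$). Correctness follows by a descent argument: starting from a solution $y_0$, repeatedly subtract $\delta$; by periodicity of the divisibilities one either stays in the solution set forever (the $\varphi_{-\infty}$ case) or eventually crosses some lower bound $b_i$, and then the last solution before crossing lies in $b_i+\{1,\dots,\delta\}$.

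I expect the main obstacle — more precisely, the step requiring genuine care rather than bookkeeping — to be exactly this periodicity/descent argument, together with the handling of strict versus non-strict bounds and the interplay of the divisibility constraints with the ordering bounds: one must verify that shifting by $\delta$ preserves all divisibility atoms simultaneously and that the finitely many test points $b_i+j$ really exhaust the residues modulo every $e_\ell$. The disjunctive-normal-form and coefficient-normalization steps are routine, but they must be carried out first, since the whole argument presupposes that $y$ appears with coefficient $1$. (If one additionally tracks the growth of $\delta$, $m$, and the size of $\psi$ through the induction, one obtains the complexity bounds of \citet{weispfenning1997}, but this is not needed for the statement as given.)
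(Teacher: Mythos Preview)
Your proof is correct: this is the standard Cooper-style quantifier-elimination argument and all the steps you outline are sound. Note, however, that the paper does not give its own proof of this proposition at all---it is stated as a classical background result with a citation to \citet{presburger1929} (and to \cite{weispfenning1997} for complexity). So there is nothing to compare against; your sketch simply fills in a well-known proof that the paper takes for granted.
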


\subsection*{Linear Real Arithmetic}
In addition to the integers, we also consider linear arithmetic over the reals.
\emph{Linear Real Arithmetic} (LRA) is defined as the first-order theory with the structure $\RLin$.
As for LIA, the satisfiability problem for the existential fragment of LRA is $\NP$-complete~\cite[Corollary~3.4]{sontagRealAdditionPolynomial1985}:
\begin{proposition}\label{prop:sat-reals}
Satisfiability of existential formulas in LRA is $\NP$-complete.
\end{proposition}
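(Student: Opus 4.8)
The plan is to prove the two directions separately, and only the second is real work. For $\NP$-hardness I would reduce from propositional satisfiability. Given a CNF formula over propositional variables $p_1,\dots,p_n$, introduce real variables $x_1,\dots,x_n$, add the conjunct $\bigwedge_{i=1}^{n}(x_i = 0 \vee x_i = 1)$ to force each $x_i$ to be Boolean-valued, and replace each clause $\bigvee_j \ell_j$ by the disjunction $\bigvee_j e_j$, where $e_j$ is the atom $x_i = 1$ if $\ell_j = p_i$ and $x_i = 0$ if $\ell_j = \neg p_i$. The resulting existential LRA formula is satisfiable over $\R$ iff the original CNF is, and it is produced in polynomial time; this already uses only a Boolean combination of linear equalities.

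For $\NP$-membership, let $\exists\bm x\colon\varphi(\bm x,\bm z)$ be an existential LRA formula and let $A$ be the set of atomic subformulas of $\varphi$. Each atom is a linear (in)equality with integer coefficients whose bit-size is bounded by the input size, and crucially $|A|$ is at most the input size. The first step is to \emph{nondeterministically guess a truth assignment} $\tau\colon A\to\{0,1\}$ and to verify in polynomial time that $\varphi$ evaluates to true under $\tau$. This reduces the question to feasibility over $\R$ of the system $\Psi_\tau$ obtained by collecting, for each $\alpha\in A$, the atom $\alpha$ itself when $\tau(\alpha)=1$ and the negation of $\alpha$ when $\tau(\alpha)=0$; the negation of a non-strict inequality is a strict one and vice versa, and the negation of an equality $s=t$ is $s\neq t$, for which I would additionally guess one of the two strict inequalities $s<t$, $s>t$ — all of this adds only linearly many nondeterministic bits. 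Then $\exists\bm x\colon\varphi$ is satisfiable iff $\Psi_\tau$ is feasible for some such guess.

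It remains to decide feasibility over $\R$ of a system $\Psi$ of strict and non-strict linear inequalities. This is in polynomial time (Khachiyan's algorithm, after the standard trick of removing strictness by adjoining a slack variable to be maximized), so the whole procedure runs in nondeterministic polynomial time. To keep the argument self-contained one can instead invoke the small-model property of polyhedra: if $\Psi$ is feasible it has a rational solution whose entries have bit-size polynomial in the size of $\Psi$ (bound the coordinates of a vertex of a suitable minimal face via Cramer's rule and the Hadamard inequality), so one may simply guess such a solution and verify it directly in $\NP$.

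I expect the $\NP$-membership direction to be the only obstacle, and its crux is the elementary but essential observation that the number of atoms is \emph{polynomial} in the input, so guessing their truth values is affordable, combined with the classical fact that feasibility of a system of linear inequalities over $\R$ is polynomial-time decidable (equivalently, admits polynomial-size witnesses). The one mild technical nuisance is the bookkeeping of strict versus non-strict inequalities when negating atoms, which costs only linearly many additional guesses and does not affect the complexity.
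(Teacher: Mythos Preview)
Your proof is correct and follows the standard argument for this classical fact. The paper does not actually prove this proposition: it simply cites \cite[Corollary~3.4]{sontagRealAdditionPolynomial1985} and moves on. So there is nothing to compare against in terms of approach.

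Your $\NP$-hardness reduction from SAT is the expected one, and your $\NP$-membership argument---guess a truth assignment to the atoms, handle negated equalities by an additional one-bit guess, and then appeal either to polynomial-time LP feasibility or to a polynomial-size rational witness---is exactly the textbook route. The only cosmetic remark is that since satisfiability of $\exists\bm x\colon\varphi(\bm x,\bm z)$ is just satisfiability of the quantifier-free matrix $\varphi$ over all variables, you need not distinguish bound from free variables in the argument; but you already implicitly treat them uniformly.
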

Moreover, quantifiers can already be eliminated over the structure $\RLin$. This goes back to \citet{fourier1826} and was rediscovered several times thereafter~\cite{williams1986fourier}:
\begin{proposition}\label{prop:qe-reals}
LRA with the structure $\RLin$ admits quantifier elimination.
\end{proposition}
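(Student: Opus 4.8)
The statement to prove is Proposition~\ref{prop:qe-reals}: LRA with the structure $\RLin$ admits quantifier elimination. This is a classical result (Fourier–Motzkin elimination), so my proof proposal is essentially a plan to present the standard argument cleanly.

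\medskip

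\textbf{Proof proposal.} The plan is to show that a single existential quantifier can be eliminated from a formula in disjunctive normal form, since eliminating one quantifier suffices to eliminate all (by induction on the number of quantifiers, pushing negations inward and replacing $\forall$ by $\neg\exists\neg$). First I would reduce to the case of a formula $\exists x\colon \varphi(x, \bm z)$ where $\varphi$ is a conjunction of literals; a general quantifier-free $\varphi$ can be put in DNF, and $\exists x$ distributes over the disjunction, so it is enough to handle one disjunct. Each literal in $\varphi$ is (equivalent to) an atom of the form $t \le 0$, $t < 0$, or $t = 0$, where $t$ is a linear term with rational (hence, after clearing denominators, integer) coefficients; negations of such atoms are again of this form ($\neg(t \le 0)$ is $-t < 0$, etc.), so the normalization introduces no new quantifiers and only a linear blow-up.

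\medskip

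Next I would isolate $x$ in each literal in which it occurs: depending on the sign of its coefficient, a literal involving $x$ becomes one of $x \le u$, $x < u$, $x \ge \ell$, $x > \ell$, or $x = e$, where $\ell, u, e$ are linear terms over $\bm z$ only. If some literal gives an equality $x = e$, substitute $e$ for $x$ in all remaining literals and drop the quantifier. Otherwise, partition the $x$-literals into lower bounds $L = \{\ell_i \mathbin{(\le\text{ or }<)} x\}$ and upper bounds $U = \{x \mathbin{(\le\text{ or }<)} u_j\}$, and let $\psi_0(\bm z)$ be the conjunction of the literals not mentioning $x$. Then $\exists x\colon \varphi$ is equivalent to
\[
\psi_0(\bm z) \;\wedge\; \bigwedge_{\ell_i \in L,\ u_j \in U} \big(\ell_i \mathbin{\bowtie_{ij}} u_j\big),
\]
where $\bowtie_{ij}$ is $<$ if either of the two bounds being combined is strict, and $\le$ otherwise; if $L$ or $U$ is empty the corresponding conjunction is empty (true), reflecting that an unbounded-below or unbounded-above interval of reals is always nonempty. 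The correctness of this step is the density/no-endpoints property of $\R$: a finite intersection of intervals with rational endpoints is nonempty iff every lower endpoint lies below (or at, respecting strictness) every upper endpoint. I would verify both directions of this equivalence explicitly, as it is the only place the specific structure of $\RLin$ (as opposed to, say, $\Z$) is used.

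\medskip

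The main obstacle is essentially bookkeeping rather than conceptual: getting the strict-versus-nonstrict combination rule exactly right, handling the degenerate cases (empty $L$ or $U$, presence of an equality literal, literals where $x$ does not appear), and confirming that the resulting formula is again in the quantifier-free language of $\RLin$ with only integer/rational constants — which it is, since $\ell_i \bowtie u_j$ is a linear (in)equality over $\bm z$. No genuine difficulty arises beyond this; the termination of the overall procedure is immediate since each elimination step strictly decreases the number of quantifiers. I would close by remarking that this is exactly Fourier–Motzkin elimination, citing \citet{fourier1826,williams1986fourier} as in the statement.
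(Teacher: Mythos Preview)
Your proposal is correct and is exactly the classical Fourier--Motzkin argument. The paper itself does not give a proof of this proposition at all: it simply states the result and attributes it to \citet{fourier1826} (with the historical note citing \cite{williams1986fourier}), so there is nothing to compare against beyond observing that your sketch is the standard proof behind those citations.
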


\subsection*{Linear Integer Real Arithmetic}
We define \emph{Linear Integer Real Arithmetic} (LIRA) as the first-order theory with the structure
$\langle \R;\lfloor . \rfloor,+,<,0,1 \rangle$
where $\lfloor r \rfloor$ denotes the greatest integer smaller than or equal to $r \in \R$.
In terms of full first-order logic, this logic is equally expressive as the first-order logic over the structure $\langle\R;\Z,+,<,0,1\rangle$. Here, we focus on $\langle \R;\lfloor . \rfloor,+,<,0,1 \rangle$, because its existential fragment is expressively complete~\cite[Theorem~3.1]{Weispfenning99}.
Note that by using $x = \lfloor x \rfloor$, we can extend LIRA to allow two sorts of variables: real and integer variables.
For a vector $\bm x = (x_1,\dots,x_n)$ of variables let $\bm x^\mathrm{i/r}$ denote the vector $(\bm x^\mathrm{int},\bm x^\mathrm{real})$
where $\bm x^\mathrm{int} = (x_1^\mathrm{int},\dots,x_n^\mathrm{int})$ is a vector of integer variables and
$\bm x^\mathrm{real} = (x_1^\mathrm{real},\dots,x_n^\mathrm{real})$ is a vector of real variables.
Two vectors $\bm{x}$ and $\bm{y}$ of dimension $n$ are said to have the same \emph{type} 
if for all $i \in [1,n]$ we have that $x_i$ and $y_i$ are both real or integer variables. 
The \emph{separation} of an existential formula $\exists x_1,\dots,x_n \colon \varphi(x_1,\dots,x_n,z_1,\dots,z_m)$ in LIRA is defined as
\begin{align*}
\exists \bm x^\mathrm{i/r} \colon
\varphi(\bm x^\mathrm{int}+ \bm x^\mathrm{real}, \bm z^\mathrm{int} + \bm z^\mathrm{real}) \wedge 
0 \le \bm x^\mathrm{real} < 1 \wedge 0 \le \bm z^\mathrm{real} < 1
\end{align*}
where $x_i^\mathrm{int}$, $z_j^\mathrm{int}$ are fresh integer variables and $x_i^\mathrm{real}$, $z_j^\mathrm{real}$ are fresh real variables
that express the integer and real part of $x_i$ and $z_j$.
If $x_i$ (resp. $z_j$) is an integer variable, we add the constraint $x_i^\mathrm{real} = 0$ (resp. $z_j^\mathrm{real} = 0$) to the separation.
We say that an existential formula in LIRA is \emph{decomposable} if its separation can be written as an existentially quantified Boolean combination of Presburger and LRA formulas (called \emph{decomposition}).
\begin{lemma}\label{lem:decomposition}
Every existential formula in LIRA is decomposable. Moreover, its decomposition is of linear size and can be computed in polynomial time.
\end{lemma}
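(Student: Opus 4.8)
The plan is to establish \Cref{lem:decomposition} by showing that the separation of an existential LIRA formula can be rewritten, term by term, so that every atom only talks about integer parts or only about fractional parts, and then pushing the resulting Boolean combination outward (and the existential quantifiers inward over the disjunctions where needed). After separation, every variable $x_i$ has been split as $x_i^{\mathrm{int}} + x_i^{\mathrm{real}}$ with $0 \le x_i^{\mathrm{real}} < 1$, and similarly for the free variables $z_j$; the floor function has disappeared from the outermost layer, but nested floors inside $\varphi$ still need handling. So the first step is to eliminate nested floor terms: repeatedly replace an innermost subterm $\lfloor t \rfloor$ (where $t$ is a linear combination of variables) by a fresh integer variable $q$ together with the constraint $q \le t < q+1$, guarded by an existential quantifier. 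Since $\varphi$ is quantifier-free and linear, this terminates after linearly many steps and produces an existential formula all of whose atoms are linear (in)equalities over the combined variables $\bm x^{\mathrm{int}}, \bm x^{\mathrm{real}}, \bm z^{\mathrm{int}}, \bm z^{\mathrm{real}}$ plus the fresh variables, with no floors at all.

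The second step is the arithmetic heart: take a single linear atom $\sum_i c_i x_i^{\mathrm{int}} + \sum_i c_i x_i^{\mathrm{real}} + (\text{analogous }z\text{-terms}) \bowtie 0$ with $\bowtie \in \{<, \le, =\}$ and integer coefficients $c_i$, and split it into an integer part $I$ (the sum of the integer-variable terms, an integer) and a fractional part $F$ (the sum of the $c_i$-weighted fractional variables, a real lying in a bounded interval determined by the $c_i$). Since $F$ ranges over a bounded interval whose endpoints are integers (namely $[\sum_{c_i<0} c_i, \sum_{c_i>0} c_i]$), the truth of $I + F \bowtie 0$ can be decided by a finite case split on which integer interval $[n, n+1)$ the value $F$ falls into: for each such $n$ the constraint on $I$ becomes a pure-integer (Presburger) constraint, and the constraint "$F \in [n,n+1)$" is a pure-LRA constraint on $\bm x^{\mathrm{real}}, \bm z^{\mathrm{real}}$. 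Thus each atom becomes a disjunction, over the $O(\sum_i |c_i|)$ many relevant integers $n$, of a conjunction (Presburger atom) $\wedge$ (LRA atom). Crucially, for this to stay within linear size we must use that coefficients are written in binary, so $\sum_i |c_i|$ can be exponential — so rather than literally enumerating all $n$, we introduce one fresh integer variable $n$ ("the integer part of $F$") bound existentially, add the LRA constraint $n \le F < n+1$, and add the Presburger constraint $I + n \bowtie' 0$ where $\bowtie'$ is the appropriately adjusted relation (this is the standard trick; a single extra quantified variable replaces the disjunction). Collecting all atoms, the whole formula is now an existentially quantified Boolean combination in which each leaf is a Presburger atom or an LRA atom, and the fresh "integer part" variables are shared only through equalities/inequalities that are themselves split the same way.

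The third step is bookkeeping: verify that the construction is sound (the rewriting preserves the set of satisfying assignments to the free variables of the original separation, using $0 \le x_i^{\mathrm{real}} < 1$ and the defining property of floor) and that it meets the stated complexity. Each of the linearly many original atoms and each of the linearly many floor-elimination steps contributes a constant number of fresh variables and a constant-size gadget (with binary-encoded constants, which by the paper's size convention only cost linear blow-up), so the total formula has linear size and is clearly computable in polynomial time. I expect the main obstacle to be exactly the linear-size requirement: the naive "enumerate the integer part" approach blows up exponentially in the coefficient bit-length, so the proof must be organized around the single-fresh-variable encoding of the integer part of each fractional sum, and one has to check carefully that reusing/sharing these fresh variables across atoms (and across the nested-floor eliminations) does not force any quadratic interaction. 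A secondary subtlety is making sure the mixed Presburger/LRA leaves are genuinely "pure" — i.e. that no leaf secretly mentions both an integer-part and a real-part variable — which is why the split must be done per atom before the Boolean combination is reassembled.
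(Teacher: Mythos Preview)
Your approach has a genuine gap in the second step. You introduce a fresh integer variable $n$ for the integer part of the fractional sum $F = \sum_i c_i x_i^{\mathrm{real}}$ and then write the constraint $n \le F < n+1$, calling it an LRA constraint. But this atom contains both the integer variable $n$ and the real variables $x_i^{\mathrm{real}}$, so it is \emph{not} pure LRA; it is exactly the kind of mixed atom the decomposition is supposed to eliminate. Symmetrically, if you make $n$ a real variable then $I + n \bowtie' 0$ is no longer pure Presburger. The single-fresh-variable trick that you invoke to avoid the exponential enumeration therefore does not achieve a decomposition at all. You flag this concern yourself in the last paragraph (``making sure the mixed Presburger/LRA leaves are genuinely pure''), but the proposal does not resolve it.

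The paper sidesteps the issue entirely by a different normalization: before separating into integer and fractional parts, it introduces fresh existentially quantified variables so that every atom has one of the five shapes $x=0$, $x=1$, $x+y=z$, $x<0$, $x=\lfloor y\rfloor$, i.e.\ at most three variables with unit coefficients. After separation, the fractional contribution in any atom is then a sum of at most two reals in $[0,1)$, hence lies in $[0,2)$, and a \emph{constant-size} case split (carry $0$ or $1$) suffices to decouple the integer and real parts. This is what keeps the decomposition linear in size; your per-atom treatment of arbitrary linear combinations with arbitrary coefficients cannot be made to work without either the exponential enumeration you rightly reject or the mixed constraint you inadvertently introduce.
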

\begin{proof}
Let $\psi = \exists x_1,\dots,x_n \colon \varphi(x_1,\dots,x_n,z_1,\dots,z_m)$ be an existential formula in LIRA.
By introducing new existentially quantified variables, we can assume that every atom of $\varphi$ is of one of the following forms:
(i) $x = 0$, (ii) $x = 1$, (iii) $x+y=z$, (iv) $x < 0$, (v) $x = \lfloor y \rfloor$. 
Note that the size of the formula is still linear (even if the coefficients are given in binary).
Let $\varphi'(\bm x^\mathrm{i/r},\bm z^\mathrm{i/r})$ be the formula obtained from 
$\varphi(\bm x^\mathrm{int}+ \bm x^\mathrm{real}, \bm z^\mathrm{int} + \bm z^\mathrm{real})$ by replacing every
\begin{itemize}
\item $x^\mathrm{int} + x^\mathrm{real} = 0$ by $x^\mathrm{int} = 0 \wedge x^\mathrm{real} = 0$,
\item $x^\mathrm{int} + x^\mathrm{real} = 1$ by $x^\mathrm{int} = 1 \wedge x^\mathrm{real} = 0$,
\item $x^\mathrm{int}+x^\mathrm{real}+y^\mathrm{int}+y^\mathrm{real} = z^\mathrm{int}+z^\mathrm{real}$ by
\begin{align*}
& (x^\mathrm{real}+y^\mathrm{real}<1 \to x^\mathrm{int}+y^\mathrm{int} = z^\mathrm{int} \wedge x^\mathrm{real}+y^\mathrm{real}=z^\mathrm{real})\ \wedge\\
& (x^\mathrm{real}+y^\mathrm{real} \ge 1 \to x^\mathrm{int}+y^\mathrm{int}+1 = z^\mathrm{int} \wedge x^\mathrm{real}+y^\mathrm{real}-1=z^\mathrm{real}),
\end{align*}
\item $x^\mathrm{int}+x^\mathrm{real}<0$ by $x^\mathrm{int} < 0$, and
\item $x^\mathrm{int}+x^\mathrm{real} = \lfloor y^\mathrm{int}+y^\mathrm{real} \rfloor$ by $x^\mathrm{real}=0 \wedge x^\mathrm{int} = y^\mathrm{int}$.
\end{itemize}
Thus, $\varphi'$ is a Boolean combination of formulas that either only involve integer variables or real variables.
Now the separation of $\psi$ is equivalent to
\[
\exists \bm x^\mathrm{i/r} \colon \varphi'(\bm x^\mathrm{i/r},\bm z^\mathrm{i/r})
\wedge 0 \le \bm x^\mathrm{real} < 1 \wedge 0 \le \bm z^\mathrm{real} < 1
\]
where we add $x_i^\mathrm{real} = 0$ if $x_i$ is an integer variable and $z_j^\mathrm{real} = 0$ if $z_j$ is an integer variable,
which is a linear sized decomposition.
\end{proof}

\begin{proposition}\label{prop:sat-mixed}
Satisfiability of existential formulas in LIRA is $\NP$-complete.
\end{proposition}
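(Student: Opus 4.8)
The plan is to derive \Cref{prop:sat-mixed} from \Cref{lem:decomposition} together with the $\NP$-completeness of the existential fragments of LIA and LRA (\Cref{prop:sat-presburger,prop:sat-reals}). For $\NP$-hardness there is nothing to do: every existential LRA formula is literally an existential LIRA formula, since the signature of $\RLin$ is contained in that of the LIRA structure $\langle\R;\lfloor\cdot\rfloor,+,<,0,1\rangle$ and the two structures agree on the shared symbols, so a formula is LRA-satisfiable iff it is LIRA-satisfiable; hardness thus transfers from \Cref{prop:sat-reals}.

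For membership in $\NP$, I would first assume without loss of generality that the input is a sentence (existentially closing any free variables), and then apply \Cref{lem:decomposition} to compute, in polynomial time, a decomposition $\exists\bm x^{\mathrm{i/r}}\colon\chi$ in which $\chi$ is a Boolean combination of atoms, each atom being either a pure-integer atom over the variables $\bm x^{\mathrm{int}}$ or a pure-real atom over the disjoint variables $\bm x^{\mathrm{real}}$ (the bound atoms $0\le\bm x^{\mathrm{real}}<1$ and the atoms $x_i^{\mathrm{real}}=0$ also counting as pure-real). The key point, also guaranteed by \Cref{lem:decomposition}, is that $\chi$ has only \emph{linearly many} atoms. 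The $\NP$ algorithm then nondeterministically guesses a truth value $\tau(a)\in\{\top,\bot\}$ for every atom $a$ of $\chi$, checks in polynomial time that $\tau$ satisfies the propositional skeleton of $\chi$, and forms the integer conjunction $I$ of all literals $a^{\tau(a)}$ over integer atoms and the real conjunction $J$ of all literals $a^{\tau(a)}$ over real atoms (writing $a^{\top}=a$ and $a^{\bot}=\neg a$). Because $I$ and $J$ involve disjoint sets of variables and all variables are existentially quantified, the guessed assignment is realizable iff $\exists\bm x^{\mathrm{int}}\colon I$ is satisfiable over $\Z$ and $\exists\bm x^{\mathrm{real}}\colon J$ is satisfiable over $\R$; and $\exists\bm x^{\mathrm{i/r}}\colon\chi$ is satisfiable iff some such $\tau$ is both skeleton-satisfying and realizable. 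Since $I$ and $J$ are quantifier-free (hence existential) formulas in LIA and LRA respectively, \Cref{prop:sat-presburger,prop:sat-reals} give them satisfiability witnesses of polynomial size that are verifiable in polynomial time, so guessing $\tau$ together with these two witnesses and verifying everything is a genuine $\NP$ computation.

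The only step that requires genuine care is the reduction of satisfiability of the Boolean combination $\chi$ to the independent satisfiability of its integer and real parts: this is exactly where it matters that the decomposition isolates the two sorts into separate atoms, and that there are only polynomially many atoms so that the chosen truth assignment $\tau$ is a legitimate polynomial-size certificate (otherwise passing to DNF would blow up exponentially). Everything else — the polynomial-time preprocessing, the linear bound on the number of atoms, and the polynomial witness bounds for the two sub-theories — is imported directly from the stated results, so I expect the remainder of the argument to be routine bookkeeping.
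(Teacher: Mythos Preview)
Your proposal is correct and follows essentially the same approach as the paper: both arguments obtain hardness from a sublogic (you use LRA, the paper cites both LIA and LRA), and both obtain membership by invoking \Cref{lem:decomposition}, guessing truth values for the pure-integer and pure-real atoms of the decomposition, and discharging the two resulting conjunctions via \Cref{prop:sat-presburger,prop:sat-reals}. Your write-up is somewhat more explicit about the bookkeeping (existential closure, forming the conjunctions $I$ and $J$, disjointness of the variable sets), but the underlying idea is identical.
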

\begin{proof}
The $\NP$ lower bound is inherited from the Presburger (\Cref{prop:sat-presburger}) and LRA (\Cref{prop:sat-reals}) case.
For the upper bound let $\varphi$ be an existential formula in LIRA.
We first apply \Cref{lem:decomposition} to compute a decomposition $\psi$ of $\varphi$ in polynomial time.
Then we guess truth values for the Presburger and LRA subformulas of $\psi$ and verify the guesses in $\NP$ using \Cref{prop:sat-presburger,prop:sat-reals}.
Since $\varphi$ and its decomposition $\psi$ are equisatisfiable, 
it remains to check whether the truth values satisfy $\psi$ in order to decide satisfiability of $\varphi$.
\end{proof}

\subsection*{Ramsey quantifier}
Let $\bm x$ and $\bm y$ be two vectors of variables of the same type.
For a formula $\varphi$ in LIRA we define $\ram \bm{x},\bm{y} \colon \varphi(\bm{x},\bm{y},\bm{z})$ 
as the formula that is satisfied by a valuation $\bm c$ of $\bm z$ if and only if
there exists a sequence $(\bm a_i)_{i \ge 1}$ of pairwise distinct valuations of $\bm x$ (and $\bm y$) such that
$\varphi(\bm a_i, \bm a_j, \bm c)$ holds for all $i < j$.
The sequence $(\bm a_i)_{i \ge 1}$ with the above properties is called an \emph{infinite clique} of $\varphi$ with respect to $\bm c$.
The \emph{infinite clique problem} asks given a formula $\varphi(\bm x, \bm y)$, where $\bm x$ and $\bm y$ have the same type,
whether $\varphi$ has an infinite clique.

The infinite version of Ramsey's theorem can be formulated over graphs as follows:
\begin{theorem}[\cite{Ramsey30}]
Any complete infinite graph whose edges are colored with finitely many colors contains an infinite monochromatic clique.
\end{theorem}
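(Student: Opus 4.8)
The plan is to use the classical iterated-pigeonhole argument. Fix an enumeration $V = \{v_1, v_2, \ldots\}$ of a countably infinite subset of the vertex set (any infinite complete graph contains one, and a monochromatic clique found there suffices), and suppose the edges are colored with colors from a finite set $C$. First I would construct, by recursion, an infinite sequence of vertices $a_1, a_2, \ldots$ together with a nested sequence $V = W_0 \supseteq W_1 \supseteq W_2 \supseteq \cdots$ of infinite vertex sets and a sequence of colors $c_1, c_2, \ldots \in C$ satisfying the invariant: $a_i \in W_{i-1}$, the set $W_i \subseteq W_{i-1} \setminus \{a_i\}$ is infinite, and every edge $\{a_i, w\}$ with $w \in W_i$ has color $c_i$. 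The recursion step is immediate: given the infinite set $W_{i-1}$, pick any $a_i \in W_{i-1}$; the edges from $a_i$ to the still-infinite set $W_{i-1} \setminus \{a_i\}$ partition it into finitely many color classes, so by the pigeonhole principle at least one class is infinite — call it $W_i$ and let $c_i$ be its color.

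Next I would observe that this construction already yields a sequence that is ``monochromatic from the left'': for any $i < j$ we have $a_j \in W_{j-1} \subseteq W_i$, so by the invariant the edge $\{a_i, a_j\}$ has color $c_i$. Finally, since $C$ is finite, some color $c$ occurs infinitely often in the sequence $c_1, c_2, \ldots$; write $c = c_{i_1} = c_{i_2} = \cdots$ with $i_1 < i_2 < \cdots$. Then $\{a_{i_1}, a_{i_2}, \ldots\}$ is the desired infinite monochromatic clique: for $m < n$ the edge $\{a_{i_m}, a_{i_n}\}$ has color $c_{i_m} = c$.

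The only point requiring a word of care is the recursion itself, which is an instance of dependent choice (each step depends on the infinite set produced by the previous one); this is unproblematic in the ambient set theory. Beyond that there is no real obstacle: the whole argument rests on the single idea of ``peel off one vertex, retain an infinite neighborhood of uniform color, and repeat,'' followed by one additional pigeonhole over the finitely many colors that can appear among the $c_i$. I note in passing that the same construction also proves the directed variant used in this paper — on the infinite set $\{a_{i_1}, a_{i_2}, \ldots\}$ all edges directed from an earlier to a later vertex receive the same color — which is exactly the form needed when $\varphi(\bm{x},\bm{y})$ is not symmetric.
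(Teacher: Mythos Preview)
Your argument is the standard and correct iterated-pigeonhole proof of the infinite Ramsey theorem. Note, however, that the paper does not actually prove this statement: it is quoted as a classical result (with a citation to Ramsey's 1930 paper) and used as a black box throughout, so there is no ``paper's own proof'' to compare against. Your proof is a perfectly fine way to fill in this omitted background result, and your closing remark about the directed variant is apt, since that is indeed the form the paper invokes repeatedly.
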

We will often use the fact that by Ramsey's theorem $\ram \bm{x},\bm{y} \colon \varphi(\bm{x},\bm{y},\bm{z}) \vee \psi(\bm{x},\bm{y},\bm{z})$ is equivalent to
$(\ram \bm{x},\bm{y} \colon \varphi(\bm{x},\bm{y},\bm{z})) \vee (\ram \bm{x},\bm{y} \colon \psi(\bm{x},\bm{y},\bm{z}))$.

\section{Eliminating existential quantifiers}\label{sec:qe}
The first step in our elimination of the Ramsey quantifier in $\ram \bm x,\bm
y\colon \psi(\bm x, \bm y, \bm z)$ is to reduce to the case where $\psi$ is
quantifier-free. In LIA and LRA, there are procedures to convert $\psi$ into a
quantifier-free equivalent (\cref{prop:qe-presburger,prop:qe-reals}), but these
incur a doubly exponential blow-up~\cite{weispfenning1997}.  Instead, we will
show the following (perhaps surprising) equivalence:
\begin{theorem}\label{thm:qe-mixed}
Let $\varphi$ be an existential formula in LIRA. Then the formulas 
\begin{align*}
\ram \bm{x},\bm{y}\colon\exists \bm{w}\colon \varphi(\bm{x},\bm{y},\bm{w},\bm{z}) 
&& \text{and} &&
\ram (\bm{x},\bm{v}_1,\bm{v}_2),(\bm{y},\bm{w}_1,\bm{w}_2)\colon \varphi(\bm{x},\bm{y},\bm{v}_1+\bm{w}_2,\bm{z}) \wedge \bm x \ne \bm y
\end{align*}
are equivalent where $\bm v_1, \bm v_2, \bm w_1, \bm w_2$ have the same type as $\bm w$.
\end{theorem}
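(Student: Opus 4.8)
The plan is to prove the two implications separately, the easy one being that the second formula implies the first. If there is an infinite clique $(\bm a_i, \bm c_i, \bm d_i)_{i\ge 1}$ for the formula $\varphi(\bm x, \bm y, \bm v_1 + \bm w_2, \bm z) \wedge \bm x \ne \bm y$, then for every $i < j$ we have $\varphi(\bm a_i, \bm a_j, \bm c_i + \bm d_j, \bm z)$ together with $\bm a_i \ne \bm a_j$; hence witnessing $\bm w := \bm c_i + \bm d_j$ shows $\exists \bm w\colon \varphi(\bm a_i, \bm a_j, \bm w, \bm z)$ for all $i < j$, and the $\bm a_i$ are pairwise distinct, so $(\bm a_i)_{i\ge 1}$ is an infinite clique for $\ram \bm x, \bm y\colon \exists \bm w\colon \varphi$. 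So far this only uses that $\bm a_i \ne \bm a_j$ guarantees the projected tuples stay pairwise distinct.

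The substance is the forward direction. Suppose $(\bm a_i)_{i\ge1}$ is an infinite clique for $\ram \bm x, \bm y\colon \exists \bm w\colon \varphi$, so for each pair $i < j$ there is a witness $\bm b_{i,j}$ with $\varphi(\bm a_i, \bm a_j, \bm b_{i,j}, \bm z)$. The goal is to pass to a subsequence and attach to each surviving node $\bm a_i$ two vectors $\bm c_i, \bm d_i$ (the values of $\bm v_1, \bm v_2$; note $\bm v_2$ only appears via $\bm w_1$ and is a free rider, so effectively we need $\bm c_i$ playing the role of $\bm v_1$ and $\bm d_i$ the role of $\bm w_2$) so that $\bm b_{i,j}$ can be replaced by $\bm c_i + \bm d_j$. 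The key structural idea I would use is a Ramsey-type colouring argument on the edge witnesses. First, since $\varphi$ is existential in LIRA, I would invoke \cref{lem:decomposition} to write (the separation of) $\varphi$ as a Boolean combination of a Presburger formula and an LRA formula over the integer/real parts; by pushing the Boolean combination to DNF and applying Ramsey's theorem (as recorded in the preliminaries, $\ram$ distributes over $\vee$), it suffices to treat a single conjunctive system — a conjunction of linear (in)equalities with integer and modulo constraints on some coordinates. So WLOG $\varphi$ defines a ``polyhedral + lattice'' relation, and the edge witnesses $\bm b_{i,j}$ live in the corresponding slices.

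For such a conjunctive $\varphi$ the point is that the set of valid witnesses $W_{i,j} = \{\bm w : \varphi(\bm a_i,\bm a_j,\bm w,\bm z)\}$ is, after fixing the fractional/residue pattern, a polyhedron (a translate of a fixed cone for the unbounded directions) that depends affinely on $(\bm a_i, \bm a_j)$. The plan is: (1) colour each edge $\{i,j\}$ by the residue/fractional pattern of a chosen witness $\bm b_{i,j}$ — finitely many colours — and pass to a monochromatic subclique by Ramsey; (2) on this subclique, colour each edge by which facets of $W_{i,j}$ are tight (again finitely many ``combinatorial types'' of vertex), pass to a further monochromatic subclique; (3) now the chosen witness is an explicit affine function $\bm b_{i,j} = M_1 \bm a_i + M_2 \bm a_j + \bm t$ of the two endpoints for fixed matrices $M_1, M_2$ and offset $\bm t$; then setting $\bm c_i := M_1 \bm a_i + \bm t$ (absorb the offset into the ``$\bm v_1$'' coordinate of node $i$) and $\bm d_j := M_2 \bm a_j$ gives $\bm b_{i,j} = \bm c_i + \bm d_j$ exactly, which is the desired form $\bm v_1 + \bm w_2$. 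Finally, pairwise distinctness of the augmented tuples $(\bm a_i, \bm c_i, \bm d_i)$ is free because already the $\bm a_i$ are pairwise distinct on the subclique, and this is exactly why the conjunct $\bm x \ne \bm y$ is added on the right: it is needed precisely so that the backward direction does not accidentally allow the $\bm a$-projections to coincide.

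The main obstacle I anticipate is step (3): ensuring that after finitely many Ramsey refinements one can choose the edge witness as a single affine function of the two endpoints, uniformly over all edges of the subclique. Two subtleties make this delicate. First, the unbounded directions: when $W_{i,j}$ is an unbounded polyhedron, ``the'' vertex may not exist, so one must first argue — possibly via another colouring on the recession cone, or by observing the recession cone of $W_{i,j}$ is independent of $i,j$ — that one can always pick the lexicographically minimal witness in a fixed pointed-cone decomposition, and that this minimal witness is piecewise-affine with finitely many pieces whose selection is again a finite colouring. Second, the integrality/modulo constraints couple the affine selection with a lattice: one may need the coefficients of the affine map to be chosen so that $M_1\bm a_i + M_2\bm a_j + \bm t$ lands in the right coset for all $i<j$, which should follow once the residue pattern has been fixed in step (1) but requires care that $M_1, M_2$ themselves are integer (or that the relevant coordinates are real). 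I would handle this by working coordinatewise after the separation, treating integer coordinates with Presburger/modulo reasoning and real coordinates with ordinary polyhedral reasoning, and only combining at the end.
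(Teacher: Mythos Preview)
Your overall architecture---decompose via \cref{lem:decomposition}, push to DNF, use Ramsey to fix a single conjunct, then argue that the edge witness can be chosen as an affine function of the two endpoints---is exactly the paper's strategy, and your backward direction is fine. The difference is in how step~(3) is executed, and the paper's execution dissolves precisely the obstacles you flag.

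The paper does not attempt to select a vertex of the multi-dimensional polytope $W_{i,j}$ all at once. Instead it eliminates the variables in $\bm w$ \emph{one at a time}: it proves the single-variable version (\cref{lem:qe-presburger-single} for LIA, \cref{lem:qe-real-single} for LRA) and then obtains the vector version by induction on $|\bm w|$. For a single variable the ``polytope'' is just an interval, so the witness selection is elementary: in LRA, after Ramsey-fixing which constraint is the greatest lower bound and which is the least upper bound, take the midpoint (or use an equality constraint if present); in LIA, take the least feasible integer above the greatest lower bound. This sidesteps your worries about unbounded polyhedra, lineality spaces, and lexicographic vertex selection entirely.

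Your genuine gap is in the integer case. Fixing the residue class of the \emph{witness} $\bm b_{i,j}$ is not enough: if the coefficient of $w$ in some inequality is $s\ne 1$, then the least integer $w$ with $sw>\bm r^\top\bm a_i$ is $\lfloor \bm r^\top\bm a_i/s\rfloor+1$, which is not affine in $\bm a_i$ no matter what residue you fix for $w$. The paper handles this with a \emph{$w$-simplification} step (scale every constraint so the coefficient of $w$ becomes the lcm $N$, substitute $w'=Nw$, and add $w'\equiv_N 0$). After simplification all coefficients of $w$ are $1$, so the feasible interval has integer endpoints affine in $(\bm a_i,\bm a_j)$; Ramsey then fixes a constant offset $r\in[1,N]$ and the witness $f(\bm a_i,\bm a_j)+r$ splits additively as required. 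Finally one divides back by $N$, using that $b_i+b'_j\equiv_N 0$ forces all $b_i$ to share a residue. Without this normalisation (or an equivalent device such as Ramsey-fixing the residues of the $\bm a_i$ themselves modulo the relevant coefficients), your step~(3) does not go through for integer coordinates.
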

Thus, if we write $\psi(\bm x,\bm y, \bm z)$ in prenex form as $\exists \bm
w\colon \varphi(\bm x,\bm y,\bm w,\bm z)$ with a quantifier-free $\varphi$,
then \Cref{thm:qe-mixed} allows us to eliminate the block $\exists\bm{w}$ of
quantifiers by moving $\bm{w}$ under the Ramsey quantifier.  Note that both
formulas express the existence of an infinite clique.  The left says that for
every edge $\bm{x}\to\bm{y}$ in the clique, we can choose a vector $\bm{w}$
such that $\varphi(\bm{x},\bm{y},\bm{w},\bm{z})$ is satisfied.  The right
formula says that $\bm{w}$ can be chosen in a specific way: It says that for
each node, one can choose two vectors ($\bm{w}_1$,$\bm{w}_2$) such that for
each edge $\bm{x}\to\bm{y}$, the vector $\bm{w}$ can be the sum of $\bm{w}_1$
for $\bm{x}$ and $\bm{w}_2$ for $\bm{y}$. Thus,  the right-hand formula clearly
implies the left-hand formula. The challenging direction is to show that the
left-hand formula implies the right-hand formula.

The rest of this section is devoted to proving \cref{thm:qe-mixed}.
\subsection*{Presburger arithmetic}
We start with LIA, a.k.a.\ Presburger arithmetic.
\begin{theorem}\label{thm:qe-presburger}
Let $\varphi$ be an existential formula in Presburger arithmetic. Then the formulas 
\begin{align*} &\ram \bm{x},\bm{y}\colon\exists \bm{w}\colon \varphi(\bm{x},\bm{y},\bm{w},\bm{z}) &&\text{and}~~~
&&\ram (\bm{x},\bm{v}_1,\bm{v}_2),(\bm{y},\bm{w}_1,\bm{w}_2)\colon \varphi(\bm{x},\bm{y},\bm{v}_1+\bm{w}_2,\bm{z})\wedge \bm{x}\ne\bm{y}\end{align*}
are equivalent.
\end{theorem}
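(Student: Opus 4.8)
The plan is to prove \cref{thm:qe-presburger} directly, exploiting the very simple structure of infinite cliques in Presburger arithmetic together with Ramsey's theorem to control the combinatorics. As noted after the statement, the direction ``$\Leftarrow$'' is immediate: any infinite $\varphi(\bm x,\bm y,\bm v_1+\bm w_2,\bm z)$-clique $(\bm a_i,\bm b_i,\bm c_i)_{i\ge1}$ projects to an infinite clique $(\bm a_i)_{i\ge1}$ of $\exists\bm w\colon\varphi$, witnessed on the edge $i<j$ by $\bm w:=\bm b_i+\bm c_j$ (and the distinctness of the $\bm a_i$ is guaranteed by the extra conjunct $\bm x\ne\bm y$). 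So the work is entirely in the ``$\Rightarrow$'' direction.

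For ``$\Rightarrow$'', suppose $(\bm a_i)_{i\ge1}$ is an infinite clique for $\exists\bm w\colon\varphi(\bm x,\bm y,\bm w,\bm z)$ with the free variables set to some valuation $\bm c$, so for every $i<j$ there is a witness $\bm b_{i,j}$ with $\varphi(\bm a_i,\bm a_j,\bm b_{i,j},\bm c)$. The goal is to produce, for each node $i$, two vectors $\bm p_i,\bm q_i$ so that $\bm b_{i,j}:=\bm p_i+\bm q_j$ still works — i.e.\ $\varphi(\bm a_i,\bm a_j,\bm p_i+\bm q_j,\bm c)$ holds for all $i<j$ — and then pass to a subsequence on which the triples $(\bm a_i,\bm p_i,\bm q_i)$ are pairwise distinct (which is easy since the $\bm a_i$ already are). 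The key structural idea: by \cref{lem:decomposition}-style atom normalization we may assume $\varphi$ is built from linear (in)equalities and modulo constraints on $\bm x,\bm y,\bm w$ (with $\bm z$ frozen to $\bm c$, these become constants). The Boolean structure of $\varphi$ means that whether $\varphi(\bm a_i,\bm a_j,\bm w,\bm c)$ holds depends only on the ``pattern'' of which atoms $\bm w$ satisfies; by Ramsey's theorem, restricting to a subsequence, we may assume \emph{all} edges $i<j$ use the same Boolean pattern, and moreover all $\bm a_i$ have the same residues modulo the (finitely many, bounded) moduli appearing in $\varphi$. After this reduction, the constraint ``there exists $\bm w$ with $\varphi(\bm a_i,\bm a_j,\bm w,\bm c)$'' becomes: $\bm b_{i,j}$ lies in a fixed polyhedron/residue-class cut out by linear inequalities whose right-hand sides are affine functions of $\bm a_i$ and $\bm a_j$. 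The crux is to show such a system, solvable for each edge, is solvable with a witness of the \emph{additively separated} form $\bm p_i+\bm q_j$.

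To get the separated witness I would argue as follows. Each surviving atom contributes an inequality of the shape $\bm\alpha\cdot\bm w \le \bm\beta\cdot\bm a_i + \bm\gamma\cdot\bm a_j + \delta$ (and similarly $\ge$, $\equiv_e$). Since the ``$\le$'' atoms form a consistent system for each $i<j$, we can pick for each \emph{pair} a vertex/witness; the standard move is to write the feasible witness via a bounded-solution or Farfell/projection argument so that the witness decomposes as $\bm w = f(\bm a_i) + g(\bm a_j) + (\text{correction})$ where $f,g$ are affine and the correction comes from a \emph{finite} set (the finitely many residue classes / vertices of the projection cone, independent of $i,j$). By another application of Ramsey's theorem we make the correction term constant along a subsequence, absorb the constant into $f$, and then set $\bm p_i := f(\bm a_i)$, $\bm q_j := g(\bm a_j)$. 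One checks that with this choice $\varphi(\bm a_i,\bm a_j,\bm p_i+\bm q_j,\bm c)$ holds for all $i<j$ in the subsequence, which yields an infinite $\ram(\bm x,\bm v_1,\bm v_2),(\bm y,\bm w_1,\bm w_2)$-clique with $\bm v_1\mapsto\bm p$, $\bm w_2\mapsto\bm q$ (the components $\bm v_2,\bm w_1$ are dummies, used only to help ensure the tuples are pairwise distinct, which is automatic here). I expect the main obstacle to be exactly this last step — showing that edge-indexed witnesses to a parametrized linear Diophantine system can be reorganized into a node-additive form; making the ``correction term is finitely many possibilities'' claim precise, uniformly over all pairs, is where the real content lies, and it is where Ramsey's theorem does the heavy lifting of trading an edge-coloring for a vertex-indexed structure. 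A secondary subtlety is handling modulo constraints cleanly (fix residues of the $\bm a_i$ early via Ramsey so that $\equiv_e$ atoms become constant cosets), and ensuring all invocations of Ramsey are over finitely many colors so the subsequence extractions compose.
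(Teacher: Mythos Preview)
Your overall strategy---Ramsey to fix the Boolean pattern, then find an additively separated witness $\bm p_i+\bm q_j$---is the right one, and you correctly identify the crux. But the step you flag as the ``main obstacle'' is a genuine gap, and your sketched resolution (``vertex/bounded-solution argument \ldots\ correction from a finite set'') does not obviously go through over $\Z$ for a vector $\bm w$. A vertex of the parametrized polyhedron $\{A\bm w\le \bm b(\bm a_i,\bm a_j)\}$ is of the form $M^{-1}\bm c(\bm a_i,\bm a_j)$ for some square submatrix $M$; the entries of $M^{-1}$ introduce denominators, and there is no reason the resulting rational vector rounds to an integer solution that still separates additively in $i$ and $j$. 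Fixing residues of the $\bm a_i$ does not by itself control these denominators.

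The paper closes this gap with two simplifications you do not mention. First, it reduces to the case $|\bm w|=1$ and inducts: if one can move a single $w$ under the Ramsey quantifier, repeating this $|\bm w|$ times handles the vector. Second, for a single $w$, it first performs the standard Presburger ``$w$-simplification'' (multiply each atom so $w$ has coefficient equal to the lcm $N$, substitute $w\mapsto Nw$, and add the constraint $w\equiv_N 0$), so that in every inequality $w$ appears with coefficient $\pm 1$. After DNF and a Ramsey step to fix which linear lower bound $f(\bm a_i,\bm a_j,\bm c)=\alpha(\bm a_i)+\beta(\bm a_j)+\gamma(\bm c)+h$ is the greatest, the witness can always be chosen in the length-$N$ interval $[f+1,f+N]$; one more Ramsey application fixes the offset $r\in[1,N]$, and then $b_{i,j}=\alpha(\bm a_i)+\big(\beta(\bm a_j)+\gamma(\bm c)+h+r\big)$ is already in the separated form. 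A short residue argument then undoes the $w$-simplification. The point is that in one dimension the feasible region is an interval with an explicitly linear endpoint, which is exactly what makes the additive split immediate---your polytope-vertex picture in higher dimension does not give you this for free.
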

To prove \cref{thm:qe-presburger}, it suffices to prove it in case $\bm{w}$ consists of just one variable $w$: 
Then, \cref{thm:qe-presburger} follows by induction.
\begin{lemma}\label{lem:qe-presburger-single}
Let $\varphi$ be an existential formula in Presburger arithmetic. Then the formulas 
\begin{align}\label{eq-qe-single}
\ram \bm{x},\bm{y}\colon\exists w \colon \varphi(\bm{x},\bm{y},w,\bm{z}) 
&&\text{and}&&
\ram (\bm{x},v_1,v_2),(\bm{y},w_1,w_2)\colon \varphi(\bm{x},\bm{y},v_1+w_2,\bm{z})\wedge \bm{x}\ne\bm{y}
\end{align}
are equivalent.
\end{lemma}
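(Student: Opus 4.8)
The plan is to establish the nontrivial direction: from an infinite clique of $\exists w\colon\varphi(\bm x,\bm y,w,\bm z)$ we must extract vectors $v_1,v_2$ attached to each node that reconstruct a suitable witness $w$ for each edge as $v_1$-of-source plus $w_2$-of-target. Fix a valuation $\bm c$ of $\bm z$ and an infinite clique $(\bm a_i)_{i\ge 1}$, so for each $i<j$ there is some witness $b_{i,j}\in\Z$ with $\varphi(\bm a_i,\bm a_j,b_{i,j},\bm c)$. First I would Ramsey-color the edges $\{i,j\}$ (with $i<j$) by information that controls the witness $b_{i,j}$: this is where I exploit that $\varphi$ is an existential Presburger formula, hence (after \Cref{prop:qe-presburger}, or by a direct semilinear-set argument on the $w$-slices) the set of valid witnesses $w$ for a fixed pair $(\bm a_i,\bm a_j,\bm c)$ is a finite union of arithmetic progressions $\{ \alpha + \beta\Z_{\ge 0}\}$; I want to record, in the color of the edge, the relevant offsets and periods modulo a fixed bound. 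Since $\bm x\neq\bm y$ is needed, I also note the clique elements are pairwise distinct. Passing to an infinite monochromatic sub-clique via Ramsey's theorem, I may assume all edges ``look the same'': there is a single period $e$ and, for every edge, a witness $b_{i,j}$ with a prescribed residue $\rho \pmod e$.

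Next I would try to make the witnesses \emph{additively separable}. The key idea is to choose the $b_{i,j}$ not arbitrarily but ``as small as possible'' subject to the progression constraint, and then argue that along a further subsequence the minimal witness $b_{i,j}$ depends, up to an additive constant and up to the period $e$, only on the endpoints in a separable way. Concretely I would look at $b_{i,i+1}$ for consecutive indices, and at how $b_{i,j}$ grows; the convexity/monotonicity of the Presburger constraint on $w$ as a function of the difference $\bm a_j-\bm a_i$ should force $b_{i,j}$ to lie in an interval whose endpoints are affine in $\bm a_i$ and $\bm a_j$ separately. Then the decomposition $b_{i,j}=v_1(\bm a_i)+w_2(\bm a_j)$ becomes: pick $w_2(\bm a_j)$ to absorb the ``target part'' of this interval and the residue, and $v_1(\bm a_i)$ the ``source part'', possibly after thinning to a subsequence on which the residues align. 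The auxiliary components $v_2,w_1$ are free and can be used purely to keep tuples pairwise distinct if ever needed, or set to $0$. Once such a separation is found, the sequence $\bigl(\bm a_i,v_1(\bm a_i),v_2(\bm a_i)\bigr)_i$ (thinned to the monochromatic subsequence) is an infinite clique for the right-hand formula, giving the implication.

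I expect the main obstacle to be precisely this separability step: turning ``for every edge there is a witness'' into ``the witness is a sum of a source-label and a target-label.'' The naive attempt—$b_{i,j}$ could genuinely depend on both endpoints in an entangled, non-affine way—fails in general, so the argument must use (a) that we have infinitely many elements and can thin repeatedly, (b) Ramsey to normalize the ``shape'' (period, residue, which progression of the semilinear witness set is used), and (c) some monotone/interval structure of Presburger-definable witness sets to pin $b_{i,j}$ between affine bounds. A clean way to organize (c) is to first prove the statement for a \emph{single linear inequality} constraint on $w$ (where the witness set for a fixed pair is an interval $[\ell(\bm a_i,\bm a_j),u(\bm a_i,\bm a_j)]\cap(\rho+e\Z)$ with $\ell,u$ affine), handle a conjunction of such by intersecting, a disjunction by Ramsey, and modulo-atoms by folding into the residue bookkeeping; this reduces the general existential $\varphi$ to the atomic case. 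If a purely combinatorial extraction proves too delicate, a fallback is to encode the infinite clique as a witness to an $\omega$-regular / semilinear unboundedness condition and read off the separable form from the structure of that witness, but I would attempt the direct Ramsey-plus-interval argument first.
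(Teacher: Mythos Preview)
Your plan is essentially the paper's: reduce to a conjunction of atoms via DNF and Ramsey on disjuncts, then pick the witness $b_{i,j}$ just above the greatest lower bound with a fixed residue, and read off the additive split from the affine form of that bound. Two places where your sketch is loose deserve care.

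First, when you write that for a conjunction the witness set is an interval $[\ell(\bm a_i,\bm a_j),u(\bm a_i,\bm a_j)]$ with $\ell,u$ \emph{affine}: the endpoints are $\max_k \ell_k$ and $\min_k u_k$, which are only piecewise affine. The paper fixes this with one more Ramsey pass, coloring each edge by the total order of the lower bounds $f_1(\bm a_i,\bm a_j,\bm c),\ldots,f_n(\bm a_i,\bm a_j,\bm c)$; on the monochromatic subclique the tightest lower bound is a single affine $f=\alpha(\bm x)+\beta(\bm y)+\gamma(\bm z)+h$. Then $b_{i,j}\in[f+1,f+N]$ (with $N$ the product of the moduli), a final Ramsey step fixes the offset $r\in[1,N]$, and $b_{i,j}=\alpha(\bm a_i)+\bigl(\beta(\bm a_j)+\gamma(\bm c)+h+r\bigr)$ is the separation you want. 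This fits your ``Ramsey to normalize the shape'' philosophy; you just need to include the ordering of the bounds among the finite data being normalized.

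Second, you do not address atoms where $w$ carries a coefficient $s\neq\pm 1$. The paper handles this by a ``$w$-simplification'': multiply each atom so that $w$ has coefficient $1$, add $w\equiv_N 0$ with $N$ the lcm of the original coefficients, and apply the simple case. The resulting separation $b_i+b'_j$ is then always divisible by $N$, which forces all $b_i$ to share a residue $r\bmod N$; setting $\bar b_i=(b_i-r)/N$ and $\bar b'_j=(b'_j+r)/N$ gives an integer separation for the original $\varphi$. Your semilinear viewpoint would fold this into the period, but this residue-alignment step is exactly what lets you divide back and stay in $\Z$.
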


\subsection*{Presburger: simple formulas}
Let $\bm{u}$ be a vector of variables and $w$ be a variable. We say that a
Presburger formula $\varphi(\bm{u},w)$ is \emph{$w$-simple} if it is a
Boolean combination of formulas of the form $\bm{r}^\top\bm{u}+c<w$, $w<\bm{r}^\top
\bm{u}+c$, and modulo constraints over $\bm{u}$ and $w$, where $\bm{r}$ is a
vector over $\Z$, and $c\in\Z$. 
\begin{lemma}\label{lem:qe-simple}
Let $\varphi(\bm{x},\bm{y},w,\bm{z})$ be $w$-simple. Then the
formulas 
\begin{align*} 
\ram \bm{x},\bm{y}\colon\exists w\colon \varphi(\bm{x},\bm{y},w,\bm{z}) 
&& \text{and}&&
\ram (\bm{x},v_1,v_2),(\bm{y},w_1,w_2)\colon \varphi(\bm{x},\bm{y},v_1+w_2,\bm{z})\wedge \bm{x}\ne\bm{y}
\end{align*}
are equivalent.
\end{lemma}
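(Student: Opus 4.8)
The direction from the second formula to the first is immediate: given an infinite clique $(\bm a_i,p_i,q_i)_{i\ge1}$ of $\varphi(\bm x,\bm y,v_1+w_2,\bm z)\wedge\bm x\ne\bm y$ with respect to some valuation $\bm c$ of $\bm z$, the $\bm a_i$ are pairwise distinct because of the conjunct $\bm x\ne\bm y$, and for every $i<j$ the value $w:=p_i+q_j$ witnesses $\exists w\colon\varphi(\bm a_i,\bm a_j,w,\bm c)$; hence $(\bm a_i)_{i\ge1}$ is an infinite clique of the first formula. For the converse, the plan is to fix $\bm c$, an infinite clique $(\bm a_i)_{i\ge1}$ of $\exists w\colon\varphi(\bm x,\bm y,w,\bm c)$, and witnesses $w_{i,j}$ with $\varphi(\bm a_i,\bm a_j,w_{i,j},\bm c)$ for $i<j$, to pass to a subsequence with strong uniformity properties, and then to define $p_i$ and $q_i$ explicitly on it.

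First I would reduce to the case that $\varphi$ is a single conjunction of lower-bound atoms $\ell_m(\bm x,\bm y,\bm z)<w$ ($m\in M_L$), upper-bound atoms $w<u_m(\bm x,\bm y,\bm z)$ ($m\in M_U$), and individual modulo atoms. Writing $\varphi$ in disjunctive normal form with all negations pushed onto atoms (a negated strict inequality is again a strict inequality over $\Z$, and a negated modulo atom is a disjunction of modulo atoms), and distributing $\exists w$ over the disjunction, it suffices to prove the equivalence for each disjunct, because $\ram$ distributes over disjunctions by Ramsey's theorem and because $\bigl(\bigvee_k\varphi_k\bigr)\wedge\bm x\ne\bm y$ equals $\bigvee_k(\varphi_k\wedge\bm x\ne\bm y)$. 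Modulo atoms that do not mention $w$ impose no constraint on $w$ and are already satisfied at each $(\bm a_i,\bm a_j,\bm c)$ since a witness $w_{i,j}$ exists, so they may be discarded.

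Let $E$ be the least common multiple of the moduli occurring in $\varphi$. I would then thin the clique twice. By the pigeonhole principle, pass to an infinite subsequence on which all $\bm a_i$ have the same residue vector modulo $E$; on it every modulo atom, evaluated at $(\bm a_i,\bm a_j,\bm c)$, becomes one and the same constraint on $w$, so jointly the modulo atoms are equivalent to ``$w\bmod E\in R$'' for a fixed nonempty $R\subseteq\Z/E\Z$, and moreover $\ell_m(\bm a_i,\bm a_j,\bm c)\bmod E$ and $u_m(\bm a_i,\bm a_j,\bm c)\bmod E$ no longer depend on the pair $i<j$. Next, by Ramsey's theorem pass to an infinite monochromatic sub-clique for the edge colouring that sends $i<j$ to the set of $m\in M_L$ attaining $\max_{m}\ell_m(\bm a_i,\bm a_j,\bm c)$, the set of $m\in M_U$ attaining $\min_{m}u_m(\bm a_i,\bm a_j,\bm c)$, and the residue $w_{i,j}\bmod E$. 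After re-indexing this fixes (assuming $M_L,M_U\ne\emptyset$, the other cases being easy) an $m^*\in M_L$ with $\ell_{m^*}(\bm a_i,\bm a_j,\bm c)=\max_{m}\ell_m(\bm a_i,\bm a_j,\bm c)$ for all $i<j$, an $m^{**}\in M_U$ with $u_{m^{**}}(\bm a_i,\bm a_j,\bm c)=\min_{m}u_m(\bm a_i,\bm a_j,\bm c)$, a residue $r\in R$ with $w_{i,j}\equiv r\pmod E$, and a fixed value $\lambda$ of $\ell_{m^*}(\bm a_i,\bm a_j,\bm c)\bmod E$.

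Writing $\ell_{m^*}(\bm x,\bm y,\bm z)=\bm\alpha^\top\bm x+\bm\beta^\top\bm y+\bm\zeta^\top\bm z+c$ and letting $\delta\in\{1,\dots,E\}$ be the unique offset with $\lambda+\delta\equiv r\pmod E$, I would put $p_i:=\bm\alpha^\top\bm a_i+\bm\zeta^\top\bm c+c+\delta$ and $q_j:=\bm\beta^\top\bm a_j$, so that $p_i+q_j=\ell_{m^*}(\bm a_i,\bm a_j,\bm c)+\delta$ for all $i,j$. This number is the \emph{least} integer that is both strictly above $\ell_{m^*}(\bm a_i,\bm a_j,\bm c)$ and congruent to $r$ modulo $E$; since $w_{i,j}$ is another such integer, $\ell_{m^*}(\bm a_i,\bm a_j,\bm c)<p_i+q_j\le w_{i,j}<u_{m^{**}}(\bm a_i,\bm a_j,\bm c)$. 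Hence $p_i+q_j$ exceeds every lower bound, stays below every upper bound, and lies in a residue class in $R$, i.e.\ $\varphi(\bm a_i,\bm a_j,p_i+q_j,\bm c)$ holds; together with $\bm a_i\ne\bm a_j$ this makes $(\bm a_i,p_i,q_i)_{i\ge1}$ an infinite clique of the second formula. The degenerate cases use the obvious symmetric choices: the \emph{largest} admissible value below $u_{m^{**}}(\bm a_i,\bm a_j,\bm c)$ when $M_L=\emptyset$, and any fixed integer lift of $r$ when $M_L=M_U=\emptyset$. I expect the key point to be precisely the second thinning step combined with the remark that $\delta$ is independent of the pair: reducing to a single relevant lower bound lets the threshold $\ell_{m^*}(\bm a_i,\bm a_j,\bm c)$ split, by linearity, into a part in $\bm a_i$ plus a part in $\bm a_j$, while freezing $\ell_{m^*}(\bm a_i,\bm a_j,\bm c)\bmod E$ in the first step makes the rounding offset uniform across all edges --- which is exactly what permits $w$ to be realised as the per-node sum $p_i+q_j$ rather than as a genuinely per-edge quantity.
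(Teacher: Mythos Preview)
Your proof is correct and follows essentially the same approach as the paper's: reduce to a single conjunction via DNF and Ramsey, use a Ramsey-type argument to fix a maximal lower bound $\ell_{m^*}$ and a uniform offset above it, then split $\ell_{m^*}(\bm a_i,\bm a_j,\bm c)+\text{offset}$ by linearity into a part depending on $\bm a_i$ and a part depending on $\bm a_j$. The only noteworthy difference is how the offset is made uniform: the paper observes directly that each witness $b_{i,j}$ can be shifted by a multiple of $N$ (the product of the moduli) into the window $[\ell_{m^*}+1,\ell_{m^*}+N]$ and then applies Ramsey to this finite range of offsets, whereas you first pin the residue vector of the $\bm a_i$ modulo $E$ (so that $\ell_{m^*}\bmod E$ is frozen) and then separately pin $w_{i,j}\bmod E$, computing the offset $\delta$ from these two residues. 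Both variants are fine; the paper's is slightly shorter since it skips the preliminary pigeonhole step on the $\bm a_i$.
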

\begin{proof}
We first move all negations in $\varphi$ inwards to the atoms and possibly negate them (which for modulo constraints introduces disjunctions).
We then bring $\varphi$ into disjunctive normal form
and move the Ramsey quantifier and existence quantifier into the disjunction.
Since $\varphi$ is simple, we can assume that it is a conjunction of formulas
\[ \alpha_i(\bm{x})+\beta_i(\bm{y})+\gamma_i(\bm{z})+h_i < w \]
for $i=1,\ldots,n$ and
\[ w < \alpha'_j(\bm{x})+\beta'_j(\bm{y})+\gamma'_j(\bm{z})+h'_j \]
for $j=1,\ldots,m$, and modulo constraints
\[ \delta_i(\bm{x},\bm{y},w,\bm{z})\equiv_{e_i} d_i\]
for $i=1,\ldots,k$. Here,
$\alpha_i,\beta_i,\gamma_i,\alpha'_j,\beta'_j,\gamma'_j,\delta_i$ are linear
functions.
Let $f_i(\bm{x},\bm{y},\bm{z}) := \alpha_i(\bm{x})+\beta_i(\bm{y})+\gamma_i(\bm{z})+h_i$ for all $i \in [1,n]$ and
$f'_j(\bm{x},\bm{y},\bm{z}) := \alpha'_j(\bm{x})+\beta'_j(\bm{y})+\gamma'_j(\bm{z})+h'_j$ for all $j \in [1,m]$.
In the following we assume that $n,m > 0$; the other cases are simpler and can be handled similarly.

Assume $\bm{c} \in \Z^{|\bm{z}|}$ satisfies the left-hand formula, i.e.,
there is an infinite sequence $(\bm{a}_i)_{i \ge 1}$ of pairwise distinct vectors over $\Z$ such that
for all $i < j$ there exists $b_{i,j} \in \Z$ such that $\varphi(\bm{a}_i,\bm{a}_j,b_{i,j},\bm{c})$ holds.
By Ramsey's theorem we can take an infinite subsequence such that we can assume that
$f_1(\bm{a}_i,\bm{a}_j,\bm{c}) \le \dots \le f_n(\bm{a}_i,\bm{a}_j,\bm{c})$ and
$f'_1(\bm{a}_i,\bm{a}_j,\bm{c}) \le \dots \le f'_m(\bm{a}_i,\bm{a}_j,\bm{c})$
for all $i<j$.
Thus, it suffices to consider the greatest lower bound $f_n(\bm{a}_i,\bm{a}_j,\bm{c})$ and
the smallest upper bound $f'_1(\bm{a}_i,\bm{a}_j,\bm{c})$ on $w$.
Let $f := f_n, \alpha := \alpha_n, \beta := \beta_n, \gamma := \gamma_n$, and $h := h_n$.
Let $N := e_1 \cdots e_k$ be the product of all moduli where we set $N := 1$ if $k = 0$.
First observe that $b_{i,j}$ can always be chosen from the interval $[f(\bm{a}_i,\bm{a}_j,\bm{c})+1,f(\bm{a}_i,\bm{a}_j,\bm{c})+N]$ for all $i<j$.
Since this interval has fixed length $N$, by Ramsey's theorem we can restrict to an infinite subsequence such that
there is a constant $r \in [1,N]$ such that $f(\bm{a}_i,\bm{a}_j,\bm{c}) + r = b_{i,j}$ for all $i<j$.
Now if we set $b_i^1 := \alpha(\bm{a}_i)$ and $b_i^2 := \beta(\bm{a}_i) + \gamma(\bm{c}) + h + r$ for $i \ge 1$, the infinite sequence
$(\bm{a}_i,b_i^1,b_i^2)_{i \ge 1}$ satisfies $\varphi(\bm{a}_i,\bm{a}_j,b_i^1 + b_j^2,\bm{c})$ for all $i < j$ as desired.
\end{proof}

\subsection*{Presburger: general formulas}
Let us now prove \cref{lem:qe-presburger-single} for general existential
Presburger formulas.  Observe that if we can show equivalence of the formulas
in \cref{eq-qe-single} for quantifier-free $\varphi$ with modulo constraints, then the same follows for
general $\varphi$: Since for each $\varphi$, there exists an equivalent
quantifier-free $\varphi'$ with modulo constraints, we can apply the equivalence in
\cref{lem:qe-presburger-single} to $\varphi'$, which implies the same for
$\varphi$ itself. Therefore, we may assume that $\varphi$ is quantifier-free,
but contains modulo constraints.

We now modify $\varphi$ as in the standard
quantifier elimination procedure for Presburger arithmetic. To this end, we define the ``$w$-simplification'' of a quantifier-free formula $\theta(\bm{u},w)$ with modulo constraints that has free variables $\bm{u}$ and $w$. 
This means, $\theta$ is a Boolean combination of inequalities of the form
$\bm{r}^\top \bm{u} + c \sim s w$, where $\mathord{\sim} \in \{\mathord{<},\mathord{>}\}$, and modulo
constraints $\bm{r}^\top \bm{u} + s w \equiv_e c$ for some vector
$\bm{r}$ and $c,s\in\Z$. 
(Note that in Presburger arithmetic equality can be expressed by a conjunction of two strict inequalities.)
Let $N$ be the least common multiple of all coefficients $s$ of $w$ in these constraints. We obtain $\theta'$ from $\theta$ by replacing each inequality $\bm{r}^\top\bm{u}+c\sim sw$ with $\tfrac{N}{s}\bm{r}^\top\bm{u}+\tfrac{N}{s}c\sim w$
and replacing each modulo constraint $\bm{r}^\top \bm{u} + s w \equiv_e c$ with $\tfrac{N}{s}\bm{r}^\top\bm{u}+w\equiv_{\tfrac{N}{s}e} \tfrac{N}{s}c$. Now the \emph{$w$-simplification} of $\varphi$ is the pair $(\psi,N)$, where $\psi(\bm{u},w)=\varphi'(\bm{u},w)\wedge w\equiv_N 0$. Then clearly, $\psi$ is $w$-simple and for every integer vector $\bm{a}$ and $b\in\Z$, we have
\[ \theta(\bm{a},b)~~\text{if and only if}~~\psi(\bm{a},Nb) \]
and moreover, $\psi(\bm{a},b)$ implies that $b$ is a multiple of $N$.

Now suppose $\varphi(\bm{x},\bm{y},w,\bm{z})$ is quantifier-free, but contains modulo constraints. Moreover, let $\psi(\bm{x},\bm{y},w,\bm{z})$ and $N$ be the $w$-simplification of $\varphi$. To show \cref{lem:qe-presburger-single}, let us assume the left-hand formula in \cref{eq-qe-single} is satisfied for some integer vector $\bm{c}$. Then $\ram \bm{x},\bm{y}\colon \exists w\colon \psi(\bm{x},\bm{y},w,\bm{c})$ holds, because we can multiply the witness values by $N$.
By \cref{lem:qe-simple}, this implies that 
\[ \ram (\bm{x},v_1,v_2),(\bm{y},w_1,w_2)\colon \psi(\bm{x},\bm{y},v_1+w_2,\bm{c}) \]
is satisfied, meaning 
there exists a sequence $(\bm{a}_i,b_i,b'_i)_{i\ge 1}$ where $\bm{a}_1,\bm{a}_2,\ldots$ are pairwise distinct and where $\psi(\bm{a}_i,\bm{a}_j,b_i+b'_j,\bm{c})$ for every $i<j$.
By construction of $\psi$, this implies that $b_i+b'_j$ is a multiple of $N$ for every $i<j$ and therefore all the numbers $b_1,b_2,\ldots$ must have the same remainder modulo $N$, say $r\in[0,N-1]$, and all the numbers $b'_1,b'_2,\ldots$ must be congruent to $-r$ modulo $N$.
This means, the numbers $\bar{b}_i=(b_i-r)/N$ and
$\bar{b}'_i=(b'_i+r)/N$ must be integers. Then for every $i<j$, we have
$\psi(\bm{a}_i,\bm{a}_j,N(\bar{b}_i+\bar{b}'_j),\bm{c})$ and hence $\varphi(\bm{a}_i,\bm{a}_j,\bar{b}_i+\bar{b}'_j,\bm{c})$. Thus, the sequence $(\bm{a}_i,\bar{b}_i,\bar{b}'_i)_{i\ge 1}$ shows that
$\ram (\bm{x},v_1,v_2),(\bm{y},w_1,w_2)\colon \varphi(\bm{x},\bm{y},v_1+w_2,\bm{c})\wedge \bm{x}\ne\bm{y}$ is satisfied.

\subsection*{Linear Real Arithmetic}
We now turn to the case where $\varphi$ is a formula in LRA.
\begin{theorem}\label{thm:qe-real}
Let $\varphi$ be an existential formula in LRA. Then the formulas 
\begin{align*} &\ram \bm{x},\bm{y}\colon\exists \bm{w}\colon \varphi(\bm{x},\bm{y},\bm{w},\bm{z}) &&\text{and}~~~
&&\ram (\bm{x},\bm{v}_1,\bm{v}_2),(\bm{y},\bm{w}_1,\bm{w}_2)\colon \varphi(\bm{x},\bm{y},\bm{v}_1+\bm{w}_2,\bm{z})\wedge \bm{x}\ne\bm{y}\end{align*}
are equivalent.
\end{theorem}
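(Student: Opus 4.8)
The plan is to mirror the structure of the Presburger proof, reducing \Cref{thm:qe-real} to the case where $\varphi$ is quantifier-free (using \Cref{prop:qe-reals}) and then to the single-variable case where $\bm w$ is one variable $w$, so that the general statement follows by induction on the length of $\bm w$. As in the Presburger argument, the nontrivial direction is that the left-hand formula implies the right-hand one; the reverse implication is immediate since one can take $\bm v_1 = \bm w_1$ and $\bm w_2 = \bm w$ (with $\bm v_2, \bm w_1$ irrelevant).

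\medskip
\noindent\emph{Reduction to $w$-simple formulas.} After moving to a quantifier-free $\varphi$, I would put it in disjunctive normal form, push the Ramsey and existential quantifiers into the disjunction (Ramsey distributes over $\vee$ by Ramsey's theorem, as noted in the preliminaries), and observe that each atom is an inequality linear in $w$. Each such atom is equivalent to one of $w < \ell(\bm x, \bm y, \bm z)$, $w > \ell(\bm x,\bm y,\bm z)$, $w = \ell(\bm x,\bm y,\bm z)$, or an atom not involving $w$. Over the reals, an equality $w = \ell$ can be handled by substituting $\ell$ for $w$ and dropping the quantifier (so the easy case), so we are left with conjunctions of lower bounds $f_i(\bm x, \bm y, \bm z) < w$ for $i = 1,\dots,n$, upper bounds $w < f'_j(\bm x,\bm y,\bm z)$ for $j = 1,\dots,m$, and $w$-free atoms; here each $f_i = \alpha_i(\bm x) + \beta_i(\bm y) + \gamma_i(\bm z) + h_i$ with $\alpha_i,\beta_i,\gamma_i$ linear. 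This is the LRA analogue of \Cref{lem:qe-simple}; note there are no modulo constraints, which simplifies things, but the bounds are now \emph{strict}, which is the new wrinkle.

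\medskip
\noindent\emph{Core argument for the $w$-simple case.} Assume $\bm c$ satisfies the left-hand formula, witnessed by pairwise distinct $(\bm a_i)_{i \ge 1}$ with, for each $i < j$, some $b_{i,j}$ satisfying all the constraints. By Ramsey's theorem, pass to a subsequence on which the ordering of the values $f_1(\bm a_i, \bm a_j, \bm c), \dots, f_n(\bm a_i, \bm a_j, \bm c)$ is the same for all $i < j$, and likewise for the $f'_j$; so only the largest lower bound $f := f_n$ and the smallest upper bound $f' := f'_1$ matter, and the $w$-free atoms are satisfied. We need a value of $w$ strictly between $f(\bm a_i, \bm a_j, \bm c)$ and $f'(\bm a_i, \bm a_j, \bm c)$ of the form $b_i^1 + b_j^2$. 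The natural choice, as in Presburger, is the midpoint: set $b_i^1 := \tfrac12(\alpha_n(\bm a_i) + \alpha'_1(\bm a_i))$ and $b_i^2 := \tfrac12(\beta_n(\bm a_i) + \beta'_1(\bm a_i)) + \tfrac12(\gamma_n(\bm c) + \gamma'_1(\bm c) + h_n + h'_1)$, so that $b_i^1 + b_j^2 = \tfrac12\big(f(\bm a_i,\bm a_j,\bm c) + f'(\bm a_i,\bm a_j,\bm c)\big)$, which lies strictly between $f(\bm a_i,\bm a_j,\bm c)$ and $f'(\bm a_i,\bm a_j,\bm c)$ whenever $f(\bm a_i,\bm a_j,\bm c) < f'(\bm a_i,\bm a_j,\bm c)$ — and the latter holds because $b_{i,j}$ witnesses it. Then $(\bm a_i, b_i^1, b_i^2)_{i \ge 1}$ satisfies $\varphi(\bm a_i, \bm a_j, b_i^1 + b_j^2, \bm c) \wedge \bm x \ne \bm y$ for all $i < j$, giving the right-hand formula.

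\medskip
\noindent\emph{Main obstacle.} The delicate point is the case analysis on the shape of the atoms after DNF: unlike Presburger, over the reals one must take care with equality atoms $w = \ell$ and with the situation where only lower bounds or only upper bounds (or neither) on $w$ occur — then $w$ is unconstrained above or below and can be chosen as, say, $f(\bm a_i,\bm a_j,\bm c) + 1 = \alpha_n(\bm a_i) + (\beta_n(\bm a_i) + \gamma_n(\bm c) + h_n + 1)$, reusing the same node-wise splitting trick. Once the bookkeeping of these subcases is done, the midpoint construction closes each case, and the reductions (DNF, distributing the Ramsey quantifier, induction on $|\bm w|$) are routine. So the overall shape is a direct adaptation of \Cref{lem:qe-simple} and the general Presburger argument, with the midpoint replacing the "smallest integer in the interval" and no modulo constraints to track.
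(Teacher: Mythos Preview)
Your proposal is correct and follows essentially the same route as the paper: reduce to a single variable $w$ by induction, make $\varphi$ quantifier-free via \Cref{prop:qe-reals}, push into DNF and treat each disjunct, handle an equality atom $w=\ell$ by substitution, and in the pure-inequality case use Ramsey's theorem to fix the ordering of the bounds and take the midpoint as the split $b_i^1+b_j^2$. The paper's \Cref{lem:qe-real-single} carries out exactly this argument (with the degenerate cases $n=0$ or $m=0$ dismissed as obvious, just as you indicate).
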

We may assume that $\bm{w}$ consists of just one variable $w$: 
Then \cref{thm:qe-real} follows by induction.
\begin{lemma}\label{lem:qe-real-single}
Let $\varphi$ be an existential formula in LRA. Then the formulas 
\begin{align*}
\ram \bm{x},\bm{y}\colon\exists w \colon \varphi(\bm{x},\bm{y},w,\bm{z}) 
&&\text{and}&&
\ram (\bm{x},v_1,v_2),(\bm{y},w_1,w_2)\colon \varphi(\bm{x},\bm{y},v_1+w_2,\bm{z})\wedge \bm{x}\ne\bm{y}
\end{align*}
are equivalent.
\end{lemma}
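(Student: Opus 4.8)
The plan is to mimic the structure of the Presburger proof, but with the crucial difference that in the reals there is no ``modulo'' phenomenon to worry about, so we can go directly to a ``simple formulas'' step without a $w$-simplification detour. First I would reduce to the case where $\varphi(\bm x,\bm y,w,\bm z)$ is quantifier-free: since LRA admits quantifier elimination (\Cref{prop:qe-reals}), any existential $\varphi$ is equivalent to a quantifier-free $\varphi'$, and proving the equivalence of \cref{eq-qe-single}-style formulas for $\varphi'$ yields it for $\varphi$. A quantifier-free LRA formula is a Boolean combination of linear inequalities, so after moving negations to the atoms (negating $<$ gives $\ge$, which we keep as a legal atom), putting it in disjunctive normal form, and distributing the Ramsey quantifier over the disjunction via Ramsey's theorem (as noted in the preliminaries) and the existential quantifier likewise, each disjunct is a conjunction of atoms, each of which either does not mention $w$ or is of the form $\bm r^\top(\bm x,\bm y,\bm z) + c \sim s w$ with $s \ne 0$, $\mathord\sim \in \{<,>,\le,\ge\}$. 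Dividing through by $|s|$ (legal over $\R$, unlike over $\Z$), we may assume every $w$-atom is $\alpha(\bm x)+\beta(\bm y)+\gamma(\bm z)+h \sim w$, i.e.\ $\varphi$ is ``$w$-simple'' in the obvious real analogue of the Presburger notion, with no modulo constraints.

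Next I would prove the lemma for such $w$-simple $\varphi$, which is the analogue of \Cref{lem:qe-simple}. Suppose $\bm c$ satisfies the left-hand formula, witnessed by pairwise distinct $(\bm a_i)_{i\ge1}$ over $\R$ and reals $b_{i,j}$ with $\varphi(\bm a_i,\bm a_j,b_{i,j},\bm c)$ for $i<j$. Write the lower-bound atoms as $f_i(\bm a_i,\bm a_j,\bm c) \sim_i w$ and upper-bound atoms as $w \sim'_j f'_j(\bm a_i,\bm a_j,\bm c)$. By Ramsey's theorem, pass to an infinite subsequence on which the finitely many orderings among the $f_i$-values and among the $f'_j$-values are fixed, so that one specific lower bound $f := f_L$ is the largest and one specific upper bound $f' := f'_U$ is the smallest for all $i<j$; moreover, by colouring each pair $(i,j)$ by which $\sim$'s are strict vs.\ non-strict among the active constraints (these are finitely many colours), we may assume the type of the tightest lower bound is constant, say it is $f(\bm a_i,\bm a_j,\bm c) \le w$ (the strict case $<$ is analogous). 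Now here is the one place where the reals differ from $\Z$: we cannot ``round $b_{i,j}$ to a canonical value in an interval of length $N$''. Instead I would simply take $b_{i,j} := f(\bm a_i,\bm a_j,\bm c) = \alpha(\bm a_i) + \beta(\bm a_j) + \gamma(\bm c) + h$ itself — the tightest lower bound is always a legitimate choice of $w$ because the edge constraints between the same pair were satisfiable, hence the tightest lower bound is $\le$ (resp.\ $<$) every upper bound, and it satisfies the chosen lower-bound atom (and, since it is $\ge$ the other lower bounds by our Ramsey ordering, those too). Then setting $b_i^1 := \alpha(\bm a_i)$ and $b_i^2 := \beta(\bm a_i) + \gamma(\bm c) + h$ gives $b_{i,j} = b_i^1 + b_j^2$, so the sequence $(\bm a_i,b_i^1,b_i^2)_{i\ge1}$ witnesses the right-hand formula (the $\bm x \ne \bm y$ conjunct is free since the $\bm a_i$ are pairwise distinct). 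The converse direction is trivial: any witness $(\bm a_i,b_i,b'_i)$ of the right-hand formula gives $b_{i,j} := b_i + b'_j$ for the left-hand one. This establishes the $w$-simple case, and combined with the quantifier-elimination reduction above it proves \Cref{lem:qe-real-single}; \Cref{thm:qe-real} then follows by induction on the number of variables in $\bm w$, exactly as stated.

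I expect the main obstacle to be the bookkeeping around non-strict versus strict inequalities and the ``tightest bound'' argument when some $f_i$-values coincide or when there are no upper-bound atoms (or no lower-bound atoms) at all — the degenerate cases. When there are no lower bounds one must pick $w$ small enough to beat all upper bounds, which cannot be written as a sum $b_i^1 + b_j^2$ of node-values in an obvious canonical way; the fix is that in that case we can pick, say, $b_{i,j}$ equal to the tightest \emph{upper} bound minus a positive slack, or better, observe that if there are no lower bounds then by Ramsey we may shift to a subsequence and choose $w := b^1_i + b^2_j$ where we absorb a suitably large negative offset — and symmetrically when there are no upper bounds. One has to check that the needed offsets are uniform along the subsequence, which is exactly what the Ramsey-colouring step buys us. This is why the Presburger proof explicitly says ``the other cases are simpler and can be handled similarly'', and the same remark applies here; I would handle the generic case $n,m>0$ in detail and dispatch the rest by the analogous (easier) argument.
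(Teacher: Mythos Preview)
Your overall structure---quantifier elimination, DNF, Ramsey to fix the ordering of the lower and upper bounds---is correct and matches the paper. The gap is in your choice of witness $b_{i,j}$. You set $b_{i,j}$ equal to the tightest lower bound $f(\bm a_i,\bm a_j,\bm c)$ and dismiss the strict case as ``analogous''. It is not: when the tightest lower-bound atom is strict, $f<w$, the value $w=f$ violates that very atom, and if the tightest upper bound is also strict (the generic situation, since atoms over $\RLin$ are strict), neither endpoint of the interval is admissible. Nor can you repair this with a fixed positive offset $\varepsilon$: the width $f'-f$ is a linear function of $(\bm a_i,\bm a_j)$ and may tend to~$0$ along the clique (take $\varphi(x,y,w)=(-x<w)\wedge(w<x)$ with $a_i=1/i$, so the feasible interval for the pair $(i,j)$ is $(-1/i,\,1/i)$), so no single $\varepsilon$ works for all pairs, and Ramsey does not help over a continuum of colours.

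The paper normalises to strict inequalities plus equalities (splitting $\ge$ as $>\,\vee\,=$), disposes of any equality by direct substitution $w=\delta_1(\bm x)+\kappa_1(\bm y)+\lambda_1(\bm c)+d_1$, and in the purely strict case chooses the \emph{midpoint}
\[
  b_{i,j}:=\tfrac12\bigl(f(\bm a_i,\bm a_j,\bm c)+f'(\bm a_i,\bm a_j,\bm c)\bigr).
\]
The midpoint lies in the open interval regardless of its width, and it still decomposes as $b_i^1+b_j^2$ with $b_i^1=\tfrac12(\alpha(\bm a_i)+\alpha'(\bm a_i))$ and $b_j^2=\tfrac12(\beta(\bm a_j)+\beta'(\bm a_j)+\gamma(\bm c)+\gamma'(\bm c)+h+h')$. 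Once you replace your endpoint by this (or any fixed) convex combination of the two extreme bounds, the rest of your argument goes through unchanged.
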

\begin{proof}
Again by eliminating quantifiers in $\varphi$, bringing it into disjunctive normal form, and moving the quantifiers into the disjunction, 
we assume that $\varphi$ is a conjunction of formulas
\[ \alpha_i(\bm{x})+\beta_i(\bm{y})+\gamma_i(\bm{z})+h_i < w \]
for $i=1,\ldots,n$ and
\[ w < \alpha'_j(\bm{x})+\beta'_j(\bm{y})+\gamma'_j(\bm{z})+h'_j \]
for $j=1,\ldots,m$, and equality constraints
\[ w = \delta_i(\bm{x}) + \kappa_i(\bm{y}) + \lambda_i(\bm{z}) + d_i \]
for $i=1,\ldots,k$.
Here, $\alpha_i,\beta_i,\gamma_i,\alpha'_j,\beta'_j,\gamma'_j,\delta_i,\kappa_i,\lambda_i$ are linear functions with rational coefficients
and $h_i,h'_j,d_i \in \Q$ are constants.

Assume $\bm{c} \in \R^{|\bm{z}|}$ satisfies the left-hand formula, i.e.,
there is an infinite sequence $(\bm{a}_i)_{i \ge 1}$ of pairwise distinct vectors over $\R$ such that
for all $i < j$ there exists $b_{i,j} \in \R$ such that $\varphi(\bm{a}_i,\bm{a}_j,b_{i,j},\bm{c})$ holds.
Clearly, if $k > 0$, we can eliminate $w$ by replacing it by $\delta_1(\bm{x})+\kappa_1(\bm{y})+\lambda_1(\bm{z}) + d_1$.
Thus, setting $b_i := \delta_1(\bm{a}_i)$ and $b'_i := \kappa_1(\bm{a}_i)+\lambda_1(\bm{c}) + d_1$ for $i \ge 1$,
the sequence $(\bm{a}_i,b_i,b'_i)_{i\ge1}$ satisfies $\varphi(\bm{a}_i,\bm{a}_j,b_i+b'_j,\bm{c})$ for all $i<j$.
So assume $k=0$, i.e., $\varphi$ only contains lower and upper bounds on $w$.
We further assume that $n,m > 0$ since the other cases are obvious.
As in the Presburger case we can apply Ramsey's theorem so that we only have to consider the greatest lower bound
$\alpha(\bm{x})+\beta(\bm{y})+\gamma(\bm{z})+h$
and the smallest upper bound
$\alpha'(\bm{x})+\beta'(\bm{y})+\gamma'(\bm{z})+h'$
on $w$.
This means that $w$ can always be chosen to be the midpoint of this interval.
Therefore, if we set $b_i := (\alpha(\bm{a}_i) + \alpha'(\bm{a}_i)) / 2$ and 
$b'_i := (\beta(\bm{a}_i)+\gamma(\bm{c})+h+\beta'(\bm{a}_i)+\gamma'(\bm{c})+h')/2$ for $i \ge 1$,
the sequence $(\bm{a}_i,b_i,b'_i)_{i\ge1}$ satisfies $\varphi(\bm{a}_i,\bm{a}_j,b_i+b'_j,\bm{c})$ for all $i<j$.
\end{proof}

\subsection*{Linear Integer Real Arithmetic}
We are now ready to prove \Cref{thm:qe-mixed}.
As mentioned above, it suffices to prove the ``only if'' direction.
Let $\psi(\bm z)$ be the left-hand formula and $\bm c$ be a valuation of $\bm z$ that satisfies $\psi$.
By \Cref{lem:decomposition} there exists a decomposition 
$\varphi'(\bm x^\mathrm{i/r},\bm y^\mathrm{i/r},\bm w^\mathrm{i/r},\bm z^\mathrm{i/r})$ 
of $\varphi(\bm{x},\bm{y},\bm{w},\bm{z})$. 
We define the formula $\psi'(\bm z^\mathrm{i/r}) := \ram \bm x^\mathrm{i/r},\bm y^\mathrm{i/r} \colon \exists \bm w^\mathrm{i/r} \colon 
\varphi'(\bm x^\mathrm{i/r},\bm y^\mathrm{i/r},\bm w^\mathrm{i/r},\bm z^\mathrm{i/r})$.
By definition of a decomposition, there is a valuation $\bm c^\mathrm{i/r}$ of $\bm z^\mathrm{i/r}$ with $\bm c^\mathrm{int}+\bm c^\mathrm{real} = \bm c$
that satisfies $\psi'$.
We bring $\varphi'$ into disjunctive normal form
\[\bigvee_{i=1}^n \alpha_i(\bm x^\mathrm{int},\bm y^\mathrm{int},\bm w^\mathrm{int},\bm z^\mathrm{int})
\wedge \beta_i(\bm x^\mathrm{real},\bm y^\mathrm{real},\bm w^\mathrm{real},\bm z^\mathrm{real})\]
where $\alpha_i$ is an existential Presburger formula and $\beta_i$ is an existential formula in LRA.
By Ramsey's theorem there exists $1 \le i \le n$ such that
\[\ram \bm x^\mathrm{i/r}, \bm y^\mathrm{i/r} \colon 
\exists \bm w^\mathrm{int} \colon \alpha_i(\bm x^\mathrm{int},\bm y^\mathrm{int},\bm w^\mathrm{int},\bm c^\mathrm{int})
\wedge \exists \bm w^\mathrm{real} \colon \beta_i(\bm x^\mathrm{real},\bm y^\mathrm{real},\bm w^\mathrm{real},\bm c^\mathrm{real}).\]
Note that the existentially quantified variables can be split at the conjunction into the real and integer part.
To perform a similar splitting for the variables bound by the Ramsey quantifier, we have to distinct the two cases
whether the vectors of the clique are pairwise distinct in the real components or in the integer components.
We only show the case where the vectors of the clique are pairwise distinct in both the real and integer components.
The other cases are similar by allowing that either the integer or real components do not change throughout the clique,
i.e., either $\exists \bm x^\mathrm{int}, \bm w^\mathrm{int} \colon \alpha_i(\bm x^\mathrm{int},\bm x^\mathrm{int},\bm w^\mathrm{int},\bm c^\mathrm{int})$ or
$\exists \bm x^\mathrm{real}, \bm w^\mathrm{real} \colon \beta_i(\bm x^\mathrm{real},\bm x^\mathrm{real},\bm w^\mathrm{real},\bm c^\mathrm{real})$ holds.
So we assume that
\[\ram \bm x^\mathrm{int}, \bm y^\mathrm{int} \colon \exists \bm w^\mathrm{int} \colon 
\alpha_i(\bm x^\mathrm{int},\bm y^\mathrm{int},\bm w^\mathrm{int},\bm c^\mathrm{int}) \wedge
\ram \bm x^\mathrm{real}, \bm y^\mathrm{real} \colon \exists \bm w^\mathrm{real} \colon 
\beta_i(\bm x^\mathrm{real},\bm y^\mathrm{real},\bm w^\mathrm{real},\bm c^\mathrm{real}).\]
By applying \Cref{thm:qe-presburger} to the first conjunct and \Cref{thm:qe-real} to the second conjunct, we get
\begin{align*}
&\ram (\bm{x}^\mathrm{int},\bm{v}_1^\mathrm{int},\bm{v}_2^\mathrm{int}),(\bm{y}^\mathrm{int},\bm{w}_1^\mathrm{int},\bm{w}_2^\mathrm{int}) \colon
\alpha_i(\bm x^\mathrm{int},\bm y^\mathrm{int},\bm{v}_1^\mathrm{int}+\bm{w}_2^\mathrm{int},\bm c^\mathrm{int}) 
\wedge \bm x^\mathrm{int} \ne \bm y^\mathrm{int} \ \wedge\\
&\ram (\bm{x}^\mathrm{real},\bm{v}_1^\mathrm{real},\bm{v}_2^\mathrm{real}),(\bm{y}^\mathrm{real},\bm{w}_1^\mathrm{real},\bm{w}_2^\mathrm{real})
\colon \beta_i(\bm x^\mathrm{real},\bm y^\mathrm{real},\bm v_1^\mathrm{real}+\bm w_2^\mathrm{real},\bm c^\mathrm{real}) 
\wedge \bm x^\mathrm{real} \ne \bm y^\mathrm{real}.
\end{align*}
This implies that $\bm c$ satisfies the right-hand formula of the theorem by adding the two cliques componentwise.
 
\section{Ramsey quantifiers in Presburger arithmetic}\label{sec:presburger}
In this section, we describe our procedure to eliminate the Ramsey quantifier if applied to an existential Presburger formula.

It is not difficult to construct Presburger-definable relations that have
infinite cliques, but none that are definable in Presburger arithmetic.  For
example, consider the relation $R=\{(x,y)\in\N\times\N \mid y\ge 2x\}$. Then
every infinite clique $A = \{a_0,a_1,\ldots,\}$ with $a_0\le a_1\le a_2\le\cdots$
must satisfy $a_i\ge 2^{i}\cdot a_0$ for every $i\ge 1$ and thus cannot be
ultimately periodic (i.e., there is no $n,k \in \N$ such that for all $a \ge n$, we have $a \in A$ if and only if $a+k \in A$).
Since a subset of $\N$ is Presburger-definable if and only if it is ultimately periodic, 
it follows that $A$ is not Presburger-definable.
Nevertheless, we show the following: 
\begin{theorem}\label{main-presburger}
Given an existential Presburger formula $\varphi(\bm x,\bm y,\bm z)$, 
we can construct in polynomial time an existential
Presburger formula of linear size that is equivalent to
$\ram \bm{x},\bm{y}\colon \varphi(\bm{x},\bm{y},\bm{z})$.
\end{theorem}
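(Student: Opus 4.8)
The plan is to reduce, with only a linear blow-up, to a quantifier-free matrix and then to capture the existence of an infinite clique by the existence of a single arithmetic progression that contains a clique as a subsequence. First I would bring $\varphi$ into prenex form $\exists\bm w\colon\varphi_0(\bm x,\bm y,\bm w,\bm z)$ with $\varphi_0$ quantifier-free and apply \Cref{thm:qe-presburger} to rewrite $\ram\bm x,\bm y\colon\varphi(\bm x,\bm y,\bm z)$ as $\ram\bm X,\bm Y\colon\psi(\bm X,\bm Y,\bm z)$, where $\bm X=(\bm x,\bm v_1,\bm v_2)$, $\bm Y=(\bm y,\bm w_1,\bm w_2)$ and $\psi=\varphi_0(\bm x,\bm y,\bm v_1+\bm w_2,\bm z)\wedge\bm x\neq\bm y$ is quantifier-free; this step is polynomial-time and only linearly larger, so from now on $\psi$ may be assumed quantifier-free, a Boolean combination of linear (in)equalities (and, if the input signature contained modulo predicates, of modulo constraints).

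The core claim is that, for every fixed valuation $\bm c$ of $\bm z$, the formula $\ram\bm X,\bm Y\colon\psi(\bm X,\bm Y,\bm c)$ holds if and only if there exist integer vectors $\bm a,\bm d$ such that the arithmetic progression $(\bm a+k\bm d)_{k\ge1}$ contains an infinite $\psi(\cdot,\cdot,\bm c)$-clique as a subsequence (this in particular forces $\bm d$ to be nonzero on the $\bm x$-components, so that the clique is pairwise distinct). The ``if'' direction is immediate. For ``only if'', I would take an arbitrary infinite $\psi$-clique $(\bm q_i)_{i\ge1}$ and, by Ramsey's theorem, thin it out to a subsequence on which (a) every atom of $\psi$ has a constant truth value over all pairs, and (b) every linear form $\bm r^\top\bm q_i$ occurring as the $\bm X$- or $\bm Y$-part of an atom is monotone in $i$; then I would set $\bm a:=\bm q_1$ and $\bm d:=\bm q_2-\bm q_1$.

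To justify this choice and to construct the output, I would analyze the substitution $\bm X\mapsto\bm a+i\bm d$, $\bm Y\mapsto\bm a+j\bm d$: it turns each atom into a constraint linear in $i$ and $j$, whose $i$-coefficient is $L=\bm r^\top\bm d$, whose $j$-coefficient is $M=\bm s^\top\bm d$, and whose constant part $C$ equals $(\bm r+\bm s)^\top\bm a$ plus a fixed linear term in $\bm z$, where $\bm r,\bm s$ are the coefficient vectors of $\bm X,\bm Y$ in the atom; all of $L,M,C$ are linear in $\bm a,\bm d,\bm z$ and of size $O(|\textrm{atom}|)$. I would then pin down, purely from the signs of $L,M,C$, a quantifier-free condition $\mathrm{Good}$ characterizing the existence of a strictly increasing sequence of positive integers $(k_i)$ satisfying the constraint for all pairs $i<j$; for a strict inequality $Li+Mj+C<0$ this comes out to $M<0\vee(M=0\wedge L<0)\vee(M=0\wedge L=0\wedge C<0)$, and negating the constraint negates its $\mathrm{Good}$. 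The decisive observation is that \emph{every} ``sufficiently large and sufficiently fast-growing'' increasing sequence realizes, simultaneously for all atoms, the value $\mathrm{Good}$ on all pairs, because each per-atom requirement is either a lower bound on $k_1$ or a demand that $k_{i+1}$ exceed some fixed affine function of $k_i$, and finitely many such demands are jointly satisfiable. Hence the progression $(\bm a+k\bm d)_k$ carries a $\psi$-clique as a subsequence exactly when $\psi$ becomes true after replacing each atom by its $\mathrm{Good}$ condition; and the Ramsey-thinning above forces precisely that value onto every atom for the choice $\bm a=\bm q_1,\bm d=\bm q_2-\bm q_1$ (using (a), (b), and that an atom with $L=M=0$ is constant along the clique with value $C$), which closes the claim.

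Combining these, $\ram\bm X,\bm Y\colon\psi(\bm X,\bm Y,\bm z)$ is equivalent to
\[
\exists\bm a,\bm d\colon\ \psi\bigl[\text{every atom}\mapsto\text{its }\mathrm{Good}\text{ condition}\bigr],
\]
an existential Presburger formula obtained from $\psi$ by replacing each atom with a fixed Boolean combination of linear inequalities of size proportional to that atom (and in which the conjunct $\bm x\neq\bm y$ becomes the nonzeroness condition on the $\bm x$-part of $\bm d$); hence it is of linear size and computable in polynomial time. A surviving modulo atom $Li+Mj+C\equiv_e0$ is treated analogously by additionally forcing all $k_i$ to share a residue modulo the product of the moduli, so that realizability reduces to congruence conditions on a freshly $\exists$-quantified offset of the progression --- still within linear size. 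I expect the main obstacle to be the two middle steps in combination: showing that the stability and monotonicity extracted from an \emph{arbitrary} clique via Ramsey's theorem are enough to reproduce the same atom-truth pattern along a subsequence of a single arithmetic progression, and verifying the joint satisfiability of the per-atom growth requirements; the case distinctions defining $\mathrm{Good}$ and the size and complexity bookkeeping should be routine.
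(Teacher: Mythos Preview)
Your proposal is correct and close in spirit to the paper's argument, but the bookkeeping is genuinely different and worth contrasting. Both proofs start identically: apply \cref{thm:qe-presburger} to make $\psi$ quantifier-free, and then argue that $\ram\bm X,\bm Y\colon\psi$ holds iff some arithmetic progression $(\bm a+k\bm d)_k$ carries a $\psi$-clique as a subsequence. The divergence is in how this last condition is turned into an existential formula. The paper introduces, besides $(\bm a_0,\bm a)$, a \emph{profile} $\bm p\in\Z_\omega^{2n}$ recording for each inequality atom the sup of $\bm r_i^\top\bm a_k$ and the liminf of $\bm s_i^\top\bm a_k$ (either an integer or $\omega$), together with Boolean selectors $q_i^<,q_j^\approx$ picking one conjunction of atoms; \cref{lem:admissible-compatible} and \cref{lem:compatible} then reduce the clique to admissibility of $\bm p$ plus linear conditions on $(\bm a_0,\bm a,\bm p)$. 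You bypass both layers by observing that along any sufficiently fast-growing subsequence of $(\bm a+k\bm d)_k$, each atom $\alpha$ stabilizes to a fixed value $\mathrm{Good}_\alpha(\bm a,\bm d,\bm z)$, and---crucially---that $\mathrm{Good}_\alpha$ and $\mathrm{Good}_{\neg\alpha}$ are complements. This lets you substitute $\mathrm{Good}_\alpha$ for $\alpha$ \emph{inside} the Boolean structure of $\psi$, eliminating both the selectors and the profile variables. What the paper buys is a more explicit semantic account (the sup/liminf interpretation of the $p_i$), which generalizes cleanly to the real case in \cref{sec:reals}; what you buy is a shorter output formula and a more direct correctness argument. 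For modulo atoms the two treatments are minor variants: the paper forces $\bm u_j^\top\bm d\equiv_{e_j}0\equiv_{e_j}\bm v_j^\top\bm d$ so that the atom is constant along the whole progression, whereas you force all $k_i$ to share a residue so it is constant along the chosen subsequence; either way the size stays linear.
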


We first assume that $\varphi$ is a conjunction of the form
\begin{equation} \bigwedge_{i=1}^n \bm{r}_i^\top \bm{x} < \bm{s}_i^\top\bm{y} + \bm{t}_i^\top\bm{z}+h_i ~~\wedge~~ \bigwedge_{j=1}^m \bm{u}_j^\top \bm{x}\approx_{e_j}^j \bm{v}_j^\top\bm{y}+\bm{w}_j^\top\bm{z}+d_j \label{presburger-variable-free}\end{equation}
where $\approx_{e_j}^j \in \{\equiv_{e_j},\not\equiv_{e_j}\}$. It should be noted that since \cref{thm:qe-mixed} allows us to eliminate any existential quantifier under the Ramsey quantifier without introducing modulo constraints, it would even suffice to treat the case where $\varphi$ has no modulo constraints. However, in practice it might be useful to be able to treat modulo constraints without first trading them in for existential quantifiers. For this reason, we describe the translation in the presence of modulo constraints.

\subsection*{Cliques in terms of profiles}
Our goal is to construct an existential Presburger formula $\varphi'(\bm{z})$
so that $\varphi'(\bm{c})$ holds if and only if there exists an infinite
sequence $\bm{a}_1,\bm{a}_2,\ldots$ of pairwise distinct vectors for which
$\varphi(\bm{a}_i,\bm{a}_j,\bm{c})$ for every $i<j$. As mentioned above, it is
possible that such a sequence exists, but that none of them is definable in
Presburger arithmetic. Therefore, our first step is to modify the condition
``$\varphi(\bm{a}_i,\bm{a}_j,\bm{c})$ for $i<j$'' into a different condition
such that (i)~the new condition is equivalent in terms of existence of a
sequence and (ii)~the new condition can always be satisfied by an arithmetic
progression. 

To illustrate the idea, suppose $\varphi(\bm{x},\bm{y},\bm{z})$ says that
$y_1>2\cdot x_1 \wedge \psi(\bm{x})$ for some Presburger formula $\psi$.
As mentioned above, any directed clique for $\varphi$ must grow exponentially
in the first component.  However, such a directed clique exists if and only if
there exists a sequence $\bm{a}_1,\bm{a}_2,\ldots$ such that
$\psi(\bm{a}_1),\psi(\bm{a}_2),\ldots$ and the sequence of numbers
$a_1,a_2,\ldots$ in the first components of $\bm a_1,\bm a_2,\dots$ grows unboundedly: Clearly, any directed
clique for $\varphi$ must satisfy this. Conversely, a sequence satisfying the
unboundedness condition must have a subsequence with $a_j>2\cdot a_i$ for $i<j$.

These modified conditions on sequences are based on the notion of profiles.
Essentially, a profile captures how in a sequence $\bm{a}_1,\bm{a}_2,\ldots$ the values $\bm{r}_i^\top\bm{a}_k$ and $\bm{s}_i^\top\bm{a}_k+\bm{t}_i^\top\bm{c}+h_i$ evolve.
A \emph{profile (for $\varphi$)} is a vector in $\Z_\omega^{2n}$ where $\Z_\omega := \Z \cup \{\omega\}$. 
Suppose $\bm{p}=(p_1,\ldots,p_{2n})$. Then value $p_{2i-1}$ being an integer means that $\bm{r}_i^\top\bm{a}_1,\bm{r}_i^\top\bm{a}_2,\ldots$ is bounded from above by $p_{2i-1}$. If $p_{2i-1}$ is $\omega$, then the sequence $\bm{r}_i^\top\bm{a}_1,\bm{r}_i^\top\bm{a}_2,\ldots$ tends to infinity. Similarly, even-indexed entries $p_{2i}$ describe the evolution of the sequence $\bm{s}_i^\top\bm{a}_1+\bm{t}_i^\top\bm{c}+h_i,\bm{s}_i^\top\bm{a}_2+\bm{t}_i^\top\bm{c}+h_i,\ldots$.

Let us make this precise. If $\bm{p}$ is a profile and $\bm{c}$ is a vector over $\Z$, then a sequence
$\bm{a}_1,\bm{a}_2,\ldots$ of pairwise distinct vectors over $\Z$ is \emph{compatible with $\bm{p}$ for $\bm{c}$} if for every $k<\ell$, we have $\bm{u}_j^\top\bm{a}_k\approx_{e_j}^j \bm{v}_j^\top\bm{a}_\ell+\bm{w}_j^\top\bm{c}+d_j$ and 
\begin{align} \sup\{\bm{r}_i^\top\bm{a}_k \mid k=1,2,\ldots \}\le p_{2i-1}, && p_{2i}\le\liminf \{\bm{s}_i^\top\bm{a}_k+\bm{t}_i^\top\bm{c}+h_i\mid k=1,2,\ldots\} \label{presburger-compatibility}.\end{align}
A profile
$\bm{p}=(p_1,\ldots,p_{2n})$ is \emph{admissible} if for every $i\in[1,n]$, we
have $p_{2i-1}<p_{2i}$ or $p_{2i}=\omega$. 
\begin{lemma}\label{lem:admissible-compatible}
Let $\bm{c}$ be a vector over $\Z$. Then $\ram
\bm{x},\bm{y}\colon\varphi(\bm{x},\bm{y},\bm{c})$ if and only if there exists
an admissible profile $\bm{p}\in\Z^{2n}_\omega$ such that there is a sequence
compatible with $\bm{p}$ for $\bm{c}$.
\end{lemma}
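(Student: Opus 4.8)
The plan is to prove both directions of the equivalence in \cref{lem:admissible-compatible}, with the forward direction ($\ram$ implies existence of an admissible compatible profile) being routine and the backward direction (from an admissible compatible profile, build an infinite directed $\varphi$-clique) being the real content.

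For the forward direction, suppose $\ram\bm x,\bm y\colon\varphi(\bm x,\bm y,\bm c)$, witnessed by a sequence $\bm a_1,\bm a_2,\ldots$ of pairwise distinct vectors with $\varphi(\bm a_k,\bm a_\ell,\bm c)$ for all $k<\ell$. For each $i\in[1,n]$ I would define $p_{2i-1}:=\sup_k\bm r_i^\top\bm a_k$ if this supremum is finite (it is then an integer, as a supremum of a bounded set of integers) and $p_{2i-1}:=\omega$ otherwise; similarly set $p_{2i}:=\liminf_k(\bm s_i^\top\bm a_k+\bm t_i^\top\bm c+h_i)$ if bounded below, else $\omega$. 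Compatibility of the sequence with $\bm p$ for $\bm c$ is immediate from these definitions together with the fact that the modulo constraints $\bm u_j^\top\bm a_k\approx_{e_j}^j\bm v_j^\top\bm a_\ell+\bm w_j^\top\bm c+d_j$ already hold for $k<\ell$ (being conjuncts of $\varphi$). For admissibility: for each $i$ and each $k<\ell$ we have $\bm r_i^\top\bm a_k<\bm s_i^\top\bm a_\ell+\bm t_i^\top\bm c+h_i$; fixing $k$ and letting $\ell$ range gives $\bm r_i^\top\bm a_k\le\liminf_\ell(\bm s_i^\top\bm a_\ell+\cdots)$, and then taking the sup over $k$ gives $p_{2i-1}\le p_{2i}$ (with the convention that this holds trivially if $p_{2i}=\omega$). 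To get the strict inequality $p_{2i-1}<p_{2i}$ when $p_{2i}\ne\omega$, I would argue that if $p_{2i-1}=p_{2i}=:M$ then $\bm r_i^\top\bm a_k\le M$ for all $k$ while $\bm s_i^\top\bm a_\ell+\cdots$ gets arbitrarily close to $M$ from above (in fact, being integer-valued, equals $M$ infinitely often); passing to a subsequence where $\bm r_i^\top\bm a_k$ attains its sup $M$ (it must, being an integer sup of integers) for some fixed $k_0$, and picking $\ell>k_0$ with $\bm s_i^\top\bm a_\ell+\cdots=M$, contradicts $\bm r_i^\top\bm a_{k_0}<\bm s_i^\top\bm a_\ell+\cdots$. (One has to be slightly careful: the sup of a bounded-above set of integers need not be attained, so the cleaner route is to note that $\bm r_i^\top\bm a_k\le p_{2i-1}$ for all $k$ and $\bm s_i^\top\bm a_\ell+\cdots<p_{2i}$ only finitely often impossible — instead $\bm s_i^\top\bm a_\ell+\cdots\ge p_{2i}$ cofinitely, and if $p_{2i-1}=p_{2i}$ one gets $\bm r_i^\top\bm a_k\le p_{2i}\le\bm s_i^\top\bm a_\ell+\cdots$ but we need \emph{strict}, so I would instead observe the liminf equals some value attained infinitely often and derive the contradiction with the strict inequality in $\varphi$ directly.)

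For the backward direction, suppose $\bm p$ is admissible and $\bm a_1,\bm a_2,\ldots$ is compatible with $\bm p$ for $\bm c$. I want to extract a subsequence that is an infinite directed $\varphi$-clique. By definition of compatibility the modulo constraints already hold for every $k<\ell$, so only the strict inequalities $\bm r_i^\top\bm a_k<\bm s_i^\top\bm a_\ell+\bm t_i^\top\bm c+h_i$ need to be arranged. Fix $i$. If $p_{2i}=\omega$, then $\bm s_i^\top\bm a_\ell+\bm t_i^\top\bm c+h_i\to\infty$, so for any fixed $k$ all but finitely many $\ell$ satisfy the inequality; if $p_{2i}\in\Z$, admissibility gives $p_{2i-1}<p_{2i}$ (so $p_{2i-1}\in\Z$ too, hence $p_{2i-1}\le p_{2i}-1$), and compatibility gives $\bm r_i^\top\bm a_k\le p_{2i-1}\le p_{2i}-1$ for all $k$, while $\bm s_i^\top\bm a_\ell+\bm t_i^\top\bm c+h_i\ge p_{2i}$ for all but finitely many $\ell$ (since its liminf is $\ge p_{2i}$); for those $\ell$ we get $\bm r_i^\top\bm a_k\le p_{2i}-1<p_{2i}\le\bm s_i^\top\bm a_\ell+\bm t_i^\top\bm c+h_i$ for \emph{every} $k$. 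Taking the finitely many $i$ together, there is a threshold $L_0$ such that for every $k$ and every $\ell>L_0$ with $\ell>k$, all the inequalities hold — wait, in the $\omega$-case the threshold on $\ell$ may depend on $k$. To handle this cleanly I would build the subsequence greedily: pick indices $k_1<k_2<\cdots$ one at a time, where at stage $t$ I choose $k_t$ large enough that for all $s<t$ and all $i$ in the $\omega$-case, $\bm s_i^\top\bm a_{k_t}+\bm t_i^\top\bm c+h_i>\bm r_i^\top\bm a_{k_s}$ — possible because for each fixed earlier index $k_s$ the $\omega$-components of the target sequence eventually exceed $\bm r_i^\top\bm a_{k_s}$. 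The non-$\omega$ components need no care beyond dropping the initial finitely many indices where the liminf bound fails. The resulting subsequence $\bm a_{k_1},\bm a_{k_2},\ldots$ then satisfies $\varphi(\bm a_{k_s},\bm a_{k_t},\bm c)$ for all $s<t$, and the $\bm a_{k_t}$ are pairwise distinct since the original sequence was.

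The main obstacle is bookkeeping in the backward direction: the mixture of $\omega$-entries (needing a diagonal/greedy selection of the subsequence, since the required threshold on $\ell$ depends on $k$) with finite entries (needing only to discard an initial segment) must be combined uniformly over all $2n$ coordinates, and one must double-check that the strict inequalities are genuinely uniform in $k$ in the finite case — which is exactly where admissibility's \emph{strictness} $p_{2i-1}<p_{2i}$ is used (a non-strict version would only give $\le$, insufficient for the strict atom in $\varphi$). A secondary subtlety, noted parenthetically above, is that a supremum of a bounded set of integers is an integer but need not be attained, so in the forward direction the strictness of admissibility should be derived from the liminf side (which, being an integer liminf, \emph{is} attained cofinitely often, or at least approached) rather than from the sup side.
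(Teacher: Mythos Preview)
Your proposal is correct and follows essentially the same route as the paper: in the forward direction define $p_{2i-1}=\sup_k\bm r_i^\top\bm a_k$ and $p_{2i}=\liminf_\ell(\bm s_i^\top\bm a_\ell+\bm t_i^\top\bm c+h_i)$, and in the backward direction drop an initial segment for the indices $i$ with $p_{2i}\in\Z$ and build a subsequence greedily for those with $p_{2i}=\omega$. Your parenthetical worry is unfounded---a bounded-above nonempty set of integers always attains its supremum---and in fact over $\Z$ the strict inequality $\bm r_i^\top\bm a_k<\bm s_i^\top\bm a_\ell+\bm t_i^\top\bm c+h_i$ for all $k<\ell$ immediately yields $p_{2i-1}\le p_{2i}-1<p_{2i}$ whenever $p_{2i}\in\Z$, so no separate case analysis is needed for admissibility.
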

\begin{proof}
We begin with the ``only if'' direction. 
To ease notation, we write $f_i(\bm{x})=\bm{r}_i^\top\bm{x}$
and $g_i(\bm{x})=\bm{s}_i^\top\bm{x}+\bm{t}_i^\top\bm{c}+h_i$ for $i\in[1,n]$.
Suppose $\bm{a}_1,\bm{a}_2,\ldots$ is
a directed clique witnessing $\ram
\bm{x},\bm{y}\colon\varphi(\bm{x},\bm{y},\bm{c})$. First, we may assume that if
for some $i\in[1,n]$, the sequence $\{f_i(\bm{a}_k) \mid
k=1,2,\ldots\}$ is bounded from above, then for its maximum $M$, we have
$M<g_i(\bm{a}_k)$ for every $k\ge 1$. If this is not
the case, we can achieve it by removing an initial segment of
$\bm{a}_1,\bm{a}_2,\ldots$.
Now we define the profile $\bm{p}=(p_1,p_2,\ldots,p_{2n-1},p_{2n})$ as
\begin{align*} p_{2i-1} = \sup\{f_i(\bm{a}_k) \mid k=1,2,\ldots \}, && p_{2i}=\liminf \{g_i(\bm{a}_k)\mid k=1,2,\ldots\}. \end{align*}
Observe that $p_{2i}$ cannot be $-\omega$ and thus  belongs to $\Z_{\omega}$:
This is because the set
$\{g_i(\bm{a}_k)\mid k\ge 1\}$ is
bounded from below (by
$\min\{f_i(\bm{a}_1),g_i(\bm{a}_1)\}$).
Then $\bm{p}$ is admissible: Otherwise, we would have $p_{2i-1}\ge p_{2i}$ and $p_{2i}\in\Z$, implying that
there are $k<\ell$ with $f_i(\bm{a}_k)\ge
p_{2i}=g_i(\bm{a}_\ell)$, which contradicts the
fact that $\bm{a}_1,\bm{a}_2,\ldots$ witnesses $\ram
\bm{x},\bm{y}\colon\varphi(\bm{x},\bm{y},\bm{c})$. Moreover, by definition of
$\bm{p}$, the sequence $\bm{a}_1,\bm{a}_2,\ldots$ is clearly compatible with
$\bm{p}$ for $\bm{c}$.

Let us now prove the ``if'' direction. Let $\bm{p}\in\Z_\omega^{2n}$ be an
admissible profile and $\bm{a}_1,\bm{a}_2,\ldots$ be a sequence compatible with
$\bm{p}$ for $\bm{c}$. Then, we know that for any $k<\ell$, we have
$\bm{u}_j^\top\bm{a}_k\approx_{e_j}^j\bm{v}_j^\top\bm{a}_\ell+\bm{w}_j\bm{c}+d_j$. 
We claim that we can select a subsequence of
$\bm{a}_1,\bm{a}_2,\ldots$ such that, for every $i\in[1,n]$, we have $f_i(\bm{a}_k)<g_i(\bm{a}_\ell)$ for every $k<\ell$.
It suffices to do this for each $i=1,\ldots,n$ individually, because if for
some $i\in[1,n]$, we have $f_i(\bm{a}_k)<g_i(\bm{a}_\ell)$ for every $k<\ell$,
then this is still the case for any subsequence. Likewise, picking an infinite
subsequence does not spoil the property of being compatible with $\bm{p}$ for
$\bm{c}$.

Consider some $i\in[1,n]$. We distinguish two cases, namely whether
$p_{2i}\in\Z$ or $p_{2i}=\omega$. First, suppose $p_{2i}\in\Z$. Then, since
$\bm{p}$ is admissible, we have $p_{2i-1}<p_{2i}$. Now compatibility implies
that $p_{2i-1}<p_{2i}\le g_i(\bm{a}_\ell)$ for almost all $\ell$. Hence, by
removing some initial segment of our sequence, we can ensure that
$f_i(\bm{a}_k)<g_i(\bm{a}_\ell)$ for every $k<\ell$.

Now suppose $p_{2i}=\omega$. We successively choose the elements of a subsequence
$\bm{a}'_1,\bm{a}'_2,\ldots$ of $\bm{a}_1,\bm{a}_2,\ldots$ such that
$f_i(\bm{a}'_k)<g_i(\bm{a}'_\ell)$ for any $k<\ell$. Suppose we have already
chosen $\bm{a}'_1,\ldots,\bm{a}'_h$ for some $h\ge 1$.  Then the set
$\{f_i(\bm{a}'_k)\mid k\in[1,h]\}$ is finite and thus bounded by some $M\in\N$.
By compatibility of $\bm{a}_1,\bm{a}_2,\ldots$, there exist infinitely many
$\ell\in\N$ with $M<g_i(\bm{a}_\ell)$. This allows us to choose $\bm{a}'_{h+1}$
to extend our sequence. 

This completes the construction of our clique witnessing $\ram
\bm{x},\bm{y}\colon \varphi(\bm{x},\bm{y},\bm{c})$.
\end{proof}

\subsection*{Compatibility in terms of matrices}
Our next step is to express the existence of a sequence compatible with
$\bm{p}$ for $\bm{c}$ in terms of certain inequalities. To this end, we
define two matrices $\bm{A}_{\bm{p},\bm{c}}$ and $\bm{B}_{\bm{p}}$ and a vector
$\bm{b}_{\bm{p},\bm{c}}$. Here, $\bm{A}_{\bm{p},\bm{c}}\bm{x}\ge \bm{b}_{\bm{p},\bm{c}}$ will
express the compatibility conditions involving $p_{2i-1}$ and $p_{2i}$ that
are integers. 
Thus, we define $\bm{A}_{\bm{p},\bm{c}}$ and $\bm{b}_{\bm{p},\bm{c}}$ by
describing the system of inequality
$\bm{A}_{\bm{p},\bm{c}}\bm{x}\ge\bm{b}_{\bm{p},\bm{c}}$.  For every $i\in[1,n]$
with $p_{2i-1}\in\Z$, we add the inequality $\bm{r}_i^\top\bm{x}\le p_{2i-1}$.
Moreover, for every $i\in[1,n]$ with $p_{2i}\in\Z$, we add the inequality
$p_{2i}\le \bm{s}_i^\top\bm{x}+\bm{t}_i^\top\bm{c}+h_i$.

Moreover, $\bm{B}_{\bm{p}}$ will be used to express the unboundedness condition
on the right side of \cref{presburger-compatibility} if $p_{2i}=\omega$. Thus,
for every $i\in[1,n]$ with $p_{2i}=\omega$, we add the row $\bm{s}_i^\top$ to
$\bm{B}_{\bm{p}}$. We say that a function $f\colon X\to\Z^\ell$ is
\emph{simultaneously unbounded} on a sequence $x_1,x_2,\ldots\in X$ if for
every $k\in\N$, we have $f(x_j)\ge (k,\ldots,k)$ for almost all $j$. Now
observe the following:
\begin{lemma}\label{presburger-compatibility-matrices}
	Let $\bm{c}$ be a vector and $\bm{p}\in\Z^{2n}_\omega$ be a profile.
	Then there exists a sequence that is compatible with $\bm{p}$
	for $\bm{c}$ if and only if there exists a sequence $\bm{a}_1,\bm{a}_2,\ldots$ of pairwise distinct vectors such that
	(i)~$\bm{u}_j^\top\bm{a}_k\approx_{e_j}^j\bm{v}_j^\top\bm{a}_\ell+\bm{w}_j^\top\bm{c}+d_j$
	for every $j\in[1,m]$ and $k < \ell$ and (ii)
	~$\bm{A}_{\bm{p},\bm{c}}\bm{a}_k\ge\bm{b}_{\bm{p},\bm{c}}$ for every
	$k\ge 1$ and (iii)~$\bm{B}_{\bm{p}}$ is simultaneously unbounded on
	$\bm{a}_1,\bm{a}_2,\ldots$.
\end{lemma}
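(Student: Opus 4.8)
The plan is to prove \Cref{presburger-compatibility-matrices} by unwinding the definition of compatibility and observing that it decomposes into exactly the three conditions listed, with the only nontrivial point being the treatment of $\sup$ and $\liminf$ versus ``almost all''. First I would recall that a sequence $\bm{a}_1,\bm{a}_2,\ldots$ of pairwise distinct vectors is compatible with $\bm{p}$ for $\bm{c}$ iff the modulo conditions in (i) hold and the two inequalities in \eqref{presburger-compatibility} hold, namely $\sup\{\bm{r}_i^\top\bm{a}_k\mid k\ge 1\}\le p_{2i-1}$ and $p_{2i}\le\liminf\{\bm{s}_i^\top\bm{a}_k+\bm{t}_i^\top\bm{c}+h_i\mid k\ge 1\}$ for each $i\in[1,n]$. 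Condition (i) is literally identical on both sides, so the work is to show that the remaining inequalities are equivalent to conditions (ii) and (iii), possibly after passing to a subsequence (which preserves pairwise distinctness, the modulo conditions, and condition (iii), and only strengthens (ii)).

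Next I would split the index set $[1,n]$ according to whether $p_{2i-1}$ and $p_{2i}$ are finite or $\omega$. For an index $i$ with $p_{2i-1}\in\Z$, the condition $\sup_k \bm{r}_i^\top\bm{a}_k\le p_{2i-1}$ is equivalent to $\bm{r}_i^\top\bm{a}_k\le p_{2i-1}$ for all $k$, which is exactly the row of $\bm{A}_{\bm{p},\bm{c}}\bm{x}\ge\bm{b}_{\bm{p},\bm{c}}$ that we added for this $i$. (Note $p_{2i-1}=-\omega$ is impossible when a compatible sequence exists, and $p_{2i-1}=\omega$ imposes no constraint and contributes no row.) For an index $i$ with $p_{2i}\in\Z$, the condition $p_{2i}\le\liminf_k(\bm{s}_i^\top\bm{a}_k+\bm{t}_i^\top\bm{c}+h_i)$ says that $\bm{s}_i^\top\bm{a}_k+\bm{t}_i^\top\bm{c}+h_i\ge p_{2i}$ for all but finitely many $k$; after deleting an initial segment of the sequence (legitimate, as noted above) this becomes the pointwise inequality $p_{2i}\le\bm{s}_i^\top\bm{a}_k+\bm{t}_i^\top\bm{c}+h_i$ for every $k$, which is precisely the other type of row of $\bm{A}_{\bm{p},\bm{c}}\bm{x}\ge\bm{b}_{\bm{p},\bm{c}}$. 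Finally, for an index $i$ with $p_{2i}=\omega$, the condition $\liminf_k(\bm{s}_i^\top\bm{a}_k+\bm{t}_i^\top\bm{c}+h_i)=\omega$ is exactly the statement that $k\mapsto\bm{s}_i^\top\bm{a}_k+\bm{t}_i^\top\bm{c}+h_i$ tends to $+\infty$; collecting these coordinates (one per such $i$) into the vector-valued map $\bm{a}\mapsto\bm{B}_{\bm{p}}\bm{a}$, this is simultaneous unboundedness of $\bm{B}_{\bm{p}}$ on the sequence (the constant shifts $\bm{t}_i^\top\bm{c}+h_i$ are irrelevant for unboundedness, which is why $\bm{B}_{\bm{p}}$ does not depend on $\bm{c}$). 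Conversely, given a sequence satisfying (i)--(iii), the finite rows directly give $\sup_k\bm{r}_i^\top\bm{a}_k\le p_{2i-1}$ and $\liminf_k(\bm{s}_i^\top\bm{a}_k+\bm{t}_i^\top\bm{c}+h_i)\ge p_{2i}$ for the finite coordinates, while simultaneous unboundedness of $\bm{B}_{\bm{p}}$ gives $\liminf=\omega\ge p_{2i}$ for the $\omega$-coordinates, and the $\sup\le\omega=p_{2i-1}$ case is vacuous; hence the original sequence is compatible with $\bm{p}$ for $\bm{c}$.

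The main obstacle, such as it is, is bookkeeping rather than mathematics: one must be careful that passing to a subsequence or deleting an initial segment in the forward direction is harmless (it preserves pairwise distinctness and the modulo constraints in (i) because those are already ``for all $k<\ell$'', it preserves unboundedness in (iii), and it can only make the pointwise inequalities in (ii) easier to achieve), and that in the backward direction no such modification is needed since (ii) is already pointwise. I would also note explicitly that the equivalence ``$\sup_k\bm{r}_i^\top\bm{a}_k\le t \iff \bm{r}_i^\top\bm{a}_k\le t$ for all $k$'' is immediate from the definition of supremum over $\Z_\omega$, and that ``$\liminf_k x_k\ge t$ for $t\in\Z$'' is equivalent to ``$x_k\ge t$ for almost all $k$'' — these are the only two elementary facts about $\sup$/$\liminf$ used. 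With these observations in place the proof is a routine case distinction on the coordinates of $\bm{p}$, matching each case to the corresponding row of $\bm{A}_{\bm{p},\bm{c}}$, $\bm{b}_{\bm{p},\bm{c}}$, or $\bm{B}_{\bm{p}}$.
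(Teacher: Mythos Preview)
Your proposal is correct and matches the paper's treatment: the paper states this lemma with the phrase ``Now observe the following'' and gives no proof, treating it as a direct unwinding of the definitions of compatibility, $\bm{A}_{\bm{p},\bm{c}}$, $\bm{b}_{\bm{p},\bm{c}}$, and $\bm{B}_{\bm{p}}$. Your case analysis on whether each $p_{2i-1}$, $p_{2i}$ is an integer or $\omega$, together with the observation that one may drop an initial segment to turn the $\liminf$ bound into a pointwise bound, is exactly the routine verification the paper omits.
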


\subsection*{Arithmetic progressions}
The last key step is to show that there exists a sequence compatible with
$\bm{p}$ if and only if there exists such a sequence of the form
$\bm{a}_0+\bm{a},\bm{a}_0+2\bm{a},\bm{a}_0+3\bm{a},\ldots$. This will allow us
to express existence of a sequence by the existence of suitable vectors
$\bm{a}_0$ and $\bm{a}$.
\begin{lemma}\label{lem:compatible}
	Let $\bm{c}$ be a vector and $\bm{p}\in\Z^{2n}_\omega$ be a profile.
	There exists a sequence compatible with $\bm{p}$ for $\bm{c}$ if and
	only if there are vectors $\bm{a}_0,\bm{a}$ over $\Z$ with $\bm{a} \ne \bm{0}$ such that for all $j \in [1,m]$,
	\begin{align*}
		\bm{A}_{\bm{p},\bm{c}}\bm{a}_0\ge\bm{b}_{\bm{p},\bm{c}},~\bm{A}_{\bm{p},\bm{c}}\bm{a}\ge \bm{0},  && \bm{B}_{\bm{p}}\bm{a}\gg\bm{0}, \\
	\bm{u}_j^\top\bm{a}_0\approx_{e_j}^j\bm{v}_j^\top(\bm{a}_0+\bm{a})+\bm{w}_j^\top\bm{c}+d_j, && \bm{u}_j^\top\bm{a}\equiv_{e_j} \bm{v}_j^\top\bm{a}\equiv_{e_j} 0. 
\end{align*}
\end{lemma}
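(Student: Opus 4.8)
The plan is to prove \Cref{lem:compatible} by establishing both directions, where the ``if'' direction is routine and the ``only if'' direction is the substantive one, handled via an application of a bound on the number of distinct ``columns'' (i.e.\ residue/threshold behaviors) combined with a counting argument.

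\medskip

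\noindent\textbf{Easy direction.} Suppose vectors $\bm{a}_0,\bm{a}$ over $\Z$ with $\bm{a}\ne\bm{0}$ satisfy the stated system. Consider the arithmetic progression $\bm{a}_k := \bm{a}_0 + k\bm{a}$ for $k\ge 1$. Since $\bm{a}\ne\bm{0}$, these vectors are pairwise distinct. For condition~(ii) of \Cref{presburger-compatibility-matrices}: from $\bm{A}_{\bm{p},\bm{c}}\bm{a}_0\ge\bm{b}_{\bm{p},\bm{c}}$ and $\bm{A}_{\bm{p},\bm{c}}\bm{a}\ge\bm{0}$ we get $\bm{A}_{\bm{p},\bm{c}}\bm{a}_k = \bm{A}_{\bm{p},\bm{c}}\bm{a}_0 + k\bm{A}_{\bm{p},\bm{c}}\bm{a}\ge\bm{b}_{\bm{p},\bm{c}}$ for all $k\ge 1$. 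For~(iii): since $\bm{B}_{\bm{p}}\bm{a}\gg\bm{0}$, the sequence $\bm{B}_{\bm{p}}\bm{a}_k = \bm{B}_{\bm{p}}\bm{a}_0 + k\bm{B}_{\bm{p}}\bm{a}$ grows to $+\infty$ in every component, hence $\bm{B}_{\bm{p}}$ is simultaneously unbounded on the sequence. For~(i): using $\bm{u}_j^\top\bm{a}\equiv_{e_j} 0$ and $\bm{v}_j^\top\bm{a}\equiv_{e_j} 0$, we have $\bm{u}_j^\top\bm{a}_k \equiv_{e_j} \bm{u}_j^\top\bm{a}_0$ and $\bm{v}_j^\top\bm{a}_\ell \equiv_{e_j} \bm{v}_j^\top\bm{a}_0 \equiv_{e_j} \bm{v}_j^\top(\bm{a}_0+\bm{a})$, so the congruence $\bm{u}_j^\top\bm{a}_k\approx_{e_j}^j\bm{v}_j^\top\bm{a}_\ell+\bm{w}_j^\top\bm{c}+d_j$ reduces to the assumed $\bm{u}_j^\top\bm{a}_0\approx_{e_j}^j\bm{v}_j^\top(\bm{a}_0+\bm{a})+\bm{w}_j^\top\bm{c}+d_j$ for every $k<\ell$. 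By \Cref{presburger-compatibility-matrices}, a compatible sequence exists.

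\medskip

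\noindent\textbf{Hard direction.} Conversely, suppose there is a sequence compatible with $\bm{p}$ for $\bm{c}$; by \Cref{presburger-compatibility-matrices} fix $\bm{a}_1,\bm{a}_2,\ldots$ of pairwise distinct vectors satisfying (i)--(iii). First, by Ramsey's theorem (restricting to a subsequence) we may assume that for each $j\in[1,m]$ the residue $\bm{u}_j^\top\bm{a}_k \bmod e_j$ is constant over all $k$, and likewise $\bm{v}_j^\top\bm{a}_k \bmod e_j$ is constant over all $k$; denote these constants so that $\bm{u}_j^\top\bm{a}_k \equiv_{e_j} u_j^*$ and $\bm{v}_j^\top\bm{a}_k\equiv_{e_j} v_j^*$. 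Condition~(i) with $k<\ell$ then says exactly that $u_j^* \approx_{e_j}^j v_j^* + \bm{w}_j^\top\bm{c}+d_j$ holds (this is now a fixed statement independent of $k,\ell$). The idea is then to pick two indices $k<\ell$ from the subsequence and set $\bm{a}_0 := \bm{a}_k$ and $\bm{a} := \bm{a}_\ell - \bm{a}_k$; since the $\bm{a}_i$ are pairwise distinct, $\bm{a}\ne\bm{0}$. We must check all five required properties. Property $\bm{A}_{\bm{p},\bm{c}}\bm{a}_0\ge\bm{b}_{\bm{p},\bm{c}}$ is immediate from~(ii). The congruences: $\bm{u}_j^\top\bm{a} = \bm{u}_j^\top\bm{a}_\ell - \bm{u}_j^\top\bm{a}_k \equiv_{e_j} 0$ and $\bm{v}_j^\top\bm{a}\equiv_{e_j} 0$ by constancy of residues; and $\bm{u}_j^\top\bm{a}_0 = \bm{u}_j^\top\bm{a}_k \equiv_{e_j} u_j^* \approx_{e_j}^j v_j^* + \bm{w}_j^\top\bm{c}+d_j \equiv_{e_j} \bm{v}_j^\top\bm{a}_k + \bm{w}_j^\top\bm{c}+d_j \equiv_{e_j} \bm{v}_j^\top(\bm{a}_0 + \bm{a}) + \bm{w}_j^\top\bm{c}+d_j$, as needed. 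The two inequality conditions $\bm{A}_{\bm{p},\bm{c}}\bm{a}\ge\bm{0}$ and $\bm{B}_{\bm{p}}\bm{a}\gg\bm{0}$ are the delicate ones: for a single pair $k<\ell$ there is no reason that $\bm{A}_{\bm{p},\bm{c}}(\bm{a}_\ell-\bm{a}_k)\ge\bm{0}$, let alone strict positivity of $\bm{B}_{\bm{p}}(\bm{a}_\ell-\bm{a}_k)$.

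\medskip

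\noindent\textbf{Overcoming the obstacle.} This is where I expect the main work. The natural fix is a second application of Ramsey / a Dickson-type argument on the vectors $\bm{A}_{\bm{p},\bm{c}}\bm{a}_k \in \Z^{d}$: restricting again to a subsequence, I would arrange that $\bm{A}_{\bm{p},\bm{c}}\bm{a}_1 \le \bm{A}_{\bm{p},\bm{c}}\bm{a}_2 \le \cdots$ coordinatewise (possible since each coordinate is a monotone subsequence can be extracted one coordinate at a time, or one invokes that any infinite sequence in $\Z^d$ has an infinite coordinatewise-nondecreasing subsequence — using that each of the finitely many coordinates is bounded below by the corresponding entry of $\bm{b}_{\bm{p},\bm{c}}$ via~(ii), plus Ramsey). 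Then $\bm{A}_{\bm{p},\bm{c}}(\bm{a}_\ell - \bm{a}_k) \ge \bm{0}$ for any $k<\ell$ in this subsequence. For the strict positivity $\bm{B}_{\bm{p}}(\bm{a}_\ell - \bm{a}_k)\gg\bm{0}$: here condition~(iii), simultaneous unboundedness of $\bm{B}_{\bm{p}}$ on the sequence, is exactly what we need. Having already passed to a subsequence on which $\bm{B}_{\bm{p}}\bm{a}_k$ is coordinatewise nondecreasing, simultaneous unboundedness forces $\bm{B}_{\bm{p}}\bm{a}_k \to \bm{\infty}$ in every coordinate, so pick $k := 1$ and then choose $\ell$ large enough that $\bm{B}_{\bm{p}}\bm{a}_\ell \gg \bm{B}_{\bm{p}}\bm{a}_1$, which gives $\bm{B}_{\bm{p}}(\bm{a}_\ell - \bm{a}_1)\gg\bm{0}$. (Care is needed because $\bm{B}_{\bm{p}}$ might be a submatrix of $\bm{A}_{\bm{p},\bm{c}}$-rows or overlap with them; but making the subsequence coordinatewise-nondecreasing for the stacked matrix $\binom{\bm{A}_{\bm{p},\bm{c}}}{\bm{B}_{\bm{p}}}$ simultaneously handles both.) Setting $\bm{a}_0 := \bm{a}_1$ and $\bm{a} := \bm{a}_\ell - \bm{a}_1$ for this $\ell$ then satisfies every conjunct, completing the ``only if'' direction. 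The subtlety to get right is ensuring all the Ramsey-subsequence passes (for the $m$ residue pairs and for coordinatewise monotonicity of the stacked matrix) are done \emph{before} fixing $k$ and $\ell$, and that none of conditions (i)--(iii) are spoiled by passing to a subsequence — which they are not, since each is a property quantified over all (pairs of) indices.
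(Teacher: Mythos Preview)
Your proposal is correct and follows essentially the same approach as the paper's proof. The paper proceeds identically: for the ``if'' direction it verifies that the arithmetic progression $\bm{a}_0+k\bm{a}$ satisfies conditions (i)--(iii) of \Cref{presburger-compatibility-matrices}; for the ``only if'' direction it first passes to a subsequence with constant residues of $\bm{u}_j^\top\bm{a}_k$ and $\bm{v}_j^\top\bm{a}_k$ modulo $e_j$, then (using that each coordinate of $\bm{A}_{\bm{p},\bm{c}}\bm{a}_k$ is bounded below) passes to a subsequence with $\bm{A}_{\bm{p},\bm{c}}\bm{a}_1\le\bm{A}_{\bm{p},\bm{c}}\bm{a}_2\le\cdots$, then to one with $\bm{B}_{\bm{p}}\bm{a}_1\ll\bm{B}_{\bm{p}}\bm{a}_2\ll\cdots$, and finally sets $\bm{a}_0:=\bm{a}_1$, $\bm{a}:=\bm{a}_2-\bm{a}_1$. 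The only cosmetic difference is that the paper extracts a subsequence that is \emph{strictly} increasing under $\bm{B}_{\bm{p}}$ (so any two consecutive elements work), whereas you keep it merely nondecreasing and then pick a single sufficiently large $\ell$; both are fine.
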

\begin{proof}
	We begin with the ``if'' direction. Suppose there are vectors
	$\bm{a}_0$ and $\bm{a}$ as described. Then we claim that the sequence
	$\bm{a}_1,\bm{a}_2,\ldots$ with $\bm{a}_k=\bm{a}_0+k\cdot \bm{a}$ is
	compatible with $\bm{p}$ for $\bm{c}$. We use
	\cref{presburger-compatibility-matrices} to show this. 
	First note that since $\bm{a} \ne \bm{0}$, the $\bm{a}_k$ are pairwise distinct.
	It is clear that
	the sequence satisfies conditions (i) and (ii) of
	\cref{presburger-compatibility-matrices}. Condition (iii) holds as
	well, because in the vector $\bm{B}_{\bm{p}}(k\cdot\bm{a})$, every entry
	is at least $k$. Thus, $\bm{B}_{\bm{p}}$ is simultaneously unbounded on
	$\bm{a}_1,\bm{a}_2,\ldots$.
	
	For the ``only if'' direction, suppose $\bm{a}_1,\bm{a}_2,\ldots$ is a
	sequence of pairwise distinct vectors that satisfies the conditions in
	\cref{presburger-compatibility-matrices}. Since there are only finitely many
	possible remainders modulo $e_j$ of the expressions $\bm{u}_j^\top\bm{a}_k$ and
	$\bm{v}_j^\top\bm{a}_k$, we can pick a subsequence such that for each
	$j\in[1,m]$, the maps $k\mapsto\bm{u}_j^\top \bm{a}_k$ and $k\mapsto
	\bm{v}_j^\top\bm{a}_k$ are constant modulo $e_j$.  In the second step, we
	notice that since $\bm{A_{p,c}}\bm{a}_k\ge\bm{b_{p,c}}$ for each $k\ge 1$, the
	sequence $\bm{A_{p,c}}\bm{a}_1,\bm{A_{p,c}}\bm{a}_2,\ldots$ cannot contain an infinite
	strictly descending chain in any component. Thus, by Ramsey's theorem, we may pick a subsequence
	so that $\bm{A_{p,c}}\bm{a}_1\le\bm{A_{p,c}}\bm{a}_2\le\cdots$. Note that
	passing to subsequences does not spoil the conditions of
	\cref{presburger-compatibility-matrices}.  Thus, $\bm{B}_{\bm{p}}$ is still
	simultaneously unbounded on $\bm{a}_1,\bm{a}_2,\ldots$. This allows us to pick
	a subsequence so that also
	$\bm{B}_{\bm{p}}\bm{a}_1\ll\bm{B}_{\bm{p}}\bm{a}_2\ll\cdots$.  Therefore, if we
	set $\bm{a}_0:=\bm{a}_1$ and $\bm{a}:=\bm{a}_2-\bm{a}_1$, then $\bm{a}_0$ and
	$\bm{a}$ are as desired.
\end{proof}

\subsection*{Construction of the formula}
We are now ready to prove \Cref{main-presburger} in the general case, i.e., $\varphi(\bm{x},\bm{y},\bm{z})$ is an arbitrary existential Presburger formula.
By \Cref{thm:qe-mixed} we can assume that $\varphi$ is quantifier-free.
By moving all negations inwards to the atoms and possibly negating those, we may further assume that $\varphi$ is a positive Boolean combination of inequality atoms 
$\alpha_i := \bm{r}_i^\top \bm{x} < \bm{s}_i^\top\bm{y} + \bm{t}_i^\top\bm{z}+h_i$ for $i \in [1,n]$ and
modulo constraint atoms $\beta_j := \bm{u}_j^\top \bm{x}\approx_{e_j}^j \bm{v}_j^\top\bm{y}+\bm{w}_j^\top\bm{z}+d_j$ 
with $\approx_{e_j}^j \in \{\equiv_{e_j}, \not\equiv_{e_j}\}$ for $j \in [1,m]$.
(Note that in Presburger arithmetic equality can be expressed by a conjunction of two strict inequalities.)

The key idea is to guess (using existentially quantified variables) a
subset of the atoms in $\varphi$ and check that 
(i)~satisfying those atoms makes $\varphi$ true and 
(ii)~for the conjunction of those atoms, there exists a directed
clique. 
Note that there are only finitely many conjunctions of atoms (from $\varphi$)
and $\varphi$ is equivalent to the disjunction over these 
conjunctions. Thus, by Ramsey's theorem, there exists a directed clique
for $\varphi$ if and only if there exists one for some conjunction of atoms.
Condition~(i) is easy to state.
To check~(ii), we then require the conditions of \cref{lem:compatible} to be satisfied for our conjunction of atoms.

For each atom $\alpha_i$, we introduce a variable $q^<_i$, and for each atom
$\beta_j$, we introduce a variable $q^\approx_j$. To check that (i) holds, we
use the formula $\varphi'$, that is obtained from $\varphi$ by replacing each
$\alpha_i$ with $q_i^< = 1$ and each $\beta_j$ with $q_j^\approx = 1$, and
adding the restrictions $q_i^< = 0 \vee q_i^< = 1$ and $q_j^\approx = 0 \vee
q_j^\approx = 1$.  Now, $\varphi$ is equivalent to
\[\psi := \exists \bm{q}^<, \bm{q}^\approx \colon \varphi' \wedge \bigwedge_{i=1}^n (q_i^< = 1 \to \alpha_i) \wedge 
\bigwedge_{j=1}^m (q_j^\approx = 1 \to \beta_j).\]

Let us now construct the formula for condition~(ii) above. To this end, we build a formula $\gamma_i$ that states all conditions of \cref{lem:compatible} that stem from the atom $\alpha_i$.
For $i \in [1,n]$ and fresh variables $p_{2i-1}, p_{2i}, \bm{x}_0, \bm{x}$ let
\begin{align*}
\gamma_i :=\ & (p_{2i-1} < \omega \to (\bm{r}_i^\top \bm{x}_0 \le p_{2i-1} \wedge \bm{r}_i^\top \bm{x} \le 0)) \ \wedge \\
& (p_{2i} < \omega \to (p_{2i} \le \bm{s}_i^\top \bm{x}_0 + \bm{t}_i^\top \bm{z} + h_i \wedge \bm{s}_i^\top \bm{x} \ge 0)) \ \wedge \\
& (p_{2i} = \omega \to \bm{s}_i^\top \bm{x} > 0)
\end{align*}
and for all $j \in [1,m]$ let
\begin{align*}
\delta_j := \bm{u}_j^\top\bm{x}_0 \approx_{e_j}^j \bm{v}_j^\top(\bm{x}_0+\bm{x})+\bm{w}_j^\top\bm{z}+d_j \wedge
\bm{u}_j^\top\bm{x} \equiv_{e_j} 0 \wedge \bm{v}_j^\top\bm{x} \equiv_{e_j} 0.
\end{align*}
Here, $p_\ell < \omega$ and $p_\ell = \omega$ is shorthand notation for $\omega_\ell = 0$ and $\omega_\ell = 1$, respectively,
where $\omega_\ell$ is a fresh variable associated with $p_\ell$ that is restricted to values from $\{0,1\}$.
Thus, from now on we implicitly quantify $\bm{\omega}$ when $\bm{p}$ is quantified.
The following requires $\bm{p}$ to be an admissible profile:
\[\theta := \bigwedge_{i=1}^n (p_{2i-1} < \omega \wedge p_{2i-1} < p_{2i} \vee p_{2i} = \omega).\]
Then we claim that $\ram \bm{x},\bm{y} \colon \psi(\bm{x},\bm{y},\bm{z})$ is equivalent to
\[
\chi := \exists \bm{q}^<, \bm{q}^\approx, \bm{p}, \bm{x}_0, \bm{x} \colon \varphi' \wedge \theta \wedge \bm{x} \ne \bm{0} \wedge 
\bigwedge_{i=1}^n (q_i^< = 1 \to \gamma_i) \wedge \bigwedge_{j=1}^m (q_j^\approx = 1 \to \delta_j).
\]
We show that for any valuation $\bm{c} \in \Z^{|\bm{z}|}$ of $\bm{z}$ we have
$\ram \bm{x},\bm{y} \colon \psi(\bm{x},\bm{y},\bm{c})$ if and only if $\chi(\bm{c})$.
For an assignment $\nu$ of the $q_i^<,q_j^\approx$ to $\{0,1\}$ let 
$I_\nu := \{i \in [1,n] \mid \nu(q_i^<) = 1\}$ and $J_\nu := \{j \in [1,m] \mid \nu(q_j^\approx) = 1\}$.
By Ramsey's theorem we have $\ram \bm{x},\bm{y} \colon \psi(\bm{x},\bm{y},\bm{c})$ if and only if 
there is an assignment $\nu$ of the $q_i^<,q_j^\approx$ satisfying $\varphi'$ such that
$\ram \bm{x},\bm{y} \colon \bigwedge_{i \in I_\nu} \alpha_i(\bm{x},\bm{y},\bm{c}) \wedge \bigwedge_{j \in J_\nu} \beta_j(\bm{x},\bm{y},\bm{c})$.
By \Cref{lem:admissible-compatible,lem:compatible} this formula is equivalent to
$\exists \bm{p},\bm{x}_0,\bm{x} \colon \theta \wedge \bm{x} \ne \bm{0} \wedge \bigwedge_{i \in I_\nu} \gamma_i(\bm{p},\bm{x}_0,\bm{x},\bm{c}) \wedge 
\bigwedge_{j \in J_\nu} \delta_j(\bm{x}_0,\bm{x},\bm{c})$
which in turn holds for some assignment $\nu$ of the $q_i^<,q_j^\approx$ satisfying $\varphi'$ if and only if $\chi(\bm{c})$.
 
\section{Ramsey quantifiers in Linear Real Arithmetic}\label{sec:reals}
In this section, we describe our procedure to eliminate the Ramsey quantifier if applied to an existential LRA formula. At the end of the section, we mention a version of this result for the structure $\QLin$ (\cref{thm:rational-main}).
\begin{theorem}\label{thm:real-main}
Given an existential formula $\varphi(\bm x, \bm y, \bm z)$ in LRA,
we can construct in polynomial time an existential formula in LRA of linear size that is equivalent to $\ram \bm x, \bm y \colon \varphi(\bm x, \bm y, \bm z)$.
\end{theorem}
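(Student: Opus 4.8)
The plan is to follow the same high-level strategy as in the Presburger case (\cref{main-presburger}), adapted to accommodate the convergence phenomena that arise over the reals. First I would invoke \cref{thm:qe-real} to move any existential quantifiers of $\varphi$ underneath the Ramsey quantifier, so that it suffices to treat a quantifier-free $\varphi$. Pushing negations inward, I may then assume $\varphi$ is a positive Boolean combination of linear atoms $\bm{r}^\top\bm{x}\sim\bm{s}^\top\bm{y}+\bm{t}^\top\bm{z}+h$ with $\mathord{\sim}\in\{\mathord{<},\mathord{\le}\}$ (an equality being the conjunction of two such atoms). Exactly as in \cref{sec:presburger}, I introduce a $\{0,1\}$-valued variable for each atom to guess a sub-conjunction of atoms making $\varphi$ true; by Ramsey's theorem, $\ram\bm{x},\bm{y}\colon\varphi$ holds iff some such sub-conjunction admits an infinite clique. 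So the crux is to express, by a linear-size existential LRA formula, when a conjunction of linear inequality atoms has an infinite clique.

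For this I would develop an LRA analogue of the profile machinery of \cref{sec:presburger}. Given a candidate clique $\bm{a}_1,\bm{a}_2,\dots$, Ramsey's theorem lets me pass to a subsequence along which every linear form occurring on the left- or right-hand side of some atom is monotone; refining once more, I may assume each such form is either constant, or strictly increasing to a finite limit, or tends to $+\infty$ (and symmetrically on the right). A \emph{profile} records, for each relevant form, which of these cases applies and, in the finite cases, a rational value bounding the form. The ingredient that is new over the integers is the distinction between ``the form can be taken constant equal to $v$'' and ``the form converges to $v$ strictly from below (resp.\ above) but never reaches it''; this is precisely what forces the convergence direction $\bm{d}_c$ and is invisible in LIA, where a bounded integer sequence is eventually constant. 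With this notion I would prove the two structural lemmas corresponding to \cref{lem:admissible-compatible} and \cref{lem:compatible}: (a)~an infinite clique exists iff there is an admissible profile together with a compatible sequence, and (b)~a compatible sequence exists iff there is one of the special shape $\bm{a}_k=\bm{a}-\tfrac1k\bm{d}_c+k\bm{d}_\infty$ with $\bm{a},\bm{d}_c,\bm{d}_\infty\in\R^{|\bm{x}|}$, $(\bm{d}_c,\bm{d}_\infty)\neq\bm{0}$ (which guarantees pairwise distinctness), satisfying a system of linear (in)equalities and sign conditions dictated by the profile: the sign of $\bm{r}^\top\bm{d}_\infty$ governs divergence of the form, the sign of $\bm{r}^\top\bm{d}_c$ governs strict convergence from one side, and $\bm{r}^\top\bm{a}$ together with these pins down the limit. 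The ``if'' direction of (b) is a direct computation with $\bm{a}_k$; the ``only if'' direction uses Ramsey's theorem to extract monotone behavior and then reads off $\bm{d}_\infty$ from the growth rate and $\bm{d}_c$ from the convergence rate.

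Finally I would assemble the output formula in the same shape as the formula $\chi$ of \cref{sec:presburger}: for each atom a constant-size conjunct expressing its profile constraints on $\bm{a},\bm{d}_c,\bm{d}_\infty$, guarded by the atom's selector variable; the admissibility constraints on the profile variables; the propagation of the guessed atoms into the rewritten $\varphi'$; and $(\bm{d}_c,\bm{d}_\infty)\neq\bm{0}$. All profile data ($\omega$-flags, ``strictly-from-below'' flags, and value variables) is carried by a constant number of fresh variables per atom, so the resulting existential LRA formula has size linear in $|\varphi|$ and is computable in polynomial time. The analogous statement for the structure $\QLin$ (\cref{thm:rational-main}) then follows by the same argument, since the witnesses $\bm{a},\bm{d}_c,\bm{d}_\infty$ can be taken rational.

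I expect the main obstacle to be part (b) of the structural analysis: correctly matching strict versus non-strict atoms against the limiting behavior of the linear forms on both sides, and verifying that the two-parameter family $\bm{a}-\tfrac1k\bm{d}_c+k\bm{d}_\infty$ is rich enough to realize every achievable profile — so that no finer ``convergence orders'' beyond $1/k$ and $k$ are needed. This is where the proof genuinely departs from the Presburger argument, in which a plain arithmetic progression sufficed.
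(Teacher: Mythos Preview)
Your high-level strategy matches the paper's almost exactly: reduce to quantifier-free via \cref{thm:qe-real}, guess a sub-conjunction of atoms, introduce a profile recording the five behaviours $\{-\omega,-1,0,1,\omega\}$ of each linear form, and express the existence of a compatible sequence by the existence of vectors $\bm a,\bm d_c,\bm d_\infty$ with $\bm a_k=\bm a-\tfrac1k\bm d_c+k\bm d_\infty$. The formula you assemble at the end is essentially the paper's formula $\gamma$.

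There is, however, a genuine gap in your sketch of the ``only if'' direction of part~(b). You write that one ``reads off $\bm d_\infty$ from the growth rate and $\bm d_c$ from the convergence rate''. Reading off $\bm d_c$ as a difference $\bm a_2-\bm a_1$ is indeed fine. But $\bm a$ and $\bm d_\infty$ cannot be read off this way: the conditions you need are \emph{equalities} $\bm L_{\bm p}\bm a=\bm\ell_{\bm p}$ and $\bm L_{\bm p}\bm d_\infty=\bm 0$, not inequalities as in the Presburger case. Any difference $\bm a_j-\bm a_i$ of sequence elements satisfies $\bm L_{\bm p}(\bm a_j-\bm a_i)\approx\bm 0$ only approximately, and likewise $\bm L_{\bm p}\bm a_k$ only converges to $\bm\ell_{\bm p}$ without ever hitting it. So the direct analogue of the Presburger extraction (\cref{lem:compatible}) fails. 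What is actually required is a separate lemma (the paper's \cref{existence-d-infty}): given linear maps $\bm A,\bm B$ and a sequence on which $\bm A$ converges to some $\bm v$ while $\bm B$ is simultaneously unbounded, one must produce $\bm a$ with $\bm A\bm a=\bm v$ exactly and $\bm d_\infty$ with $\bm A\bm d_\infty=\bm 0$ exactly and $\bm B\bm d_\infty\gg\bm 0$. The paper proves this by an inductive argument on the rows of $\bm A$, at each step taking a carefully chosen affine combination $\bm a_k+\alpha(\bm a_\ell-\bm a_k)$ with $1\le\alpha\le 2$ to force one additional coordinate of $\bm A$ to hit its limit exactly while preserving unboundedness of $\bm B$. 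This is the step where the LRA proof genuinely departs from the Presburger one, and your proposal does not supply it; you do flag part~(b) as the main obstacle, but the specific difficulty is not the richness of the family $\bm a-\tfrac1k\bm d_c+k\bm d_\infty$ so much as the existence of the exact kernel vector $\bm d_\infty$.
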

Similar to the integer case, we first assume that $\varphi$ is a conjunction of the following form:
\begin{equation} \bigwedge_{i=1}^n \bm{r}_i^\top \bm{x} < \bm{s}_i^\top\bm{y} + \bm{t}_i^\top\bm{z}+h_i ~~\wedge~~ 
	\bigwedge_{j=1}^m \bm{u}_j^\top \bm{x} = \bm{v}_j^\top\bm{y}+\bm{w}_j^\top\bm{z}+d_j\label{reals-conjunction}\end{equation}
where $\bm r_i, \bm s_i, \bm t_i, \bm u_j, \bm v_j, \bm w_j \in \Q^d$ for $d \ge 1$ and $h_i,d_j \in \Q$.

\subsection*{Cliques in terms of profiles}
We now define the notion of a profile with a similar purpose as in the integer
case.  In the real case, these carry more information: In the case of
Presburger arithmetic, it is enough to guess whether a particular function
grows or has a particular upper bound.  Here, it is possible that a function
grows strictly, but it still bounded, because it converges.  For example, if
$\varphi(x,y)$ says that $x<y$ and $x\le 1$, then a clique must be a strictly
ascending sequence of numbers $\le 1$.

A \emph{profile} (for $\varphi$) is a tuple $\bm p = (\rho,\sigma,t_\rho,t_\sigma)$ of functions where
$\rho,\sigma \colon \{1,\dots,n\} \to \R \cup \{-\omega,\omega\}$ and
$t_\rho,t_\sigma \colon \{1,\dots,n\} \to \{-\omega,-1,0,1,\omega\}$.
For a sequence $(\bm a_k)_{k\ge1}$ in $\R^d$ let $\bm\rho_i := (\bm r_i^\top \bm a_k)_{k\ge1}$ and $\bm\sigma_i := (\bm s_i^\top \bm a_k)_{k\ge1}$.
We say that a sequence $(\bm a_k)_{k\ge1}$ of pairwise distinct vectors is \emph{compatible} with $\bm p$ if
$\rho(i)$ and $\sigma(i)$ are the real values to which the sequences $\bm\rho_i$ and $\bm\sigma_i$ converge
or $\omega$ (resp. $-\omega$) if the corresponding sequence is strictly increasing (resp. decreasing) and diverges to $\infty$ (resp. $-\infty$) and
the functions $t_\rho$ and $t_\sigma$ describe the type of convergence where
type $0$ means that the corresponding sequence is constant,
type $1$ (resp. $-1$) means that it is strictly increasing (resp. decreasing) and converges from below (resp. above), and
the type is $\omega$ (resp. $-\omega$) in the divergent case.
A profile $\bm p$ is $\bm c$\emph{-admissible} for a vector $\bm c \in \R^d$ if for all $i \in \{1,\dots,n\}$ we have
\begin{itemize}
\item $\sigma(i) \ne -\omega$ and if $\rho(i) = \omega$, then $\sigma(i) = \omega$,
\item $\rho(i) < \sigma(i) + \bm t_i^\top \bm c + h_i$ 
if either $t_\rho(i) \in \{-1,0\}$ and $t_\sigma(i) \in \{0,1\}$ or $t_\rho(i) = -1$ and $t_\sigma(i) = -1$,
\item $\rho(i) \le \sigma(i) + \bm t_i^\top \bm c + h_i$
if either $t_\rho(i) = 0$ and $t_\sigma(i) = -1$ or $t_\rho(i) = 1$.
\end{itemize}
We say that a sequence $(\bm a_k)_{k\ge1}$ satisfies the equality constraints (of $\varphi$) for $\bm c \in \R^d$ if
$\bm u_j^\top \bm a_k = \bm{v}_j^\top\bm a_\ell+\bm{w}_j^\top\bm{c}+d_j$
for all $j \in \{1,\dots,m\}$ and $k < \ell$.

\begin{lemma}\label{lem:ramsey-compatible}
Let $\bm c \in \R^d$. Then $\ram \bm x,\bm y \colon \varphi(\bm x,\bm y, \bm c)$ if and only if
there exists a $\bm c$-admissible profile $\bm p$ such that there is a sequence compatible with $\bm p$ that satisfies the equality constraints for $\bm c$.
\end{lemma}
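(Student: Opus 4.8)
The plan is to prove \Cref{lem:ramsey-compatible} as the linear-real-arithmetic analogue of \Cref{lem:admissible-compatible}, with the same two halves. A profile is only a bookkeeping device: for each of the $2n$ linear forms $\bm r_i^\top\bm a_k$ and $\bm s_i^\top\bm a_k$ appearing in the conjunction \eqref{reals-conjunction} it records the limiting value together with the \emph{manner} of convergence (constant; or strictly monotone and convergent; or strictly monotone and divergent). For the ``only if'' direction I would start from an infinite clique $(\bm a_k)_{k\ge1}$ for $\varphi(\cdot,\cdot,\bm c)$, thin it out until each of these $2n$ sequences behaves in one of these ways, read off the profile, and verify $\bm c$-admissibility. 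For the ``if'' direction I would start from a compatible sequence for a $\bm c$-admissible profile and thin it out until it is an honest clique. The equality constraints $\bm u_j^\top\bm a_k=\bm v_j^\top\bm a_\ell+\bm w_j^\top\bm c+d_j$ hold for \emph{all} $k<\ell$ on both sides of the equivalence and, like pairwise distinctness, are inherited by every subsequence, so they need no separate treatment.

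For ``only if'', I would use that every real sequence has a monotone subsequence, that a monotone real sequence converges to a finite limit or diverges to $+\infty$ or $-\infty$, and that one further subsequence makes it constant, strictly monotone and convergent, or strictly monotone and divergent --- exactly the five behaviours permitted for $t_\rho(i),t_\sigma(i)$. Performing this for $\bm r_1^\top\bm a_k,\dots,\bm r_n^\top\bm a_k,\bm s_1^\top\bm a_k,\dots,\bm s_n^\top\bm a_k$ one at a time (a subsequence of a ``clean'' sequence is still clean), I obtain a profile $\bm p=(\rho,\sigma,t_\rho,t_\sigma)$ from the limits and types. Admissibility is then checked directly: $\sigma(i)\neq-\omega$ since otherwise, fixing $k=1$, the right side of the $i$-th strict inequality tends to $-\infty$ while the left side is constant; if $\rho(i)=\omega$ the left side tends to $+\infty$ and forces the right side to do so, whence $\sigma(i)=\omega$; and the remaining comparisons between $\rho(i)$ and $\sigma(i)+\bm t_i^\top\bm c+h_i$ (weak or strict according to the types) follow by passing to limits in $\bm r_i^\top\bm a_k<\bm s_i^\top\bm a_\ell+\bm t_i^\top\bm c+h_i$, where strictness is retained exactly when one of the two sides stays strictly on one side of its limit --- which is precisely how the case split in the definition of $\bm c$-admissibility is calibrated.

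For ``if'', which I expect to be the crux, let $(\bm a_k)_{k\ge1}$ be compatible with a $\bm c$-admissible profile $\bm p$ and satisfy the equality constraints. I would show that finitely many further subsequences turn it into a clique for $\varphi(\cdot,\cdot,\bm c)$. Equalities and distinctness are inherited, so it suffices, for $i=1,\dots,n$ in turn, to pass to a subsequence achieving $\bm r_i^\top\bm a_k<\bm s_i^\top\bm a_\ell+\bm t_i^\top\bm c+h_i$ for all $k<\ell$; since such an inequality, once it holds for all pairs, is kept by every further subsequence (as are the convergence types, hence compatibility), the steps do not interfere. For a fixed $i$ the argument is a short case analysis on $(t_\rho(i),t_\sigma(i))$ using admissibility, which pins down which of $\rho(i)$ and $\sigma(i)+\bm t_i^\top\bm c+h_i$ dominates and from which side each sequence approaches its limit: in all but two cases the inequality either holds outright or is achieved by deleting a finite prefix. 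The two borderline cases are $t_\rho(i)=t_\sigma(i)=\omega$ and $t_\rho(i)=t_\sigma(i)=1$ with $\rho(i)=\sigma(i)+\bm t_i^\top\bm c+h_i$; there I would instead extract the subsequence greedily, picking $\bm a_{k_1},\bm a_{k_2},\dots$ with $\bm r_i^\top\bm a_{k_t}<\bm s_i^\top\bm a_{k_{t+1}}+\bm t_i^\top\bm c+h_i$ for all $t$ (possible because $\bm r_i^\top\bm a_{k_t}$ stays strictly below $\lim_\ell(\bm s_i^\top\bm a_\ell+\bm t_i^\top\bm c+h_i)$, which is $+\infty$ in the first case and $\rho(i)$ in the second), and then using that $(\bm s_i^\top\bm a_\ell)_\ell$ is non-decreasing in both cases so that the inequality propagates to all $k<\ell$.

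The main obstacle is this last direction: a compatible sequence need not itself be a clique --- compatibility constrains only limits, not individual pairs $k<\ell$ --- so it must be massaged, and in the borderline cases mere prefix deletion fails and a diagonal choice is needed, all while preserving the inequalities already secured for the earlier indices as well as the equality constraints. It is also worth stressing that, unlike in the Presburger case, there are uncountably many profiles (the ranges of $\rho,\sigma$ include $\R$), so this lemma is only a structural characterization; turning it into the claimed polynomial-time, linear-size construction is the task of the subsequent steps.
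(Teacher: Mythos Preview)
Your proposal is correct and follows essentially the same approach as the paper. In the ``only if'' direction you use that every real sequence has a monotone subsequence where the paper invokes Bolzano--Weierstrass, which amounts to the same thing; in the ``if'' direction the paper runs a single greedy construction for every $i$ and does the case analysis inside the inductive step, whereas you front-load the case analysis, dispatch most type-pairs by prefix deletion, and reserve the greedy extraction for exactly the two cases $(t_\rho(i),t_\sigma(i))=(\omega,\omega)$ and $(1,1)$ with $\rho(i)=\sigma(i)+\bm t_i^\top\bm c+h_i$ --- an organizational difference, not a substantive one.
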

\begin{proof}
	We first show the ``only if'' direction.
	Let $(\bm a_k)_{k\ge1}$ be a clique witnessing $\ram \bm x,\bm y \colon \varphi(\bm x,\bm y, \bm c)$.
	For all $i \in \{1,\dots,n\}$ consider the sequence $\bm \rho_i$.
	By the Bolzano-Weierstrass theorem if $\bm \rho_i$ is bounded, we can replace $(\bm a_k)_{k\ge1}$ by an infinite subsequence such that
	$\bm \rho_i$ converges against a real value $r_i \in \R$.
	By restricting further to an infinite subsequence, we have that $\bm \rho_i$ is either constant, strictly increasing, or strictly decreasing.
	Thus, we set $\rho(i) := r_i$ and $t_\rho(i)$ to $0$, $1$, or $-1$ depending on whether $\bm \rho_i$ is constant, increasing, or decreasing.
	If $\bm \rho_i$ is unbounded, we replace $(\bm a_k)_{k\ge1}$ by an infinite subsequence such that
	$\bm \rho_i$ is strictly increasing if it is unbounded above and strictly decreasing if it is unbounded below.
	Then we set $\rho(i)$ and $t_\rho(i)$ to $\omega$ or $-\omega$ depending on whether $\bm \rho_i$ is increasing or decreasing.
	Similarly, we can define $\sigma(i)$ and $t_\sigma(i)$ by considering the sequence $\bm \sigma_i$.
	Thus, there is a sequence $(\bm a_k)_{k\ge1}$ that is compatible with the profile $\bm p := (\rho,\sigma,t_\rho,t_\sigma)$.
	Since $(\bm a_k)_{k\ge1}$ still satisfies the equality constraints for $\bm c$, it remains to show that $\bm p$ is $\bm c$-admissible.
	First observe that $\sigma(i) \ne -\omega$ since $\bm \sigma_i$ is bounded from below by $\min\{\bm r_i^\top \bm a_1, \bm s_i^\top \bm a_1\}$.
	If $\rho(i) = \omega$, then also $\sigma(i) = \omega$ since otherwise 
	there were $k < \ell$ such that $\bm r_i^\top \bm a_k \ge \bm s_i^\top \bm a_\ell +\bm t_i^\top \bm c+h_i$.
	With a similar reasoning we can show that
	if $t_\rho(i) \in \{-1,0\}$ and $t_\sigma(i) \in \{0,1\}$, then $\rho(i) < \sigma(i)+\bm t_i^\top \bm c+h_i$,
	if $t_\rho(i) = t_\sigma(i) = -1$, then $\rho(i) < \sigma(i)+\bm t_i^\top \bm c+h_i$, and
	if either $t_\rho(i) = 0$ and $t_\sigma(i) = -1$ or $t_\rho(i) = 1$, then $\rho(i) \le \sigma(i)+\bm t_i^\top \bm c+h_i$.
	
	We now turn to the ``if'' direction.
	Let $\bm p = (\rho,\sigma,t_\rho,t_\sigma)$ be a $\bm c$-admissible profile 
	and $(\bm a_k)_{k\ge1}$ be a sequence compatible with $\bm p$ that satisfies the equality constraints for $\bm c$.
	We successively restrict for each $i \in \{1,\dots,n\}$ to a subsequence such that 
	$\bm r_i^\top \bm a_k < \bm s_i^\top \bm a_\ell +\bm t_i^\top \bm c+h_i$ for all $k < \ell$.
	We construct the subsequence inductively.
	Suppose we already constructed the subsequence $\bm a_{i_1},\dots,\bm a_{i_h}$
	such that $\bm r_i^\top \bm a_{i_k} < \bm s_i^\top \bm a_{i_\ell} +\bm t_i^\top \bm c+h_i$ for all $1 \le k < \ell \le h$ and the set 
	$L_h := \{\ell > i_h \mid \forall 1 \le k \le h \colon \bm r_i^\top \bm a_{i_k} < \bm s_i^\top \bm a_\ell +\bm t_i^\top \bm c+h_i\}$ 
	is infinite.
	Let $i_{h+1} := \min(L_h)$.
	If $\rho(i) < \sigma(i)+\bm t_i^\top \bm c+h_i$, then clearly there is an infinite subset $L$ of $L_h$ such that
	$\bm r_i^\top \bm a_{i_{h+1}} < \bm s_i^\top \bm a_\ell +\bm t_i^\top \bm c+h_i$ for all $\ell \in L \setminus \{i_{h+1}\}$.
	If $\rho(i) = \sigma(i)+\bm t_i^\top \bm c+h_i$, then by definition of $\bm c$-admissibility we have that
	either $t_\rho(i) = 0$ and $t_\sigma(i) = -1$ or $t_\sigma(i) = 1$, or $\rho(i) = \sigma(i) = \omega$.
	In all of these cases we can find an infinite subset $L$ of $L_h$ such that
	$\bm r_i^\top \bm a_{i_{h+1}} < \bm s_i^\top \bm a_\ell +\bm t_i^\top \bm c+h_i$ for all $\ell \in L \setminus \{i_{h+1}\}$.
	Thus, we can extend the subsequence by $\bm a_{i_{h+1}}$
	where the set $L_{h+1}$ is infinite since it contains $L$.
	Finally, note that passing to subsequences does not spoil the satisfaction of the equality constraints for $\bm c$.
	Thus, the constructed subsequence is an infinite clique witnessing $\ram \bm x,\bm y \colon \varphi(\bm x,\bm y, \bm c)$.
\end{proof}

\subsection*{A general form of cliques}
In the case of Presburger arithmetic, a key insight was that if there exists a clique compatible with a profile, then there exists one of the form $\bm{a}_0,\bm{a}_0+\bm{a},\bm{a}_0+2\cdot\bm{a},\ldots$. The case of reals is more involved in this regard: There are profiles with which no arithmetic progression is compatible.

For example, consider the profile that specifies that in the first component,
the numbers must increase strictly in each step and tend to infinity. In the
second component, the numbers must also increase strictly, but are bounded from
above by $1$. A sequence compatible with this would be $(\tfrac{1}{2},1),
(\tfrac{3}{4},2), (\tfrac{4}{5},3), \ldots$. However, such a sequence cannot be
of the form $\bm{a}_0,\bm{a}_0+\bm{a},\bm{a}_0+2\cdot\bm{a},\ldots$: The entry
in the first component of $\bm{a}$ would have to be positive; but if it is, then the first component
also tends to infinity.
Instead, we look for cliques of the form $\bm{a}_1,\bm{a}_2,\ldots$ with
\begin{equation} \bm{a}_k=\bm{a}-\tfrac{1}{k}\bm{d}_c+k\bm{d}_\infty \label{general-form-reals}\end{equation}
for some vectors $\bm{a}$, $\bm{d}_c$, and $\bm{d}_\infty$. Here the vector
$\bm{d}_c$ realizes the convergence behavior: By subtracting smaller and
smaller fractions of it, the part $\bm{a}-\tfrac{1}{k}\bm{d}_c$ converges to
$\bm{a}$. Moreover, the vector $\bm{d}_\infty$ realizes divergence to
$\infty$ or $-\infty$: By adding larger and larger multiples of it, we can make
sure certain linear functions on $\bm{a}_k$ grow unboundedly.

We will later formulate necessary conditions on vectors $\bm{a}$, $\bm{d}_c$, and $\bm{d}_\infty$ such that the sequence \eqref{general-form-reals} is compatible with a profile and satisfies the equality constraints of $\varphi$. We will then show the converse in \cref{lem:compatible-vectors}: If there is a compatible sequence, then there is one of the form \eqref{general-form-reals}.

\subsection*{Extracting $\bm{a}$ and $\bm{d}_\infty$}
Before we formulate the necessary conditions, we present the key lemma that will yield the existence of $\bm{a}$ and $\bm{d}_\infty$. Suppose that we are looking for a sequence $\bm{a}_1,\bm{a}_2,\ldots$ in $\R^d$ where for some linear maps $\bm{A}\colon \R^d\to\R^m$ and $\bm{B}\colon\R^d\to\R^n$, the sequence $\bm{A}\bm{a}_1,\bm{A}\bm{a}_2,\ldots$ converges to some $\bm{v}\in\R^d$ and the sequence $\bm{B}\bm{a}_1,\bm{B}\bm{a}_2,\ldots$ is simultaneously unbounded. If we want to show that there exists such  a sequence of the form \cref{general-form-reals}, then we need $\bm{a}$ and $\bm{d}_\infty$ to satisfy (i)~$\bm{A}\bm{a}=\bm{v}$, (ii)~$\bm{A}\bm{d}_\infty=\bm{0}$ and (iii)~$\bm{B}\bm{d}_\infty\gg\bm{0}$. 
Indeed, we need $\bm{A}\bm{d}_\infty=\bm{0}$, because if $\bm{A}\bm{d}_\infty$ had a non-zero component, the sequence $k\mapsto \bm{A}(a-\tfrac{1}{k}\bm{d}_c+k\bm{d}_\infty)$ would diverge in that component. Moreover, if $\bm{A}\bm{d}_\infty=\bm{0}$, then $k\mapsto \bm{A}(\bm{a}-\tfrac{1}{k}\bm{d}_c+k\bm{d}_\infty)=\bm{A}(\bm{a}-\tfrac{1}{k}\bm{d}_c)$ converges to $\bm{A}\bm{a}$, meaning we need $\bm{A}\bm{a}=\bm{v}$. Finally, the map $\bm{B}$ is simultaneously unbounded on the sequence $k\mapsto \bm{a}-\tfrac{1}{k}\bm{d}_c+k\bm{d}_\infty$ if and only if $\bm{B}\bm{d}_\infty\gg\bm{0}$.

The following lemma yields vectors $\bm{a}$ and $\bm{d}_\infty$ that satisfy these conditions.
\begin{lemma}\label{existence-d-infty}
	Let $\bA\colon\R^d\to\R^m$ and $\bB\colon\R^d\to\R^n$ be linear maps.
	Let $\bm{a}_1,\bm{a}_2,\ldots\in\R^d$ be a sequence such that
	$\bA\ba_1,\bA\ba_2,\ldots$ converges against $\bv\in\R^m$ and
	$\bB$ is simultaneously unbounded on $\ba_1,\ba_2,\ldots$.  Then there
	exist 
(1)~$\ba\in\R^d$ with $\bA\ba=\bv$ and
(2)~$\bd_\infty\in\R^d$ with $\bA\bd_\infty=\bzero$ and
			$\bB\bd_\infty\gg\bzero$.
\end{lemma}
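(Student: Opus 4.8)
The plan is to establish the two conclusions separately. For~(1), observe that $\mathrm{im}\,\bA = \{\bA\bm x \mid \bm x \in \R^d\}$ is a linear subspace of $\R^m$, hence closed; since $\bA\ba_1, \bA\ba_2, \ldots$ is a sequence in $\mathrm{im}\,\bA$ converging to $\bv$, we conclude $\bv \in \mathrm{im}\,\bA$, i.e.\ there is $\ba \in \R^d$ with $\bA\ba = \bv$.

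For~(2) it suffices to show that the system $\bA\bd = \bzero \wedge \bB\bd \gg \bzero$ has a solution, which we may then take as $\bd_\infty$. Suppose not. By a theorem of the alternative (a variant of Gordan's/Motzkin's transposition theorem) there is then a dual certificate: a vector $\bm w \in \R^n$ with $\bm w \ge \bzero$, $\bm w \ne \bzero$, and $\bm w^\top \bB\bd = 0$ for all $\bd \in \ker\bA$. Concretely, the polyhedral cone $K = \{\bB\bd - \bm u \mid \bd \in \ker\bA,\ \bm u \in \R^n,\ \bm u \ge \bzero\}$ is closed, and by infeasibility it does not contain the all-ones vector $\bm 1$ (since $\bm 1 \in K$ would give $\bd \in \ker\bA$ with $\bB\bd \ge \bm 1$, and rescaling such a $\bd$ solves the system); separating $\bm 1$ from $K$ by a hyperplane yields $\bm w$ as claimed, using that $K$ contains both the subspace $\bB(\ker\bA)$ and the negative orthant $-\R^n_{\ge 0}$. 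Since $\bm w^\top \bB\bd = 0$ on $\ker\bA$, the vector $\bB^\top \bm w$ lies in $(\ker\bA)^\perp = \mathrm{im}\,\bA^\top$, so $\bB^\top\bm w = \bA^\top\by$ for some $\by \in \R^m$.

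It remains to contradict the hypotheses on the sequence. For every $j$ we have $\bm w^\top \bB\ba_j = (\bB^\top\bm w)^\top \ba_j = (\bA^\top\by)^\top\ba_j = \by^\top\bA\ba_j$. The right-hand side converges (to $\by^\top\bv$) since $\bA\ba_j \to \bv$, so it is bounded. On the other hand, as $\bB$ is simultaneously unbounded on $\ba_1, \ba_2, \ldots$, every component of $\bB\ba_j$ tends to $+\infty$; since $\bm w \ge \bzero$ and $\bm w \ne \bzero$, in $\bm w^\top\bB\ba_j = \sum_i w_i (\bB\ba_j)_i$ all summands are eventually nonnegative and at least one tends to $+\infty$, so $\bm w^\top\bB\ba_j \to +\infty$, a contradiction. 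Hence the system is feasible and~(2) holds.

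The main obstacle is getting the \emph{strict} inequality $\bB\bd_\infty \gg \bzero$ rather than just $\bB\bd_\infty \ge \bzero$: normalizing the $\ba_j$ and extracting a convergent subsequence of $\ba_j/\|\ba_j\|$ (using $\|\ba_j\| \to \infty$, which follows from simultaneous unboundedness of $\bB$) only produces a vector with $\bA\bd = \bzero$ and $\bB\bd \ge \bzero$. Obtaining strictness is exactly what forces the duality/separation argument, and the hypothesis that $\bA\ba_j$ converges is precisely what is needed to rule out the dual certificate $\bm w$.
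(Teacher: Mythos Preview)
Your proof is correct but takes a genuinely different route from the paper's. The paper argues constructively: it defines cosets $T_j=\{\bu\in\R^d \mid (\bA\bu)[i]=\bv[i]~\text{for all}~i\le j\}$ and shows that any ``good'' sequence (one on which $\bA(\cdot)\to\bv$ and $\bB$ is simultaneously unbounded) in $T_j$ can be refined to a good sequence in $T_{j+1}$, by taking an affine combination $\ba_k+\alpha(\ba_\ell-\ba_k)$ of two carefully chosen elements with $1\le\alpha\le 2$. After $m$ steps one has a good sequence lying entirely in $\bA^{-1}(\bv)$, and then $\ba$ and $\bd_\infty$ are read off directly as an element and a difference of two elements. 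Your argument instead decouples the two conclusions: (1) is immediate from closedness of $\mathrm{im}\,\bA$, and (2) is obtained by contradiction via a Farkas/Motzkin separation, producing a dual vector $\bm w$ that makes $\bm w^\top\bB(\cdot)=\by^\top\bA(\cdot)$ simultaneously bounded and unbounded on the sequence. Your approach is shorter and leverages standard convex-analysis tools; the paper's is more elementary (no separation theorem) and more in keeping with the constructive flavor of the surrounding section, where explicit cliques of the form $\ba-\tfrac{1}{k}\bd_c+k\bd_\infty$ are being built. Your final paragraph, explaining why a naive compactness argument only yields $\bB\bd\ge\bzero$ and why the convergence hypothesis on $\bA\ba_j$ is exactly what eliminates the dual certificate, is a nice touch and would be worth keeping in an exposition.
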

\begin{proof}
For a vector $\bv\in\R^m$, we write $\bv[j]$ for its $j$-th component.
For any $j\in\{0,\ldots,n\}$, let
\[ T_j=\{\bu\in\R^d \mid (\bA\bu)[i]=\bv[i]~\text{for all $1\le i\le j$}\}. \]
Note that $T_j$ is a \emph{coset}, meaning that if $\bu_1,\bu_2,\bu_3\in T_j$ and $\alpha\in\R$,
then $\bu_1+\alpha(\bu_2-\bu_3)\in T_j$.

We call a sequence $\ba_1,\ba_2,\ldots\in\R^d$ \emph{good} if
$\bA\ba_1,\bA\ba_2,\ldots$ converges against $\bv$ and $\bB$ is
simultaneously unbounded on $\ba_1,\ba_2,\ldots$.  We shall prove that for every $j\in\{0,\ldots,n-1\}$, 
if there is a good sequence contained in $T_j$, then there is a good
sequence contained in $T_{j+1}$. This implies that there is a good
sequence contained in $T_n$. This clearly yields the desired
$\ba$ as the first element of the sequence and $\bd_\infty$ as
the difference between two elements.

Thus, let $\ba_1,\ba_2,\ldots$ be a good sequence in $T_j$ for $j\in\{0,\ldots,n-1\}$. Without loss of generality, we
may assume that $(\bA\ba_1)[j+1], (\bA\ba_2)[j+1],\ldots$
converges against $\bv[j+1]$ from below and is strictly increasing: All other cases are symmetric or already yield a subsequence in $T_{j+1}$.
Moreover, let $N\in\N$ and $\varepsilon>0$. To construct the good sequence in $T_{j+1}$, we need to show that there
exists a vector $\ba'\in\R^d$ with $\bB\ba'\ge(N,\ldots,N)$ with
$\ba'\in T_{j+1}$ and $\|\bA\ba'-\bv\|<\varepsilon$. Since the sequence
$\ba_1,\ba_2,\ldots$ is good, we can choose $k,\ell\in\N$, $k<\ell$,
such that
\begin{enumerate}
	\item $\bB\ba_k\ge (N,\ldots,N)$
	\item $\bB\ba_k\ll\bB\ba_\ell$
	\item $\|\bA\ba_k-\bv\|,\|\bA\ba_\ell-\bv\|<\delta$
	\item $|(\bA\ba_\ell-\bv)[j+1]|<\tfrac{1}{2}|(\bA\bx_k-\bv)[j+1]|$,
\end{enumerate}
where $\delta>0$ will be chosen later. Since $|(\bA\ba_\ell-\bv)[j+1]|<\tfrac{1}{2}|(\bA\ba_k-\bv)[j+1]|$, there must be some $\alpha$ with $1\le\alpha\le 2$ such that
\[ \left(\bA\ba_k+\alpha(\bA\ba_\ell-\bA\ba_k)\right)[j+1]=\bv[j+1]. \]
We set $\ba':=\ba_k+\alpha(\ba_\ell-\ba_k)$. By the previous equation and the fact that $T_j$ is a coset, we have $\ba'\in T_{j+1}$. Moreover,
$\bB\ba_k\ge (N,\ldots,N)$ and $\bB\ba_\ell\gg\bB\ba_k$ also imply $\bB\ba'\ge (N,\ldots,N)$. Furthermore,
we have
\begin{align*}
	\|\bA\ba'-\bv\|&=\|\bA\ba_k+\alpha(\bA\ba_\ell-\bA\ba_k)-\bv\|\\
	&\le \|\bA\ba_k-\bv\|+\alpha\|\bA\ba_\ell-\bA\ba_k\|\le \delta+2\alpha\delta \le 5\delta.
\end{align*}
Thus, picking $\delta<\varepsilon/5$ yields the desired $\ba'$.
\end{proof}

\subsection*{Compatibility in terms of inequalities}
We are now ready to describe the necessary and sufficient conditions for the vectors $\bm{a}$, $\bm{d}_c$, and $\bm{d}_\infty$.
We define matrices and vectors to describe systems of linear (in)equalities that are needed to express the compatibility conditions.
Let $\bm p$ be a profile and define the following inequalities.
\begin{description}
	\item[Limit values] Let $\bm{L_p}$ be a matrix and $\bm{\ell_p}$ be a vector such that $\bm{L_p} \bm x = \bm{\ell_p}$ if and only if
		\begin{align*}
			\bm{r}_i^\top \bm{x}&=\rho(i) & & \text{for each $i$ with $t_{\rho}(i)\in\{-1,1\}$} \\
			\bm{s}_i^\top \bm{x}&=\sigma(i) & & \text{for each $i$ with $t_{\sigma}(i)\in\{-1,1\}$}\end{align*}
		Then, as discussed above, our vectors need to satisfy
		$\bm{L_p}\bm{a}=\bm{\ell_p}$ and $\bm{L_p}\bm{d}_\infty=\bm{0}$.

\item[Constant values] Let $\bm{C_p}$ be a matrix and $\bm{c_p}$ be a vector such that $\bm{C_p} \bm x = \bm{c_p}$ if and only if
		\begin{align*}
			\bm r_i^\top \bm x &= \rho(i) && \text{for each $i$ with $t_\rho(i) = 0$} \\
			\bm s_i^\top \bm x &= \sigma(i) && \text{for each $i$ with $t_\sigma(i) = 0$}
		\end{align*}
		Since components $i$ with $t_\rho(i)=0$ (resp.\
		$t_\sigma(i)=0$) are those where
		$\bm{r}_i^\top\bm{a}_1,\bm{r}^\top\bm{a}_2,\ldots$ (resp.\
		$\bm{s}_i^\top\bm{a}_1,\bm{s}^\top\bm{a}_2,\ldots$) is
		constant, our vectors clearly need to satisfy
		$\bm{C_p}\bm{d}_c=\bm{0}$ and $\bm{C_p}\bm{d}_\infty=\bm{0}$.

	\item[Convergence] Let $\bm{D_p}$ be a matrix such that $\bm{D_p}\bm{x}\gg \bm{0}$ if and only if
\begin{align*}
			\bm r_i^\top \bm x &> 0~\text{(resp.\ $< 0$)} & &\text{for each $i$ with $t_\rho(i) = 1$ (resp. $= -1$)}, \\
			\bm s_i^\top \bm x &> 0~\text{(resp.\ $< 0$)} & &\text{for each $i$ with $t_\sigma(i) = 1$ (resp. $= -1$)}
\end{align*}
Since the components $i$ with $t_\rho(i)=1$ (resp.\ $t_\rho(i)<0$) are those where $\bm{r}_i^\top\bm{a}_1,\bm{r}_i^\top\bm{a}_2,\ldots$ converges to a real number from below (resp.\ from above), and similarly for $\bm{s}_i^\top\bm{a}_1,\bm{s}_i^\top\bm{a}_2,\ldots$, we must have $\bm{D_p}\bm{d}_c\gg\bm{0}$.
\item[Unboundedness] Let $\bm{U_p}$ be a matrix such that $\bm{U_p} \bm x \gg \bm 0$ if and only if
	\begin{align*}
		\bm r_i^\top \bm x &> 0~\text{(resp. $< 0$)} &&\text{for each $i$ with $t_\rho(i) = \omega$ (resp. $= -\omega$)} \\
		\bm s_i^\top \bm x &> 0~\text{(resp. $< 0$)} && \text{for each $i$ with $t_\sigma(i) = \omega$ (resp. $= -\omega$)}
	\end{align*}
	Since the components $i$ with $t_{\rho}(i)=\omega$ are those where
	$\bm{r}_i^\top\bm{a}_1,\bm{r}_i^\top\bm{a}_2,\ldots$ diverges to
	$\infty$ (and analogous relationships hold for $t_\rho(i)=-\omega$ and
	for $t_\sigma(i)$), we must have $\bm{U_p}\bm{d}_\infty\gg\bm{0}$.
\end{description}

Let us now formally provide a list of necessary and sufficient conditions on
$\bm{a}$, $\bm{d}_c$, and $\bm{d}_\infty$ for the existence of a sequence
compatible with $\bm{p}$ that satisfies the equality constraints for $\bm{c}$.
\begin{lemma}\label{lem:compatible-vectors}
Let $\bm c \in \R^d$ and $\bm p$ be a profile.
Then there exists a sequence compatible with $\bm p$ that satisfies the equality constraints for $\bm c$ if and only if
there are vectors $\bm a, \bm d_c, \bm d_\infty \in \R^d$ with $\bm d_c \ne \bm 0$ with
\begin{enumerate}
\item $\bm{L_p} \bm a = \bm{\ell_p}$, $\bm{C_p} \bm a = \bm{c_p}$,
\item $\bm{D_p} \bm d_c \gg \bm 0$, $\bm{C_p} \bm d_c = \bm 0$,
\item $\bm{L_p} \bm d_\infty = \bm 0$, $\bm{C_p} \bm d_\infty = \bm 0$, $\bm{U_p} \bm d_\infty \gg \bm 0$, and
\item $\bm u_j^\top \bm d_c = \bm u_j^\top \bm d_\infty = 0$, $\bm v_j^\top \bm d_c = \bm v_j^\top \bm d_\infty = 0$,
$(\bm u_j^\top - \bm v_j^\top) \bm a = \bm w_j^\top \bm c + d_j$ for all $j \in \{1,\dots,m\}$.
\end{enumerate}
\end{lemma}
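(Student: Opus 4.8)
The plan is to prove the two implications separately; the ``if'' direction is a direct verification using the sequence \eqref{general-form-reals} (and is essentially how conditions (1)--(4) were reverse-engineered), while the ``only if'' direction is the substantial part and rests on \cref{existence-d-infty}. For the ``if'' direction, suppose $\bm a,\bm d_c,\bm d_\infty$ with $\bm d_c\ne\bm 0$ satisfy (1)--(4) and put $\bm a_k=\bm a-\tfrac1k\bm d_c+k\bm d_\infty$. First I would note that, after deleting a finite initial segment, the $\bm a_k$ are pairwise distinct: this is immediate from $\bm d_c\ne\bm 0$ unless $\bm d_\infty$ is a nonpositive multiple of $\bm d_c$, in which case the scalar coefficient of $\bm d_c$ in $\bm a_k$ is eventually strictly monotone. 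Then I check compatibility component by component. For $i$ with $t_\rho(i)=0$, the conditions $\bm C_p\bm a=\bm c_p$, $\bm C_p\bm d_c=\bm 0$, $\bm C_p\bm d_\infty=\bm 0$ force $\bm r_i^\top\bm a_k=\rho(i)$ for all $k$. For $i$ with $t_\rho(i)=\pm1$, the conditions $\bm L_p\bm a=\bm\ell_p$ and $\bm L_p\bm d_\infty=\bm 0$ give $\bm r_i^\top\bm a_k=\rho(i)-\tfrac1k\bm r_i^\top\bm d_c$, and $\bm D_p\bm d_c\gg\bm 0$ pins down the sign of $\bm r_i^\top\bm d_c$, so this sequence is strictly monotone and converges to $\rho(i)$ from the side prescribed by $t_\rho(i)$. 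For $i$ with $t_\rho(i)=\pm\omega$, $\bm U_p\bm d_\infty\gg\bm 0$ fixes the sign of $\bm r_i^\top\bm d_\infty$, so $\bm r_i^\top\bm a_k$ diverges to $\pm\infty$ and is eventually strictly monotone. The analogous statements for the $\bm s_i$-components follow in the same way. Finally, condition (4) gives $\bm u_j^\top\bm a_k=\bm u_j^\top\bm a$ and $\bm v_j^\top\bm a_k=\bm v_j^\top\bm a$ for every $k$, so $(\bm u_j^\top-\bm v_j^\top)\bm a=\bm w_j^\top\bm c+d_j$ yields $\bm u_j^\top\bm a_k=\bm v_j^\top\bm a_\ell+\bm w_j^\top\bm c+d_j$ for all $k<\ell$; deleting finitely many terms preserves everything.

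For the ``only if'' direction, let $(\bm a_k)_{k\ge1}$ be compatible with $\bm p$ and satisfy the equality constraints for $\bm c$. I would first normalise: applying the equality constraints with $\ell=k+1$ and $\ell=k+2$ forces $\bm v_j^\top\bm a_\ell$ to be constant for $\ell\ge2$, and hence $\bm u_j^\top\bm a_k$ constant for $k\ge1$; after deleting $\bm a_1$ we may assume $\bm u_j^\top\bm a_k$ and $\bm v_j^\top\bm a_k$ are constant, say equal to $U_j$ and $V_j$, with $U_j-V_j=\bm w_j^\top\bm c+d_j$. Next I set up \cref{existence-d-infty}: let $\bm A$ be the linear map whose rows are $\bm r_i^\top$ for $i$ with $t_\rho(i)\in\{-1,0,1\}$, $\bm s_i^\top$ for $i$ with $t_\sigma(i)\in\{-1,0,1\}$, and $\bm u_j^\top,\bm v_j^\top$ for all $j$; and let $\bm B$ have rows $\bm r_i^\top$ (resp.\ $-\bm r_i^\top$) when $t_\rho(i)=\omega$ (resp.\ $-\omega$) and $\bm s_i^\top$ (resp.\ $-\bm s_i^\top$) when $t_\sigma(i)=\omega$ (resp.\ $-\omega$). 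By compatibility, $\bm A\bm a_k$ converges to the vector $\bm v$ with entries $\rho(i),\sigma(i),U_j,V_j$, and $\bm B$ is simultaneously unbounded on $(\bm a_k)_{k\ge1}$, so \cref{existence-d-infty} yields $\bm a$ with $\bm A\bm a=\bm v$ and $\bm d_\infty$ with $\bm A\bm d_\infty=\bm 0$, $\bm B\bm d_\infty\gg\bm 0$. Since $\bm L_p$ and $\bm C_p$ (with $\bm\ell_p,\bm c_p$), the rows $\bm u_j^\top,\bm v_j^\top$, and $\bm U_p$ are sub-blocks of $\bm A$ and $\bm B$, this gives (1), (3), the $\bm d_\infty$-parts of (2) and (4), and $(\bm u_j^\top-\bm v_j^\top)\bm a=U_j-V_j=\bm w_j^\top\bm c+d_j$. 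To obtain $\bm d_c$, I take $\bm d_c:=\bm a-\bm a_K$ for a fixed large $K$: then $\bm r_i^\top\bm d_c=\rho(i)-\bm r_i^\top\bm a_K$ is $0$ when $t_\rho(i)=0$ and has the right sign (positive for $t_\rho(i)=1$, negative for $t_\rho(i)=-1$) because $\bm r_i^\top\bm a_K$ sits strictly on the correct side of its limit $\rho(i)$; with the analogous facts for $\bm s_i$ and with $\bm u_j^\top\bm d_c=\bm v_j^\top\bm d_c=0$, this yields $\bm C_p\bm d_c=\bm 0$, $\bm D_p\bm d_c\gg\bm 0$, and the $\bm d_c$-part of (4). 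If some $i$ has $t_\rho(i)\in\{-1,1\}$ or $t_\sigma(i)\in\{-1,1\}$ then $\bm d_c\ne\bm 0$ is automatic; otherwise $\bm D_p$ is empty and I take instead $\bm d_c:=\bm a_K-\bm a_{K+1}\ne\bm 0$, which still lies in the kernels of $\bm C_p$ and of all $\bm u_j^\top,\bm v_j^\top$.

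I expect the main obstacle to be the ``only if'' direction, and within it the step of packaging all the divergence requirements into a single vector $\bm d_\infty$ lying in the correct linear subspace --- this is precisely \cref{existence-d-infty}, whose own proof (a coset argument that peels off one divergent coordinate at a time) is where the real work lies. Once that lemma is available, the remainder is bookkeeping: matching $\bm L_p,\bm C_p,\bm D_p,\bm U_p$ and the equality-constraint rows to $\bm A$ and $\bm B$, normalising so that the $\bm u_j^\top$- and $\bm v_j^\top$-images along the sequence are constant, and the small case distinction that guarantees $\bm d_c\ne\bm 0$.
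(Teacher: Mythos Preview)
Your proof is correct and follows essentially the same approach as the paper: both directions use the sequence $\bm a_k=\bm a-\tfrac1k\bm d_c+k\bm d_\infty$, and the ``only if'' direction assembles $\bm L_p,\bm C_p$ and the equality-constraint rows into a single matrix $\bm A$ (with $\bm B=\bm U_p$) to invoke \cref{existence-d-infty} for $\bm a$ and $\bm d_\infty$. The one simplification the paper makes is to choose $\bm d_c:=\bm a_2-\bm a_1$ directly from the sequence rather than $\bm a-\bm a_K$; this is automatically nonzero (the $\bm a_k$ are pairwise distinct) and avoids your case distinction on whether $\bm D_p$ is empty.
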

\begin{proof}
We start with the ``only if'' direction.
Let $(\bm a_k)_{k\ge1}$ be a sequence compatible with $\bm p$ that satisfies the equality constraints for $\bm c$.
First observe that the equality constraints imply that 
$\bm u_j^\top \bm a_k = \bm u_j^\top \bm a_\ell$,
$\bm v_j^\top \bm a_k = \bm v_j^\top \bm a_\ell$, and
$(\bm u_j^\top - \bm v_j^\top) \bm a_k = \bm w_j^\top \bm c + d_j$
for all $2 \le k < \ell$.
Thus, by removing the first vector of the sequence we can assume that $(\bm a_k)_{k\ge1}$ fulfills this property already for $1 \le k < \ell$.
For $\bm d_c$ we choose $\bm a_2 - \bm a_1$ where $\bm d_c \ne \bm 0$ since $\bm a_1 \ne \bm a_2$.
This fulfills (2) since the sequences $\bm \rho_i$ and $\bm \sigma_i$ are strictly increasing/decreasing if $t_\rho,t_\sigma \in \{-1,1\}$
and constant if $t_\rho = t_\sigma = 0$.
Moreover, $\bm d_c$ fulfills (4) since 
$\bm u_j^\top \bm d_c = \bm u_j^\top \bm a_2 - \bm u_j^\top \bm a_1 = 0$ and
$\bm v_j^\top \bm d_c = \bm v_j^\top \bm a_2 - \bm v_j^\top \bm a_1 = 0$.
Let $\bm A$ be the matrix obtained by concatenating $\bm{L_p}$ and $\bm{C_p}$ vertically and
adding the rows $\bm u_j^\top$, $\bm v_j^\top$, and $\bm u_j^\top - \bm v_j^\top$ for all $j \in \{1,\dots,m\}$.
In parallel, we define the vector $\bm v$ as the vertical concatenation of $\bm{\ell_p}$ and $\bm{c_p}$
extended by the entry $\bm u_j^\top \bm a_1$ in the row of $\bm u_j^\top$, 
the entry $\bm v_j^\top \bm a_1$ in the row of $\bm v_j^\top$, and
the entry $\bm w_j^\top \bm c + d_j$ in the row of $\bm u_j^\top - \bm v_j^\top$.
Then we have that the sequence $(\bm A \bm a_k)_{k\ge1}$ converges against $\bm v$.
We can now apply \Cref{existence-d-infty} for $\bm A$, $\bm v$, and $\bm B := \bm{U_p}$
to obtain vectors $\bm a$ and $\bm d_\infty$ with the desired properties.

For the ``if'' direction let $\bm a, \bm d_c, \bm d_\infty \in \R^d$ be as in the lemma.
We claim that the sequence with $\bm a_k := \bm a -\frac{1}{k}\bm d_c + k \bm d_\infty$ for all $k \ge k_0$ and sufficiently large $k_0$ is as desired.
For convenient notation, define the sequence $\bm \rho_i = \bm r_i^\top \bm a_k$ for each $i$.

We first show that the sequence $\bm{a}_1,\bm{a}_2,\ldots$ is compatible with $\bm p$.
Since $\bm d_c \ne \bm 0$, the $\bm a_k$ are pairwise distinct for all $k \ge k_0$ and sufficiently large $k_0$.
If $t_\rho(i) = 1$ (resp. $t_\rho(i) = -1$), then
$\bm r_i^\top \bm a_k = \bm r_i^\top \bm a -\frac{1}{k}\bm r_i^\top \bm d_c + k \bm r_i^\top \bm d_\infty$
and since $\bm r_i^\top \bm d_\infty = 0$, $\bm r_i^\top \bm a = \rho(i)$, and $\bm r_i^\top \bm d_c > 0$ (resp. $< 0$),
we have that the sequence $\bm \rho_i$ is strictly increasing (resp. decreasing) and converges against $\rho(i)$ from below (resp. above).
If $t_\rho(i) = 0$, then $\bm \rho_i$ is constantly $\rho(i)$ since
$\bm r_i^\top \bm d_\infty = 0$, $\bm r_i^\top \bm a = \rho(i)$, and $\bm r_i^\top \bm d_c = 0$.
Finally, if $t_\rho(i) = \omega$ (resp. $t_\rho(i) = -\omega$), then
$\bm r_i^\top \bm d_\infty > 0$ (resp. $< 0$) which means that for sufficiently large $k_0$, 
the sequence $(\bm \rho_i)_{k\ge k_0} = (\bm r_i^\top \bm a_k)_{k \ge k_0}$ is strictly increasing (resp.\ decreasing) and diverges to $\infty$ (resp. $-\infty$).
The statement can be shown analogously for $\bm \sigma_i$.

It remains to show that the sequence satisfies the equality constraints for $\bm c$.
By (4) we have that
$\bm u_j^\top (\bm a -\frac{1}{k}\bm d_c + k \bm d_\infty) = \bm v_j^\top (\bm a -\frac{1}{\ell}\bm d_c + \ell \bm d_\infty) + \bm w_j^\top \bm c + d_j$
for $k < \ell$ if and only if
$\bm u_j^\top \bm a  = \bm v_j^\top \bm a + \bm w_j^\top \bm c + d_j$
which holds if and only if
$(\bm u_j^\top - \bm v_j^\top) \bm a = \bm w_j^\top \bm c + d_j$
which is fulfilled by (4).
\end{proof}

\subsection*{Constructing the formula}
We now prove \Cref{thm:real-main} in the general case, i.e., $\varphi(\bm{x},\bm{y},\bm{z})$ is an arbitrary existential LRA formula.
If $\varphi$ is a conjunction of inequalities, \Cref{lem:compatible-vectors} essentially tells us how to construct an existential formula for $\ram\bm{x},\bm{y}\colon\varphi(\bm{x},\bm{y},\bm{z})$. Moreover, by \cref{thm:qe-mixed}, we may assume $\varphi$ to be quantifier-free.
Thus, it remains to treat the case that $\varphi$ is a Boolean combination of constraints as in \eqref{reals-conjunction}. 

We first move all negations inward and, if necessary, negate atoms, so that we are left with a positive Boolean combination of atoms.
Let $\alpha_i := \bm{r}_i^\top \bm{x} < \bm{s}_i^\top\bm{y} + \bm{t}_i^\top\bm{z}+h_i$ for $i \in [1,n]$ be the inequality atoms and
$\beta_j := \bm{u}_j^\top \bm{x} = \bm{v}_j^\top\bm{y}+\bm{w}_j^\top\bm{z}+d_j$ for $j \in [1,m]$ be the equality atoms in $\varphi$.

As in the Presburger case, we now guess a subset of the atoms and then assert that (i)~satisfying all these atoms makes $\varphi$ true and (ii)~there exists a clique satisfying the conjunction of these atoms.

Let $\varphi'$ be the formula obtained from $\varphi$ by replacing each $\alpha_i$ by $q_i^< = 1$ for a fresh variable $q_i^<$, for all $i \in [1,n]$, and
each $\beta_j$ by $q_j^= = 1$ for a fresh variable $q_j^=$, for all $j \in [1,m]$,
and add the restrictions $q_i^< = 0 \vee q_i^< = 1$ and $q_j^= = 0 \vee q_j^= = 1$.
Now, $\varphi$ is equivalent to
\[\psi := \exists \bm{q}^<, \bm{q}^= \colon \varphi' \wedge \bigwedge_{i=1}^n (q_i^< = 1 \to \alpha_i) \wedge 
\bigwedge_{j=1}^m (q_j^= = 1 \to \beta_j).\]
We represent a profile $\bm p$ by the variables $\rho_i$, $\sigma_i$, $t_{\rho,i}$, and $t_{\sigma,i}$ for all $i \in [1,n]$
where $\rho_i,\sigma_i$ range over $\R$ and $t_{\rho,i}, t_{\sigma,i}$ range over $\{-2,-1,0,1,2\}$.
Here, $-2$ and $2$ represent $-\omega$ and $\omega$, respectively.

We now define formulas for the inequalities and equality constraints from \Cref{lem:compatible-vectors}.
For $i \in [1,n]$, let $\rho_i, \sigma_i, t_{\rho,i}, t_{\sigma,i}, \bm x, \bm x_c, \bm x_\infty$ be fresh variables. Our first formula $\lambda_i$ contains all the constraints from $\bm{L}_p\bm{x}=\bm{\ell_p}$ and $\bm{L_p}\bm{x}_\infty=\bm{0}$ that stem from the atom $\alpha_i$:
\begin{align*}
\lambda_i :=\ & ((t_{\rho,i} = -1 \vee t_{\rho,i} = 1) \to \bm r_i^\top \bm x = \rho_i \wedge \bm r_i^\top \bm x_\infty = 0) \ \wedge \\
& ((t_{\sigma,i} = -1 \vee t_{\sigma,i} = 1) \to \bm s_i^\top \bm x = \sigma_i \wedge \bm s_i^\top \bm x_\infty = 0)
\end{align*}
Next, $\chi_i$ states the constraints about constant values---meaning: those from $\bm{C_p}\bm{x}=\bm{c_p}$ and $\bm{C_p}\bm{x}_c=\bm{C_p}\bm{x}_\infty=\bm{0}$---that stem from $\alpha_i$:
\begin{align*}
\chi_i :=\ & (t_{\rho,i} = 0 \to \bm r_i^\top \bm x = \rho_i \wedge \bm r_i^\top \bm x_c = 0 \wedge \bm r_i^\top \bm x_\infty = 0) \ \wedge \\
& (t_{\sigma,i} = 0 \to \bm s_i^\top \bm x = \sigma_i \wedge \bm s_i^\top \bm x_c = 0 \wedge \bm s_i^\top \bm x_\infty = 0)
\end{align*}
With $\delta_i$, we express the convergence constraints from $\bm{D_p}\bm{x}_c\gg \bm{0}$ required by $\alpha_i$:
\begin{align*}
\delta_i :=\ & (t_{\rho,i} = -1 \to \bm r_i^\top \bm x_c < 0) \wedge
(t_{\rho,i} = 1 \to \bm r_i^\top \bm x_c > 0) \ \wedge \\
& (t_{\sigma,i} = -1 \to \bm s_i^\top \bm x_c < 0) \wedge
(t_{\sigma,i} = 1 \to \bm s_i^\top \bm x_c > 0)
\end{align*}
Furthermore, $\mu_i$ states the unboundedness condition $\bm{U}_p\bm{x}_\infty\gg\bm{0}$:
\begin{align*}
\mu_i :=\ & (t_{\rho,i} = -2 \to \bm r_i^\top \bm x_\infty < 0) \wedge
(t_{\rho,i} = 2 \to \bm r_i^\top \bm x_\infty > 0) \ \wedge \\
& (t_{\sigma,i} = -2 \to \bm s_i^\top \bm x_\infty < 0) \wedge
(t_{\sigma,i} = 2 \to \bm s_i^\top \bm x_\infty > 0)
\end{align*}
Finally, $\varepsilon_j$ expresses the equality constraints~(5) in \cref{lem:compatible-vectors}: For all $j \in [1,m]$ let
\begin{align*}
\varepsilon_j :=\ & \bm u_j^\top \bm x_c = 0 \wedge \bm u_j^\top \bm x_\infty = 0 \wedge \bm v_j^\top \bm x_c = 0 \wedge \bm v_j^\top \bm x_\infty = 0 \wedge
(\bm u_j^\top - \bm v_j^\top) \bm x = \bm w_j^\top \bm z + x_j.
\end{align*}
To check if $\bm{p}$ is a $\bm z$-admissible profile, we define the formula
\begin{align*}
\theta :=\ & \bigwedge_{i=1}^n t_{\rho,i} \in \{-2,-1,0,1,2\} \wedge t_{\sigma,i} \in \{-1,0,1,2\} \wedge (t_{\rho,i} = 2 \to t_{\sigma,i} = 2) \ \wedge \\
& [(t_{\rho,i} \in \{-1, 0\} \wedge t_{\sigma,i} \in \{0, 1\} \vee t_{\rho,i} = -1 \wedge t_{\sigma,i} = -1)
\to \rho_i < \sigma_i + \bm t_i^\top \bm z + h_i] \ \wedge \\
& [(t_{\rho,i} = 0 \wedge t_{\sigma,i} = -1 \vee t_{\rho,i} = 1) \to \rho_i \le \sigma_i + \bm t_i^\top \bm z + h_i]
\end{align*}
where we use set notation as a shorthand.
Then we claim that $\ram \bm{x},\bm{y} \colon \psi(\bm{x},\bm{y},\bm{z})$ is equivalent to
\begin{align*}
\gamma :=\ & \exists \bm{q}^<, \bm{q}^=, \bm{p}, \bm{x}, \bm x_c, \bm x_\infty \colon \varphi' \wedge \theta \wedge \bm x_c \ne \bm{0} \ \wedge \\
&\bigwedge_{i=1}^n (q_i^< = 1 \to \lambda_i \wedge \chi_i \wedge \delta_i \wedge \mu_i) \wedge 
\bigwedge_{j=1}^m (q_j^= = 1 \to \varepsilon_j).
\end{align*}
We show that for any valuation $\bm{c} \in \R^d$ of $\bm{z}$ we have
$\ram \bm{x},\bm{y} \colon \psi(\bm{x},\bm{y},\bm{c})$ if and only if $\gamma(\bm{c})$.
For an assignment $\nu$ of the $q_i^<,q_j^=$ to $\{0,1\}$ let 
$I_\nu := \{i \in [1,n] \mid \nu(q_i^<) = 1\}$ and $J_\nu := \{j \in [1,m] \mid \nu(q_j^=) = 1\}$.
By Ramsey's theorem, $\ram \bm{x},\bm{y} \colon \psi(\bm{x},\bm{y},\bm{c})$ holds if and only if there 
there is an assignment $\nu$ of the $q_i^<,q_j^=$ satisfying $\varphi'$ such that
$\ram \bm{x},\bm{y} \colon \bigwedge_{i \in I_\nu} \alpha_i(\bm{x},\bm{y},\bm{c}) \wedge \bigwedge_{j \in J_\nu} \beta_j(\bm{x},\bm{y},\bm{c})$.
By \Cref{lem:ramsey-compatible,lem:compatible-vectors}, this is equivalent to
$\exists \bm{p},\bm{x},\bm x_c, \bm x_\infty \colon \theta(\bm p,\bm c) \wedge \bm x_c \ne \bm{0} \wedge 
\bigwedge_{i \in I_\nu} \lambda_i \wedge \chi_i \wedge \delta_i \wedge \mu_i \wedge 
\bigwedge_{j \in J_\nu} \varepsilon_j(\bm{x}, \bm x_c, \bm x_\infty, \bm{c})$.
This holds for some assignment $\nu$ of the $q_i^<,q_j^=$ satisfying $\varphi'$ if and only if $\gamma(\bm{c})$.

Using standard arguments, one can observe that \cref{thm:real-main} has an analogue over the rationals:
\begin{theorem}\label{thm:rational-main}
Given an existential formula $\varphi(\bm x, \bm y, \bm z)$
over $\QLin$, we can construct in polynomial time an
existential formula over $\QLin$ of linear size that is
equivalent to $\ram \bm x, \bm y \colon \varphi(\bm x, \bm
y, \bm z)$.
\end{theorem}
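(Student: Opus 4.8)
The plan is to reduce \cref{thm:rational-main} to \cref{thm:real-main} by observing that the existential formula $\gamma(\bm z)$ produced by the construction in the proof of \cref{thm:real-main} is already a $\QLin$-formula and still defines the right relation when everything is interpreted over $\Q$. Concretely, given $\varphi(\bm x,\bm y,\bm z)$ over $\QLin$ — which is literally an LRA-syntax formula — I would run the very same polynomial-time construction as in \cref{thm:real-main} to obtain $\gamma(\bm z)$. It uses only $+$, $<$, $0$, $1$, the rational coefficients $\bm r_i,\bm s_i,\bm t_i,\bm u_j,\bm v_j,\bm w_j$ and constants $h_i,d_j$ inherited from $\varphi$, and the integer constants $-2,\dots,2$ for the profile-type variables, so it is an existential $\QLin$-formula of linear size computable in polynomial time. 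It then remains to show that, over $\Q$, $\gamma$ is equivalent to $\ram\bm x,\bm y\colon\varphi(\bm x,\bm y,\bm z)$, i.e.\ that the two formulas agree on every rational valuation $\bm c$ of $\bm z$.

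The first ingredient is the standard fact that a finite system of linear constraints with rational coefficients, allowing $=$, $\le$, and strict $<$, is solvable over $\R$ if and only if it is solvable over $\Q$: if the real solution set is nonempty, it is the intersection of a nonempty rational polyhedron $P$ with an open region, and since the rational points are dense in $P$ they meet that open region. Hence every existential $\QLin$-formula, evaluated at a rational parameter tuple, holds over $\R$ iff it holds over $\Q$. Applying this to $\gamma$, we get that $\gamma(\bm c)$ holds over $\Q$ iff it holds over $\R$, which by \cref{thm:real-main} holds iff there is a \emph{real} infinite $\varphi$-clique at $\bm c$.

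The second ingredient is that a real $\varphi$-clique at a rational $\bm c$ exists iff a rational one does. One direction is trivial. For the other, I would unwind the proof of \cref{thm:real-main}: from $\gamma(\bm c)$ over $\Q$ one extracts, as in that proof, an assignment $\nu$ of the guess variables and \emph{rational} witnesses $\bm a,\bm d_c,\bm d_\infty$ (together with a rational profile $\bm p$ given by the variables $\rho_i,\sigma_i,t_{\rho,i},t_{\sigma,i}$) satisfying the conditions of \cref{lem:compatible-vectors} for the sub-conjunction of atoms selected by $\nu$; the ``if'' direction of \cref{lem:compatible-vectors} then yields the compatible sequence $\bm a_k=\bm a-\tfrac1k\bm d_c+k\bm d_\infty$, which is rational, and the ``if'' direction of \cref{lem:ramsey-compatible} extracts from it an actual clique as a (still rational) subsequence; finally monotonicity of the positive Boolean combination $\varphi$ upgrades this clique for the sub-conjunction to a clique for $\varphi$ itself, exactly as in the real case. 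Chaining the two ingredients gives $\ram\bm x,\bm y\colon\varphi(\bm x,\bm y,\bm z)\equiv\gamma(\bm z)$ over $\Q$, which is the claim.

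I do not expect a genuine obstacle; the only point needing a little care is the first ingredient in the presence of strict inequalities, where one must note that the rational points of a rational polyhedron are dense in it. Everything else is a direct re-reading of the real-arithmetic proof, keeping track of the fact that all witnesses it produces are rational whenever the input data are.
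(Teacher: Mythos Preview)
Your proposal is correct and follows essentially the same route as the paper: run the LRA construction verbatim to obtain $\gamma$, transfer truth of $\gamma(\bm c)$ between $\Q$ and $\R$ for rational $\bm c$, and for the converse direction unwind the ``if'' parts of \cref{lem:ramsey-compatible,lem:compatible-vectors} with rational witnesses to exhibit a rational clique $\bm a_k=\bm a-\tfrac1k\bm d_c+k\bm d_\infty$. The only cosmetic difference is that the paper invokes elementary equivalence of $\RLin$ and $\QLin$ for the transfer step, whereas you argue it directly via density of rationals in rational polyhedra.
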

\begin{proof}
	We can view $\varphi(\bm x, \bm y, \bm z)$ as a
	formula over LRA and thus obtain a formula
	$\psi(\bm z)$ that is equivalent to $\ram \bm x,\bm
	y\colon \varphi(\bm x,\bm y,\bm z)$ over LRA. We
	claim that $\psi(\bm z)$ is equivalent to $\ram \bm
	x,\bm y\colon \varphi(\bm x,\bm y,\bm z)$ over
	$\QLin$ as well.
	
	Suppose $\ram \bm x,\bm y\colon \varphi(\bm x,\bm
	y, \bm{c})$ holds in $\QLin$ for some vector
	$\bm c\in\Q^{|\bm z|}$. Then in particular, $\ram
	\bm x,\bm y\colon \varphi(\bm x,\bm y, \bm{c})$
	holds in LRA and thus by construction,
	$\psi(\bm{c})$  holds in LRA. Since LRA and $\QLin$
	are elementarily equivalent, this implies that
	$\psi(\bm{c})$ holds in $\QLin$.
	
	Conversely, suppose $\psi(\bm c)$ holds for some
	$\bm{c}\in\Q^{|\bm z|}$. Recall that $\psi$ is
	constructed by first writing $\varphi$ equivalently
	as a disjunction $\bigvee_{i=1}^n \varphi_i$, where
	each $\varphi_i$ is a conjunction as in
	\cref{reals-conjunction}. Then, to each
	$\varphi_i$, we associate the system $S_i$ of
	linear equalities described in
	\cref{lem:compatible-vectors}, where one can view
	the entries of $\bm{p}$ as variables. Since
	$\psi(\bm c)$ holds in $\QLin$, some system $S_i$
	has a rational solution. Moreover, since we treat
	the entries of $\bm p$ as variables, $\bm p$ is
	rational as well. Moreover, since it stems from a
	satisfying assignment of $\psi(\bm c)$, $\bm{p}$ is
	also $\bm c$-admissible. Now an inspection of the
	proof of \cref{lem:compatible-vectors} yields that
	the clique constructed based on a rational solution
	of $S_i$ will be rational as well. Therefore, there
	exists a rational clique that witnesses $\ram \bm
	x,\bm y\colon\varphi(\bm x,\bm y,\bm c)$.
\end{proof}
 
\section{Ramsey quantifiers in Linear Integer Real Arithmetic}\label{sec:mixed}
We now show elimination of the Ramsey quantifier in LIRA. At the end of the section, we mention a version of this result for the structure $\QfLin$ (\cref{thm:mixed-rational-main}).
\begin{theorem}\label{thm:mixed-main}
Given an existential formula $\varphi(\bm x, \bm y, \bm z)$ in LIRA,
we can construct in polynomial time an existential formula in LIRA of linear size that is equivalent to $\ram \bm x, \bm y \colon \varphi(\bm x, \bm y, \bm z)$.
\end{theorem}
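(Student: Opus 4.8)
The plan is to reduce $\ram\bm{x},\bm{y}\colon\varphi$ over LIRA to the two pure cases already handled, by separating every variable into its integer and fractional part and then gluing the two resulting formulas with a combinatorial lemma about cliques in product relations. First I would write $\varphi$ in prenex form $\exists\bm w\colon\varphi_0$ and invoke \Cref{thm:qe-mixed}, so we may assume $\varphi$ is quantifier-free. Next, following the separation construction behind \Cref{lem:decomposition} (treating $\bm x,\bm y,\bm z$ as the free variables), introduce fresh integer variables $\bm z^{\mathrm{int}}$ and real variables $\bm z^{\mathrm{real}}$ and observe that $\psi(\bm z)=\ram\bm x,\bm y\colon\varphi(\bm x,\bm y,\bm z)$ is equivalent to
\[\exists\bm z^{\mathrm{int}},\bm z^{\mathrm{real}}\colon\ \bm z=\bm z^{\mathrm{int}}+\bm z^{\mathrm{real}}\ \wedge\ 0\le\bm z^{\mathrm{real}}<1\ \wedge\ \ram(\bm x^{\mathrm{int}},\bm x^{\mathrm{real}}),(\bm y^{\mathrm{int}},\bm y^{\mathrm{real}})\colon\hat\varphi,\]
where $\hat\varphi$ is obtained from $\varphi$ by the construction of \Cref{lem:decomposition}: it is floor-free, it contains the constraints $0\le\bm x^{\mathrm{real}},\bm y^{\mathrm{real}}<1$, and it is, up to a block of auxiliary existential quantifiers, a Boolean combination of quantifier-free Presburger formulas over $\bm x^{\mathrm{int}},\bm y^{\mathrm{int}},\bm z^{\mathrm{int}}$ and quantifier-free LRA formulas over $\bm x^{\mathrm{real}},\bm y^{\mathrm{real}},\bm z^{\mathrm{real}}$. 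A second application of \Cref{thm:qe-mixed} absorbs this auxiliary block under the Ramsey quantifier; since the matrix is floor-free with every atom pure-integer or pure-real and the substitution introduced by \Cref{thm:qe-mixed} is type-preserving, this remains true afterwards. After moving negations to the atoms we are left with $\ram(\bm X^{\mathrm{int}},\bm X^{\mathrm{real}}),(\bm Y^{\mathrm{int}},\bm Y^{\mathrm{real}})\colon\chi$, with $\chi$ a positive Boolean combination of pure-integer atoms $\alpha_i$ and pure-real atoms $\beta_j$.

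The core is a gluing lemma. A $\chi$-clique $\bm a_1,\bm a_2,\dots$ consists of pairs $(\bm a_k^{\mathrm{int}},\bm a_k^{\mathrm{real}})$ that are pairwise distinct, which need not mean distinctness in either coordinate; but by pigeonhole followed by Ramsey's theorem we may pass to a subsequence in which the integer parts are either pairwise distinct or all equal, and likewise for the fractional parts (and not both can be all equal). This gives, for a single conjunction $\alpha\wedge\beta$ with $\alpha$ pure-integer and $\beta$ pure-real,
\[\ram(\bm X^{\mathrm{int}},\bm X^{\mathrm{real}}),(\bm Y^{\mathrm{int}},\bm Y^{\mathrm{real}})\colon\alpha\wedge\beta\ \equiv\ (A_\alpha\wedge A_\beta)\ \vee\ (A_\alpha\wedge L_\beta)\ \vee\ (L_\alpha\wedge A_\beta),\]
where $A_\alpha:=\ram\bm X^{\mathrm{int}},\bm Y^{\mathrm{int}}\colon\alpha$ and $L_\alpha:=\exists\bm w\colon\alpha(\bm w,\bm w,\bm z^{\mathrm{int}})$, and $A_\beta,L_\beta$ are the analogues over the real part: either both factors carry an infinite clique (pair them index by index), or one carries an infinite clique while the other is pinned to a constant point satisfying a loop.

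To keep the output linear I would avoid expanding $\chi$ into disjunctive normal form and instead reuse the ``guess a subset of atoms'' device from \Cref{sec:presburger,sec:reals}: introduce $\{0,1\}$-valued variables $q_i^{\mathrm{int}},q_j^{\mathrm{real}}$, the propositional certificate $\chi'$ obtained from $\chi$ by replacing each $\alpha_i$ with $q_i^{\mathrm{int}}=1$ and each $\beta_j$ with $q_j^{\mathrm{real}}=1$, and the single parametrized formulas $\alpha^\ast:=\bigwedge_i(q_i^{\mathrm{int}}=1\to\alpha_i)$ and $\beta^\ast:=\bigwedge_j(q_j^{\mathrm{real}}=1\to\beta_j)$. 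By Ramsey's theorem, $\ram(\dots)\colon\chi$ reduces to the existence of an assignment of the $\bm q$'s satisfying $\chi'$ for which $\ram(\dots)\colon\alpha^\ast\wedge\beta^\ast$ holds. Applying \Cref{main-presburger} to $\ram\colon\alpha^\ast$ and \Cref{thm:real-main} to $\ram\colon\beta^\ast$ produces existential formulas of linear size for $A_{\alpha^\ast}$ and $A_{\beta^\ast}$, while $L_{\alpha^\ast}$ and $L_{\beta^\ast}$ are existential formulas directly. Plugging these into the three-way disjunction above, conjoining $\chi'$, the restrictions $q_i^{\mathrm{int}},q_j^{\mathrm{real}}\in\{0,1\}$ and the separation constraints on $\bm z$, and existentially quantifying $\bm z^{\mathrm{int}},\bm z^{\mathrm{real}}$, the $\bm q$'s, and the clique parameters produced by \Cref{main-presburger,thm:real-main}, yields an existential LIRA formula of linear size equivalent to $\psi$, computable in polynomial time. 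The variant over $\QfLin$ (\Cref{thm:mixed-rational-main}) would then follow by the same elementary-equivalence argument as in the proof of \Cref{thm:rational-main}.

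I expect the main obstacle to be the gluing lemma of the second paragraph together with its interplay with the parametrized subroutines: one has to verify precisely that a clique in the product relation decomposes into exactly those three cases---especially getting the loop cases right when a coordinate is forced constant---and that $\alpha^\ast$ and $\beta^\ast$ can be fed to \Cref{main-presburger} and \Cref{thm:real-main} as single formulas with the $\bm q$'s as parameters, so that the guessing of atoms is done once rather than once per guessed subset, which is what makes the whole construction linear.
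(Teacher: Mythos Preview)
Your proposal is correct and follows essentially the same approach as the paper: decompose via \Cref{lem:decomposition}, absorb the resulting existentials under the Ramsey quantifier via \Cref{thm:qe-mixed}, guess a satisfying set of atoms to avoid a DNF blowup, split the mixed clique into the three cases (both parts vary, integer part constant, real part constant), and apply \Cref{main-presburger} and \Cref{thm:real-main} to the two pure pieces. The only cosmetic differences are that the paper encodes your three-way disjunction via a single $\{0,1\}$-valued selector variable $r$ (so that $r=0$ allows the real loop and $r=1$ allows the integer loop, making the fourth case $r=0\wedge r=1$ vacuous), and recovers the original free variables by substituting $\lfloor z_i\rfloor$ and $z_i-\lfloor z_i\rfloor$ rather than existentially quantifying $\bm z^{\mathrm{int}},\bm z^{\mathrm{real}}$.
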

\begin{proof}
It suffices to show the theorem for the decomposition of $\varphi$:
Given $\varphi$, we first compute its decomposition $\varphi'(\bm x^\mathrm{i/r},\bm y^\mathrm{i/r},\bm z^\mathrm{i/r})$ using \Cref{lem:decomposition}.
We then show how to compute a formula $\psi'(\bm z^\mathrm{i/r})$ in LIRA that is equivalent to 
$\ram \bm x^\mathrm{i/r},\bm y^\mathrm{i/r} \colon \varphi'(\bm x^\mathrm{i/r},\bm y^\mathrm{i/r},\bm z^\mathrm{i/r})$.
Let $\psi(\bm z)$ be the formula obtained from $\psi'$ by replacing every 
$z_i^\mathrm{int}$ by $\lfloor z_i \rfloor$ and every $z_i^\mathrm{real}$ by $z_i - \lfloor z_i \rfloor$.
Now $\psi$ is equivalent to $\ram \bm x, \bm y \colon \varphi(\bm x,\bm y,\bm z)$ since 
$\ram \bm x, \bm y \colon \varphi(\bm x,\bm y,\bm c)$ if and only if $\bm c = \bm c^\mathrm{int}+ \bm c^\mathrm{real}$ and
$\ram \bm x^\mathrm{i/r},\bm y^\mathrm{i/r} \colon \varphi'(\bm x^\mathrm{i/r},\bm y^\mathrm{i/r},\bm c^\mathrm{i/r})$.

Thus, we now assume that $\varphi'(\bm x^\mathrm{i/r},\bm y^\mathrm{i/r},\bm z^\mathrm{i/r})$ is a decomposition of $\varphi$.
By \Cref{thm:qe-mixed} we can assume that $\varphi'$ is quantifier-free.
We further assume that all negations are moved directly into the atoms. Let
$\alpha_1,\dots,\alpha_n$ be the atoms in LRA and $\beta_1,\dots,\beta_m$ be the Presburger atoms of $\varphi'$.
For fresh real variables $p_1,\dots,p_n$ and integer variables $q_1,\dots,q_m$ let $\sigma$ be the formula obtained from $\varphi'$ by replacing
every $\alpha_i$ by $p_i = 1$ and every $\beta_j$ by $q_j = 1$ and adding the constraints $p_i = 0 \vee p_i = 1$ and $q_j = 0 \vee q_j = 1$.
Then $\varphi'$ is equivalent to
\[\delta := \exists \bm p, \bm q \colon \sigma \wedge \bigwedge_{i=1}^n (p_i = 1 \to \alpha_i) \wedge \bigwedge_{j=1}^m (q_j = 1 \to \beta_j).\]
Since each $p_i$ and $q_j$ only has finitely many (in fact two) possible valuations, Ramsey's theorem implies that
$\ram \bm x^\mathrm{i/r}, \bm y^\mathrm{i/r} \colon \delta(\bm x^\mathrm{i/r},\bm y^\mathrm{i/r},\bm z^\mathrm{i/r})$
is equivalent to
\[\exists \bm p,\bm q \colon \sigma \wedge \ram \bm x^\mathrm{i/r},\bm y^\mathrm{i/r} \colon \bigwedge_{i=1}^n (p_i = 1 \to \alpha_i) \wedge
\bigwedge_{j=1}^m (q_j = 1 \to \beta_j).\]
Let $\alpha := \bigwedge_{i=1}^n (p_i = 1 \to \alpha_i)$ and $\beta := \bigwedge_{j=1}^m (q_j = 1 \to \beta_j)$.
To split the vectors of the Ramsey quantified variables into real and integer components, 
we have to allow that in the infinite clique either the real components or the integer components do not change throughout the clique.
To this end, we introduce a fresh variable $r$ that is either 0 or 1 and get the equivalent formula
\begin{align*}
\exists \bm p,\bm q, r \colon & \sigma \wedge (r = 0 \vee r = 1) \ \wedge\\
& \big[\big(\ram \bm x^\mathrm{real},\bm y^\mathrm{real} \colon \alpha(\bm p,\bm x^\mathrm{real},\bm y^\mathrm{real},\bm z^\mathrm{real})\big)
\vee r = 0 \wedge \exists \bm x^\mathrm{real} \colon \alpha(\bm p,\bm x^\mathrm{real},\bm x^\mathrm{real},\bm z^\mathrm{real})\big] \ \wedge\\
& \big[\big(\ram \bm x^\mathrm{int},\bm y^\mathrm{int} \colon \beta(\bm q,\bm x^\mathrm{int},\bm y^\mathrm{int},\bm z^\mathrm{int})\big)
\vee r = 1 \wedge \exists \bm x^\mathrm{int} \colon \beta(\bm q,\bm x^\mathrm{int},\bm x^\mathrm{int},\bm z^\mathrm{int})\big]
\end{align*}
where the Ramsey quantifiers can be eliminated by \Cref{thm:real-main,main-presburger}. 
\end{proof}

Let us mention a simple consequence.
\begin{corollary}\label{lira-infinite-clique-np}
The infinite clique problem for existential formulas in LIRA is $\NP$-complete.
\end{corollary}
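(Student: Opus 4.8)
The plan is to obtain both directions almost immediately from \Cref{thm:mixed-main} and \Cref{prop:sat-mixed}, so the statement is essentially a corollary of the Ramsey-elimination procedure together with the known complexity of plain satisfiability. For \textbf{membership in $\NP$}, I would observe that the infinite clique problem for an existential LIRA formula $\varphi(\bm x,\bm y)$ (with $\bm x,\bm y$ of the same type) is exactly the question whether the sentence $\ram \bm x,\bm y\colon \varphi(\bm x,\bm y)$ holds, i.e.\ the special case of Ramsey quantifier elimination in which there are no free variables $\bm z$. Thus I would first run the construction of \Cref{thm:mixed-main} to obtain, in deterministic polynomial time, an existential LIRA \emph{sentence} $\psi$ of linear size equivalent to $\ram \bm x,\bm y\colon \varphi(\bm x,\bm y)$. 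Since $\psi$ has no free variables, its truth coincides with its satisfiability, which by \Cref{prop:sat-mixed} is decidable in $\NP$. Composing the polynomial-time construction with the nondeterministic polynomial-time satisfiability test yields an $\NP$ algorithm for the infinite clique problem.

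For \textbf{$\NP$-hardness}, I would give a direct (linear-time) reduction from satisfiability of existential LIRA formulas, which is $\NP$-hard by \Cref{prop:sat-mixed}. Given an existential LIRA formula $\varphi(x_1,\dots,x_n)$, introduce one fresh coordinate and set $\bm x := (x_1,\dots,x_n,x_{n+1})$ and $\bm y := (y_1,\dots,y_n,y_{n+1})$ of the same type as $\bm x$, then output $\psi(\bm x,\bm y) := \varphi(x_1,\dots,x_n)$, a formula that mentions neither $\bm y$ nor $x_{n+1}$. If $\varphi(\bm a)$ holds for some valuation $\bm a$, then the valuations $\bm v_k := (\bm a,k)$ for $k\ge 1$ are pairwise distinct and satisfy $\psi(\bm v_i,\bm v_j)$ (which merely asserts $\varphi(\bm a)$) for all $i<j$, so $\psi$ has an infinite clique. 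Conversely, if $(\bm u_k)_{k\ge 1}$ is an infinite clique of $\psi$, then $\psi(\bm u_1,\bm u_2)$ holds, which means $\varphi$ is satisfied by the first $n$ components of $\bm u_1$; hence $\varphi$ is satisfiable. The extra coordinate is what supplies infinitely many distinct tuples while keeping the ``payload'' $\varphi(\bm a)$ fixed.

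I do not expect any real obstacle here: all the heavy lifting is done by \Cref{thm:mixed-main}, which is already established, and the hardness direction is a one-line padding argument over \Cref{prop:sat-mixed}. The only point that needs a moment's care is to note that the formula produced by \Cref{thm:mixed-main} is a genuine \emph{sentence}, so that deciding its truth really is an instance of existential LIRA satisfiability and \Cref{prop:sat-mixed} applies verbatim.
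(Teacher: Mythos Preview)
Your proposal is correct and essentially identical to the paper's proof: both obtain membership in $\NP$ by composing \Cref{thm:mixed-main} with \Cref{prop:sat-mixed}, and both obtain hardness by the same padding reduction that adds one fresh coordinate to a satisfiability instance so that infinitely many distinct tuples exist. The only cosmetic difference is that the paper reduces from Presburger/LRA satisfiability (citing \Cref{prop:sat-presburger,prop:sat-reals}) whereas you reduce from LIRA satisfiability via \Cref{prop:sat-mixed}; either source of $\NP$-hardness works.
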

\begin{proof}
	The $\NP$ lower bound already holds for existential formulas in Presburger arithmetic and LRA by a reduction from the respective satisfiability problem
	which by \Cref{prop:sat-presburger,prop:sat-reals} is $\NP$-complete:
	Let $\varphi(\bm x)$ be an existential formula in Presburger or LRA.
	Then $\varphi$ is satisfiable if and only if
	$\ram (\bm x,v),(\bm y,w) \colon \varphi(\bm x)$.
For the upper bound let $\varphi(\bm x,\bm y)$ be an existential formula in LIRA where $\bm x$ and $\bm y$ have the same dimension.
	By \Cref{thm:mixed-main} one can compute an existential formula $\psi$ in LIRA that is equivalent to $\ram \bm x, \bm y \colon \varphi(\bm x,\bm y)$ in polynomial time.
	Then, the $\NP$ upper bound follows from \Cref{prop:sat-mixed}.
\end{proof}
 
\begin{theorem}\label{thm:mixed-rational-main}
Given an existential formula $\varphi(\bm x, \bm y, \bm z)$ over $\QfLin$,
we can construct in polynomial time an existential formula of linear size that is equivalent to $\ram \bm x, \bm y \colon \varphi(\bm x, \bm y, \bm z)$ over $\QfLin$.
\end{theorem}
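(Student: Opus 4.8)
The plan is to reuse the construction behind \Cref{thm:mixed-main} almost verbatim, replacing only the real-arithmetic ingredient \Cref{thm:real-main} by its rational counterpart \Cref{thm:rational-main} while keeping \Cref{main-presburger} for the integer part. The first observation is that the decomposition of \Cref{lem:decomposition} is \emph{purely syntactic}: it only rewrites atoms according to their shape, using that for a value $r$ the floor $\lfloor r\rfloor$ is an integer with $0\le r-\lfloor r\rfloor<1$, and that the sum of two values in $[0,1)$ lies in $[0,1)$ or in $[1,2)$. All of these hold over $\Q$ as well, so a formula $\varphi(\bm x,\bm y,\bm z)$ over $\QfLin$ has a decomposition $\varphi'(\bm x^{\mathrm{i/r}},\bm y^{\mathrm{i/r}},\bm z^{\mathrm{i/r}})$ which is a Boolean combination of existential Presburger formulas over the integer parts and existential $\QLin$ formulas over the fractional parts, of linear size and computable in polynomial time; moreover \Cref{thm:qe-mixed} still lets us assume $\varphi'$ quantifier-free.

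Given this, one would carry out exactly the steps in the proof of \Cref{thm:mixed-main}. First, using fresh $0/1$ variables for the atoms together with Ramsey's theorem, reduce $\ram\bm x^{\mathrm{i/r}},\bm y^{\mathrm{i/r}}\colon\varphi'$ to a Ramsey quantifier applied to a conjunction of implications to Presburger atoms over the integer components and $\QLin$ atoms over the fractional components. Next, split the Ramsey-quantified tuple into its integer and fractional parts, introducing the auxiliary $0/1$ variable $r$ exactly as in \Cref{thm:mixed-main} to cover the cases where the clique is constant in the integer components or in the fractional components. This turns the single Ramsey quantifier into a Boolean combination of a Ramsey quantifier over the integer components of a Presburger formula and a Ramsey quantifier over the fractional components of a $\QLin$ formula, which are eliminated by \Cref{main-presburger} and \Cref{thm:rational-main}, respectively. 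Substituting $z_i^{\mathrm{int}}\mapsto\lfloor z_i\rfloor$ and $z_i^{\mathrm{real}}\mapsto z_i-\lfloor z_i\rfloor$ in the resulting formula $\psi'(\bm z^{\mathrm{i/r}})$ yields the desired $\psi(\bm z)$ over $\QfLin$; all transformations are polynomial and preserve linear size because the invoked results do.

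Correctness would be argued as in \Cref{thm:mixed-main}, the only point specific to $\QfLin$ being that all witnessing objects stay rational: a clique witnessing $\ram\bm x,\bm y\colon\varphi$ over $\QfLin$ consists of rational tuples, so by the decomposition it splits into a clique over $\Z$ for the integer parts and a clique over $\Q\cap[0,1)$ for the fractional parts; conversely, the ``if'' directions of the underlying lemmas produce, from the rational data guessed by a satisfying assignment of $\psi'$, rational cliques --- integer arithmetic progressions $\bm a_0+k\bm a$ in the Presburger case (\Cref{lem:compatible}) and sequences $\bm a-\tfrac1k\bm d_c+k\bm d_\infty$ with rational $\bm a,\bm d_c,\bm d_\infty$ in the $\QLin$ case (exactly as in the proof of \Cref{thm:rational-main}) --- and recombining integer with fractional parts gives a rational clique for $\varphi$ over $\QfLin$. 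I expect no genuine obstacle: every ingredient is either domain-independent (Ramsey's theorem), purely syntactic (the decomposition), or already available over $\Q$ (\Cref{main-presburger,thm:rational-main}); the one thing to check with care is that each place where the argument distinguishes ``integer'' from ``fractional'' behaviour still makes sense over $\Q$, which it does. (Alternatively one could route the proof through \Cref{thm:mixed-main} over LIRA and invoke elementary equivalence of $\langle\R;\lfloor\cdot\rfloor,+,<,0,1\rangle$ and $\langle\Q;\lfloor\cdot\rfloor,+,<,0,1\rangle$ as in \Cref{thm:rational-main}, but the direct route avoids needing that fact.)
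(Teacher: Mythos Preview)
Your proposal is correct and follows exactly the paper's approach: the paper's proof simply states that one reuses the construction from \Cref{thm:mixed-main}, replacing \Cref{thm:real-main} by \Cref{thm:rational-main}. Your elaboration of why the decomposition and the remaining steps work over $\Q$ is accurate and matches what that one-line proof leaves implicit.
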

\begin{proof}
	We use almost the same construction as for
	\cref{thm:mixed-main}. The only difference is that
	we use \cref{thm:rational-main} in place of
	\cref{thm:real-main}.
\end{proof}
 
\section{Applications}\label{sec:applications}
In this section, we present further applications of our results.

\subsection*{Monadic decomposability}
A formula is called \emph{monadic} if every atom contains at most one variable. As mentioned above, monadic formulas play an important role
in constraint databases\cite{GRS01,CDB-book}. Partly motivated by this, \citet{DBLP:journals/jacm/VeanesBNB17} recently raised the question of how to decide whether a given formula $\varphi$ is equivalent to a monadic formula. In this case, $\varphi$ is called \emph{monadically decomposable}.
For LIA, monadic decomposability was shown decidable (under slightly
different terms) by \citet[Corollary, p.~1048]{ginsburg1966bounded} and a more general
result by \citet[Theorem~3]{DBLP:journals/tocl/Libkin03} establishes dedicability
for LIA, LRA, and other logics~\cite[Corollaries~7,8]{DBLP:journals/tocl/Libkin03}. In terms of complexity, given a quantifier-free LIA formula, monadic decomposability was shown $\coNP$-complete by \citet{DBLP:conf/cade/HagueLRW20}. However, it remained open what the complexity is in the case of LRA and LIRA. From \cref{main-presburger,thm:real-main,thm:mixed-main}, we can conclude the following:
\begin{corollary}\label{corollary-mon-dec}
Given a quantifier-free formula in LIA, LRA, or LIRA, deciding monadic decomposability is $\coNP$-complete.
\end{corollary}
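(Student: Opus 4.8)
The plan is to reduce monadic decomposability to linearly many unsatisfiability queries over the same theory via the quantifier-elimination results of this paper, and to supply a matching lower bound by a short direct reduction.

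\emph{Upper bound.} Fix $\Theory\in\{\text{LIA},\text{LRA},\text{LIRA}\}$ and a quantifier-free $\varphi(x_1,\dots,x_n)$. Recall the classical characterization (see \cite{DBLP:journals/jacm/VeanesBNB17,lics22-ramsey}): $\varphi$ is monadically decomposable if and only if for every $i\in[1,n]$ the equivalence relation $\approx_i$ on the domain---where $a\approx_i a'$ iff $\varphi$ becomes equivalent after substituting $a$ resp.\ $a'$ for $x_i$, universally over the remaining variables---has finite index. Since $\approx_i$ is symmetric, finiteness of its index is, by Ramsey's theorem, equivalent to the absence of an infinite directed $\theta_i$-clique, where $\theta_i(a,a')$ is the existential $\Theory$-formula
\[ \theta_i(a,a')\;:=\;\exists \bar y\colon \neg\bigl(\varphi[x_i{\mapsto}a](\bar y)\leftrightarrow\varphi[x_i{\mapsto}a'](\bar y)\bigr), \]
$\bar y$ listing the variables other than $x_i$ (and $a,a'$ having the sort of $x_i$). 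Hence $\varphi$ is monadically decomposable iff $\bigwedge_{i=1}^n\neg\bigl(\ram a,a'\colon\theta_i(a,a')\bigr)$. Each $\theta_i$ has size linear in $|\varphi|$, so \cref{main-presburger,thm:real-main,thm:mixed-main} let us compute in polynomial time an existential $\Theory$-sentence $\psi_i$ of linear size equivalent to $\ram a,a'\colon\theta_i(a,a')$. Thus $\varphi$ is monadically decomposable iff all of the (at most $n$) sentences $\psi_1,\dots,\psi_n$ are unsatisfiable---a conjunction of linearly many unsatisfiability queries over $\Theory$. Its complement ``some $\psi_i$ is satisfiable'' lies in $\NP$: guess $i$ together with a polynomial-size satisfying assignment and verify it using \cref{prop:sat-presburger,prop:sat-reals,prop:sat-mixed}. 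Hence deciding monadic decomposability is in $\coNP$, and in practice reduces to linearly many unsatisfiability calls to an SMT solver.

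\emph{Lower bound.} It suffices to give one reduction that works uniformly, as LRA-formulas are LIRA-formulas and the same construction can be carried out over $\Z$ for LIA (where $\coNP$-hardness was already obtained by \citet{DBLP:conf/cade/HagueLRW20}). Let $\chi(\bar u)$ be a quantifier-free $\Theory$-formula; deciding its unsatisfiability is $\coNP$-complete by \cref{prop:sat-presburger,prop:sat-reals}. Put $\varphi(x,y,\bar u):=x<y\wedge\chi(\bar u)$ with fresh variables $x,y$. If $\chi$ is unsatisfiable then $\varphi$ is equivalent to the monadic formula $x\neq x$, hence monadically decomposable. If $\chi$ is satisfiable, fix a witness $\bar u_0$; then $a\approx_x a'$ forces $a<c\leftrightarrow a'<c$ for every element $c$, so $a=a'$, $\approx_x$ has infinite index, and $\varphi$ is not monadically decomposable. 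Therefore $\varphi$ is monadically decomposable iff $\chi$ is unsatisfiable, establishing $\coNP$-hardness for LRA and LIRA (and likewise for LIA via the same formula over $\Z$).

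\emph{Main obstacle.} Given the main theorems, the upper bound is a routine combination with the finite-index characterization of monadic decomposability; the only delicate point there is the passage from the symmetric relation $\approx_i$ to a \emph{directed} Ramsey clique, which is harmless precisely because $\approx_i$ is symmetric. The genuinely new content is thus the $\coNP$ lower bound for LRA and LIRA, for which the short reduction above suffices.
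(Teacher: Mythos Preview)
Your upper bound is essentially the paper's own argument: the same equivalence $\approx_i$ (the paper writes $\sim_{\varphi,x_i}$), the same existential formula $\theta_i$ (the paper's $\delta_{\varphi,x_i}$), and the same appeal to the finite-index characterization followed by the Ramsey-elimination theorems and \cref{prop:sat-presburger,prop:sat-reals,prop:sat-mixed}. Two small remarks. First, the paper attributes the finite-index characterization to \cite[Lemma~4]{DBLP:journals/tocl/Libkin03} for LIA and LRA and to \cite[Lemma~10]{DBLP:journals/corr/abs-2304-11034} for LIRA; your citations to \cite{DBLP:journals/jacm/VeanesBNB17,lics22-ramsey} may not explicitly cover the LIRA case. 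Second, the passage from ``$\approx_i$ has infinite index'' to ``there is an infinite directed $\theta_i$-clique'' is immediate---just pick representatives of distinct classes---and does not actually require Ramsey's theorem, so your ``only delicate point'' is not delicate at all.

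Your lower bound is correct but uses a different gadget than the paper. The paper sets $\varphi(\bm x,y,z):=\neg\psi(\bm x)\vee y=z$ and argues that if $\psi$ is satisfiable by some rational $\bm c$, then $\varphi(\bm c,y,z)$ is equivalent to $y=z$, which is not monadically decomposable; if $\psi$ is unsatisfiable, $\varphi$ is identically true. You instead take $\varphi(x,y,\bar u):=x<y\wedge\chi(\bar u)$ and argue directly that $\approx_x$ has infinite index when $\chi$ is satisfiable. Both constructions are equally short and valid; the paper's has the minor advantage that ``$y=z$ is not monadically decomposable'' is the canonical example, while yours ties the hardness argument back to the $\approx_i$ machinery already used in the upper bound, which is arguably cleaner.
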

As mentioned above, the result about LIA was also shown by \citet{DBLP:conf/cade/HagueLRW20}.
The $\coNP$-hardness in \cref{corollary-mon-dec} uses the same idea as \cite[Lemma 6.4]{lics22-ramsey}. 
We show that monadic decomposability for formulas in LRA is $\coNP$-hard:  We
reduce from the unsatisfiability problem for quantifier-free formulas in LRA.
Suppose $\psi(\bm{x})$ is a quantifier-free formula in LRA. Now consider the
formula
\[ \varphi(\bm{x},y,z): = \neg\psi(\bm{x})\vee y=z, \]
where $y$ and $z$ are (fresh) single variables. We claim that $\varphi$ is
\emph{not} monadically decomposable iff $\psi$ is satisfiable.

Let $\psi$ be satisfiable with some $\bm c\in\R^{|\bm x|}$. Without loss of
generality, we may assume $\bm c\in\Q^{|\bm x|}$. Towards a contradiction,
suppose $\varphi$ is monadically decomposable. Then the formula
$\varepsilon(y,z):=\varphi(\bm{c},y,z)$ must also be monadically decomposable.
However, we have $\varepsilon(y,z)$ if and only if $y=z$, which is not
monadically decomposable.

Conversely, suppose $\psi$ is unsatisfiable. Then $\varphi(\bm x,y,z)$ is
satisfied for all $\bm x, y, z$, which means $\varphi$ is clearly monadically
decomposable.

The $\coNP$ upper bound in \cref{corollary-mon-dec} follows from \cref{main-presburger,thm:real-main,thm:mixed-main} as follows. 
 Suppose $\varphi(x,\bm{y})$ is a formula in LIA, LRA, or LIRA with some free variable $x$ and a further vector $\bm{y}$ of free variables. We define the equivalence $\sim_{\varphi,x}$ on the domain $D$ (i.e.\ $\R$ or $\Z$) by
\[ a\sim_{\varphi,x} b \iff \text{for all $\bm{c}\in D^{|\bm{y}|}$, we have $\varphi(a,\bm{c})$ iff $\varphi(b,\bm{c})$}. \]
Note that if $\varphi$ is quantifier-free, we can easily construct a linear-size existential formula for the negation of $\sim_{\varphi,x}$ by setting
$ \delta_{\varphi,x}(x,x'):=\exists \bm{y}\colon \neg(\varphi(x,\bm{y})\leftrightarrow\varphi(x',\bm{y}))$.
For LIA, LRA, and LIRA, the formula $\varphi(x_1,\ldots,x_n)$ is monadically decomposable if and only if for each $i\in\{1,\ldots,n\}$, the equivalence $\sim_{\varphi,x_i}$ has only finitely many equivalence classes. This is shown in \cite[Lemma~4]{DBLP:journals/tocl/Libkin03} for LIA and LRA and in \cite[Lemma 10]{DBLP:journals/corr/abs-2304-11034} for LIRA.
Thus, the formula $\varphi(x_1,\ldots,x_n)$ is \emph{not} monadically decomposable if and only if $\mu_x:=\ram (x,x')\colon \delta_{\varphi,x_i}$
holds for some $i\in\{1,\ldots,n\}$. Thus, we can decide monadic non-decomposability by deciding in $\NP$ each of the $n$ formulas $\mu_x$ by applying \cref{main-presburger,thm:real-main,thm:mixed-main}.

We should mention that although a $\coNP$ algorithm was known for LIA, our new procedure to decide monadic decomposability is asymptotically much more efficient than the one by \citet{DBLP:conf/cade/HagueLRW20}: They construct for each variable $x$ a formula $\nu_x$ that contains an exponential constant $B$~\cite[p.~128]{DBLP:conf/cade/HagueLRW20}. They choose $B=2^{dmn+3}$~\cite[p.~132]{DBLP:conf/cade/HagueLRW20}, where (i)~$d$ is the number of bits needed to encode constants in $\varphi$, (ii)~$n$ is the number of linear inequalities in any disjunct in $\varphi$, and (iii)~$m$ is the number of variables in $\varphi$. Thus, this constant requires $dmn+3$ bits, meaning $\nu_x$ is of length $O(dmn)$, which is cubic in the size of the input formula. In contrast, each of our formulas $\mu_x$ (and thus the result after eliminating $\ram$) is of linear size.

\subsection*{Linear liveness for systems with counters and clocks}
As already observed by \citet{lics22-ramsey}, the Ramsey quantifier can be used to
check liveness properties of formal systems, provided that the reachability
relation is expressible in the respective logic. This yields several
applications for systems that involve counters and/or clocks.

Specifically, there is a rich variety of models where a configuration is an
element of $C=Q\times\Z^k\times\D^\ell$, where $Q$ is a finite set of control
states, and $\D$ is either $\R$ or $\Q$, with a step relation
$\mathord{\to}\subseteq C\times C$, and for $p,q\in Q$, one can effectively
construct an existential first-order formula $\varphi_{p,q}(\bm{x},\bm{y})$ for the reachability relation: This means, 
$(p,\bm{x})\xrightarrow{*}(q,\bm{y})$ if and only if
$\varphi_{p,q}(\bm{x},\bm{y})$. Here the components $\Z^k$ and $\D^\ell$ hold counter or clock values.
We will see some concrete examples below.

For systems of this type, we can consider the \emph{linear liveness problem}:
\begin{description}
\item[Given] A description of a system, a formula $\psi(\bm{x},\bm{y},\bm{z})$, and a state $q$.
\item[Question] Is there an infinite run $(q_1,\bm{u}_1)\to(q_2,\bm{u}_2)\to\cdots$ and a vector $\bm{v}$ such that for some infinite set $I\subseteq \N$, we have $q_i=q$ for every $i\in I$ and $\psi(\bm{u}_i,\bm{u}_j,\bm{v})$ for any $i,j\in I$ with $i<j$.
\end{description}
Here, a simple case is that $\psi$ simply states a linear condition on each
configuration (thus, $\psi(\bm{x},\bm{y},\bm{z})$ would just depend on
$\bm{x}$). But one can also require that between $(q_i,\bm{u}_i)$ and
$(q_j,\bm{u}_j)$, the values in $\bm{u}_i$ and $\bm{u}_j$ have increased by at
least some positive value in $\bm{v}$. With this, one can express, e.g.\ that
clock values grow unboundedly (rather than converging).

If we have reachability formulas $\varphi_{p,q}$ as above, linear liveness can easily be decided using the Ramsey quantifier: Note that there is a run as above if and only if for some state $q\in Q$, we have
\begin{equation} \exists \bm{z}\colon \ram (\bm{x},\bm{y})\colon \varphi_{q,q}(\bm{x},\bm{y})\wedge \psi(\bm{x},\bm{y},\bm{z}). \label{linear-liveness-formula} \end{equation}
Let us see some applications of this.

\paragraph{Timed (pushdown) automata}
	In a \emph{timed automaton}~\cite{alur1994theory}, configurations are elements of $Q\times\R^\ell$ and the real numbers are clock values. In each step, some time can elapse or, depending on satisfaction of guards, some counters can be reset; see \cite{alur1994theory} for details. It was shown by \citet[Theorem~5]{DBLP:conf/concur/ComonJ99} that the reachability relation in timed automata is effectively definable in $\langle\R;+,<,1,0\rangle$. Using a conceptually simpler construction, \citet[Theorem~10]{QuaasSW17} construct an exponential-size existential formula in $\langle\R;+,<,0,1\rangle$ for the reachability relation. Using the formula \eqref{linear-liveness-formula} and our results, we can thus decide the linear liveness problem for timed automata in $\NEXPTIME$. Recall that liveness in timed automata is $\PSPACE$-complete~\cite[Theorem 4.17]{alur1994theory}. The difference to linear liveness is that in the latter, one can express arbitrary LIRA constraints (even between configurations).

In order to model timed behavior of recursive programs, timed automata have
been extended by stacks, where each stack either
has~\cite{DBLP:conf/lics/AbdullaAS12} or does not
have~\cite{DBLP:conf/hybrid/BouajjaniER94} its own clock value. These two
versions are semantically equivalent and have been extended to \emph{timed
pushdown automata}~\cite{DBLP:conf/icalp/ClementeL18}, a strict extension that
allows additional counter constraints.
\citet[Theorem~5]{DBLP:conf/icalp/ClementeL18} show that the reachability
relation, between two configurations with empty stack, is definable by a
doubly-exponential existential formula over $\langle\Q;+,<,0,1\rangle$ (for the more restricted model of \citet{DBLP:conf/lics/AbdullaAS12}, existence of such a formula had been shown by \citet{DBLP:journals/tcs/Dang03}, but without complexity bounds).  Based on
this, our results allow us to decide the linear liveness problem for timed
pushdown automata in $\TwoNEXPTIME$, if we view each run from empty stack
to empty stack as a single step of the system.

\paragraph{Continuous vector addition systems with states}
	Vector addition systems with states (VASS; a.k.a.\
	Petri nets) are arguably the most popular formal
	model for concurrent systems. They consist of a
	control state and some counters that assume natural
	numbers. Since the reachability problem is
	Ackermann-complete~\cite{DBLP:conf/focs/Leroux21,DBLP:conf/focs/CzerwinskiO21,DBLP:conf/lics/LerouxS19}
	and the coverability
	problem is 
	$\EXPSPACE$-complete~\cite{Rackoff78,lipton1976reachability},
	there has been substantial interest in 
	finding overapproximations where these problems
	become easier.

	A particularly successful overapproximation is the
	\emph{continuous semantics}, where each added
	vector is non-deterministically multiplied by some
	$0<\alpha<1$. This has been used
	to speed up the backward search
	procedure by pruning configurations that
	cannot cover the target
	continuously~\cite{DBLP:conf/tacas/BlondinFHH16}.
	Thus, in continuous semantics, the configurations
	belong to $Q\times\Q^\ell$.
	
	As shown by \citet[Theorem
	4.9]{DBLP:conf/lics/BlondinH17}, the reachability
	relation under continuous semantics can be
	described by a polynomial-size existential
	formula over
	$\langle\Q;+,<,0,1\rangle$. Thus, our results imply
	$\NP$-completeness of the linear liveness problem 
	for continuous VASS ($\NP$-hardness easily
	follows from $\NP$-hardness of reachability~\cite[Theorem
	4.14]{DBLP:conf/lics/BlondinH17}).

\paragraph{Systems with discrete counters} There are several
	prominent types of counter systems for which one
	can compute an existential Presburger formula for the reachability relation.
	The most well-known example are
	\emph{reversal-bounded counter
	machines} (RBCM)~\cite{DBLP:journals/jacm/Ibarra78}. These
	admit an existential formula for the reachability
	relation, even if one of the counters has no
	reversal bound~\cite[Theorem~12]{DBLP:conf/mfcs/IbarraSDBK00}, even with a polynomial-time construction~\cite{DBLP:conf/cav/HagueL11}.

	Closely related to RBCM are \emph{Parikh
	automata} (PA)~\cite{DBLP:conf/icalp/KlaedtkeR03}, for
	which one can also compute an existential Presburger
	formula for the reachability relation in polynomial
	time. 

	Another example is the class of \emph{succinct
	one-counter systems}, which have one unrestricted
	counter with binary-encoded updates. Based on~\cite{DBLP:conf/concur/HaaseKOW09},
	\citet{LiCWX20} have shown that one can construct
	in polynomial time an existential Presburger formula for the reachability
    relations. In fact, using the proof techniques in \cite{HL12,To9}, this 
    result can be extended to multithreaded
    programs with $k$ threads --- each represented as a succinct
    one-counter system --- where inter-thread communication is limited, e.g.,
    when the number of context switches is also fixed (in the style of 
    \cite{QR05}). 

	Thus, our results imply that for these models, the
	linear liveness problem is $\NP$-complete. (Again,
	$\NP$-hardness follows using a simple reduction
	from the reachability problem.) In the case of
	PA, this strengthens recent results on
	PA over infinite
	words~\cite{DBLP:journals/corr/abs-2301-08969,DBLP:conf/fsttcs/GuhaJL022}.

\subsection*{Deciding whether a relation is a WQO} The
concept of well-structured transition
systems~(WSTS)~\cite{DBLP:conf/lics/AbdullaCJT96,DBLP:journals/tcs/FinkelS01}
is a cornerstone of the verification of infinite-state
systems. Here, the key idea is to order the configurations
of a system by a well-quasi-ordering (WQO). This recently
led~\citet{DBLP:conf/fsttcs/FinkelG19} to consider the
problem of automatically establishing that a given counter
machine is well-structured. In particular, they raised the
problem of deciding whether a relation, specified by a
formula in Presburger arithmetic, is a well-quasi ordering.
\citet{DBLP:journals/corr/abs-1910-02736} show that this is decidable using Ramsey quantifiers in automatic structures, which leads to high complexity:
For quantifier-free formulas this results in a $\PSPACE$ procedure by constructing an NFA for the negation and 
then evaluating a Ramsey quantifier using \cite{lics22-ramsey}.

Our results settle the complexity, if the relation is given by a quantifier-free formula $\varphi(\bm{x},\bm{y})$: Deciding whether $\varphi$ defines a WQO is $\coNP$-complete.
Suppose $\bm{x}$ and $\bm{y}$ are vectors of $k$ variables and thus $\varphi$ defines a relation on $\Z^k$.
Recall that a relation $R\subseteq\Z^k$ is a WQO iff it is reflexive and transitive and for every infinite sequence $\bm{a}_1,\bm{a}_2,\ldots$, there are $i<j$ with $(\bm{a}_i,\bm{a}_j) \in R$. Thus, $\varphi$ violates the conditions of a WQO if and only if
(1)~$\exists \bm{x}\colon \neg\varphi(\bm{x},\bm{x})$ (reflexivity violation) or (2)~$\exists \bm{x},\bm{y},\bm{z}\colon \varphi(\bm{x},\bm{y})\wedge \varphi(\bm{y},\bm{z})\wedge\neg\varphi(\bm{x},\bm{z})$ (transitivity violation) or (3)~$\ram \bm{x},\bm{y}\colon \neg\varphi(\bm{x},\bm{y})$ (violation of the sequence condition). Thus, we obtain an $\NP$ procedure using \cref{main-presburger} and \cref{prop:sat-presburger}.

Here, $\coNP$-hardness can be shown using a simple ad-hoc proof:
We show that it is $\coNP$-hard to decide whether a given quantifier-free formula $\varphi(\bm{x},\bm{y})$ defines a WQO. In fact, we show that it is $\NP$-hard to test whether $\varphi$ does \emph{not} define a WQO. To this end, we reduce from the problem of satisfiability of quantifier-free Presburger formulas.
Suppose we are given a quantifier-free formula $\psi(\bm{x})$ where $\bm{x}$ is a vector of $k$ variables that range over $\Z$. Consider the following formula $\varphi((x,\bm{x}),(y,\bm{y}))$:
\begin{align*}
	(x=y=0) \vee (x<0 \wedge y<0) \vee (x>0 \wedge y>0) \vee (x<0 \wedge y=0) \vee (x=0 \wedge y>0 \wedge \psi(\bm{y})).
\end{align*}
We claim that $\varphi$ defines a WQO if and only if $\psi$
has no solution. Suppose $\psi$ has no solution. Then we
have $\varphi((x,\bm{x}),(y,\bm{y}))$ if and only if $x$
and $y$ have the same sign (or are both zero) or $x<0$ and
$y=0$. Let $\Z^{k+1}=A\cup B\cup C$, where $A$ are the
elements with negative first component, $B$ are the
elements with first component zero, and $C$ are the
elements with positive first component. Within each set
$A,B,C$, $\varphi$ relates all vectors. Moreover, the
vectors in $A$ are smaller than those in $B$. And those in
$C$ are unrelated to those in $A\cup B$. This is clearly a
WQO.

If $\psi$ has a solution, say $\psi(\bm{a})$, then the
relation defined by $\varphi$ is not transitive: We have
$\varphi((-1,\bm{0}),(0,\bm{0}))$ and
$\varphi((0,\bm{0}),(1,\bm{a}))$, but not
$\varphi((-1,\bm{0}),(1,\bm{a}))$.

\section{Experiments}\label{sec:experiments}
We have implemented a prototype (which can be found at \cite{implementation}) of our Ramsey quantifier elimination algorithms for LIA, LRA, and LIRA 
in Python using the Z3 \cite{MB08} interface Z3Py. 
We have tested it against two sets of micro-benchmarks. 
The first benchmarks contain the following examples, where the dimension $d$ of $\bm x$ and $\bm y$ is a parameter:
\begin{enumerate}[(a)]
\item
$\varphi_\mathrm{half} := \ram \bm x,\bm y \colon 2 \bm y \le \bm x \wedge \bm x \ge \bm t$ for parameter $t \in \Z$
\item
$\varphi_\mathrm{eq\_ex} := \ram \bm x,\bm y \colon \exists \bm z \colon \bm x \ll \bm y \wedge \bm x = \bm z$
\item
$\varphi_\mathrm{eq\_free} := \ram \bm x,\bm y \colon \bm x \ll \bm y \wedge \bm x = \bm z$
\item
$\varphi_\mathrm{dickson} := \ram \bm x,\bm y \colon \bm x \ge \bm 0 \wedge (\bm x > \bm y \vee \bm x \not\le \bm y \wedge \bm y \not\le \bm x)$
where unsatisfiability over $\Z$ proves Dickson's lemma
\item
$\varphi_\mathrm{program} := \ram (\bm x_1,\bm x_2),(\bm y_1, \bm y_2) \colon \bm x_1 \gg \bm 0 \wedge \bm x_2 \gg \bm 0 \wedge
\bm y_1 \ge \bm{0.5} \bm x_1 + \bm{0.5} \wedge \bm y_2 \le \bm x_2 - \lfloor \bm x_1 \rfloor$
that describes an under-approximation of the non-terminating program in \Cref{sec:examples},
where $\bm x_1, \bm y_1$ are vectors of real variables and $\bm x_2, \bm y_2$ are vectors of integer variables. 
\end{enumerate}
Here, $\lfloor \bm v \rfloor$ for a vector $\bm v$ denotes the vector $(\lfloor v_1 \rfloor,\dots,\lfloor v_d \rfloor)$.
Moreover, recall that for numbers $n$ we write $\bm n$ for the vector $(n,\dots,n)$ of appropriate dimension.

The experiments were conducted on an Intel(R) Core(TM) i7-10510U CPU with 16GB of RAM running on Windows~10.
The results are summarized in \Cref{fig:ram-experiments}.
We observe that the number of output variables and atoms linearly depends on the number of input variables and atoms.
In the first three cases, the output formula has ca.\ 5 times as many variables as the input has variables and atoms.
The choice of parameter $t \in \Z$ has no notable effect on the size of the output formula or the running time 
since it only changes constants.
For $\varphi_\mathrm{program}$ our prototype implementation assumes the formula to be decomposed into a Boolean combination of LIA and LRA formulas
whose size is given in the input column of \Cref{fig:ram-experiments}.
Then the running time is dominated by the Z3 satisfiability check due to the large number of variables and atoms in the output.

\begin{table}
\resizebox{.8\columnwidth}{!}{\begin{tabular}{|c|c|c|c c|c c|c|c|c|c|c|}
\hline
\multirow{2}{3em}{formula} & \multirow{2}{2em}{dom} & \multirow{2}{2em}{sat} & \multicolumn{2}{|c|}{input} & \multicolumn{7}{|c|}{output}\\
\cline{4-12}
& & & \#vars & \#atoms & \#vars & \#atoms & $d=1$ & $d=10$ & $d=20$ & $d=50$ & $d=100$\\
\hline\hline
\multirow{2}{3em}{$\varphi_\mathrm{half}$} & $\Z$ & no & \multirow{2}{1em}{$2d$} & \multirow{2}{1em}{$2d$} & $22d$ & $130d$ & $0.04$s & $0.33$s & $0.84$s & $3.35$s & $11.00$s\\
& $\R$ & $t \le 0$ & & & $25d$ & $284d$ & $0.06$s & $0.75$s & $2.10$s & $10.31$s & $38.48$s\\
\hline
\multirow{2}{3em}{$\varphi_\mathrm{eq\_ex}$} & $\Z$ & yes & \multirow{2}{1em}{$3d$} & \multirow{2}{1em}{$2d$} & $31d$ & $166d$ & $0.05$s & $0.67$s & $2.01$s & $10.23$s & $39.44$s\\
& $\R$ & yes & & & $25d$ & $213d$ & $0.05$s & $1.03$s & $3.53$s & $20.01$s & $82.46$s\\
\hline
\multirow{2}{3em}{$\varphi_\mathrm{eq\_free}$} & $\Z$ & no & \multirow{2}{1em}{$3d$} & \multirow{2}{1em}{$2d$} & $28d$ & $162d$ & $0.05$s & $0.44$s & $1.12$s & $4.73$s & $16.11$s\\
& $\R$ & no & & & $20d$ & $209d$ & $0.08$s & $0.54$s & $1.64$s & $8.18$s & $31.03$s\\
\hline
\multirow{2}{3em}{$\varphi_\mathrm{dickson}$} & $\Z$ & no & \multirow{2}{1em}{$2d$} & \multirow{2}{1em}{$5d$} & $37d$ & $226d$ & $0.06$s & $0.58$s & $1.52$s & $6.33$s & $21.60$s\\
& $\R$ & yes & & & $40d$ & $482d$ & $0.08$s & $1.17$s & $4.48$s & $17.18$s & $66.46$s\\
\hline
\multirow{1}{3em}{$\varphi_\mathrm{program}$} & $\R,\Z$ & yes & \multirow{1}{1em}{$6d$} & \multirow{1}{1em}{$14d$} & $426d+1$ & $3858d+4$ & $0.84$s & $68.28$s & $445.89$s & $>500$s & $>500$s\\
\hline
\end{tabular}}
\caption{Experiments for the elimination of the Ramsey quantifier with a 500 seconds timeout.}
\label{fig:ram-experiments}
\end{table}

For the second benchmarks we used our elimination procedure 
to check monadic decomposability, as described in \Cref{sec:applications}, of the following formulas:
\begin{enumerate}[(a)]
\item $\varphi_\mathrm{imp} := \bigwedge_{i=1}^d x_i \ge 0 \to x_i + y_i \ge k \wedge y_i \ge 0$ for parameter $k \in \N$
\item $\varphi_\mathrm{diagonal} := \bm 0 \le \bm x \le \bm k \wedge x_1 = \dots = x_d$ for parameter $k \in \N$
\item $\varphi_\mathrm{cubes2d} := x_1 + x_2 \le k \wedge \bigwedge_{i=1}^k i \le x_1 \le i+2 \wedge i \le x_2 \le i+2$ 
where parameter $k \in \N$ is the number of cubes
\item $\varphi_\mathrm{cubes10} := \bigwedge_{i=1}^{10} \bm i \le \bm x \le \bm{i+2}$
\item $\varphi_\mathrm{mixed} := \bm x = \lfloor \bm y \rfloor \wedge \bm 0 \le \bm y \le \bm k$
over LIRA with parameter $k \in \N$ 
\end{enumerate}
The results are shown in \Cref{fig:mondec-experiments} where either the dimension $d$ or parameter $k$ is varied.
The size of the input refers to the formula $\delta_{\varphi,(x_1,\dots,x_{i-1},x_{i+1},\dots,x_d)}$ for $i = 1$
that is defined similarly to $\delta_{\varphi,x}$ in \Cref{sec:applications} but uses only one existentially quantified variable.
This has the advantage that the algorithm only has to eliminate one existential variable before eliminating the Ramsey quantifier.
For the output we measure the size of the first formula given to Z3, i.e., 
$\delta_{\varphi,(x_2,\dots,x_d)}$ after elimination of the Ramsey quantifier.
We observe that if $n$ is the number of input variables plus atoms, on these instances the number of output variables can be estimated by 
$5 \cdot n$ over $\Z$ and $10 \cdot n$ over $\R$.
Note that not only is the formula $\delta_{\varphi,(x_2,\dots,x_d)}$ (the input to the elimination procedure) larger than $\varphi$,
where monadic decomposability is checked on, we also have to consider all of the $\delta_{\varphi,(x_1,\dots,x_{i-1},x_{i+1},\dots,x_d)}$ in case $\varphi$
is monadically decomposable.
This explains the slowdown compared to \Cref{fig:ram-experiments}.

For $\varphi_\mathrm{imp}$ and $\varphi_\mathrm{diagonal}$ we observe that, despite the larger output formula, 
over $\R$ the algorithm terminates significantly faster than over $\Z$
since it only needs to construct $\delta_{\varphi,(x_2,\dots,x_d)}$ to detect that $\varphi$ is not monadically decomposable.
The first four examples are taken from \cite{MSL21} where the authors compare their tool to $\mathrm{mondec}_1$ from \cite{DBLP:journals/jacm/VeanesBNB17} 
that \emph{computes} a monadic decomposition if one exists.
We observe that on these instances our \emph{decision} algorithm is significantly faster than $\mathrm{mondec_1}$, 
especially for $\varphi_\mathrm{imp}$ and $\varphi_\mathrm{diagonal}$ when only the parameter $k$ is varied
(and $d = 1$ resp. $d = 2$ as in \cite{MSL21}).
The reason for this is that $\mathrm{mondec_1}$ computes the monadic decomposition whose size grows exponentially in the encoding of $k$,
whereas in our approach, where we only decide if a decomposition exists, $k$ only changes a constant in the formulas where the Ramsey quantifier is eliminated.
Therefore, changing $k$ in the two examples (and also in $\varphi_\mathrm{mixed}$) does not have any notable effect on the running time in \Cref{fig:mondec-experiments}.
In this case, our algorithm is also faster than the one developed in \cite{MSL21} that outputs the decomposition in form of cubes.
Since both algorithms in \cite{DBLP:journals/jacm/VeanesBNB17} and \cite{MSL21} only terminate if the input formula is monadically decomposable,
our algorithm is the only one that terminates on $\varphi_\mathrm{imp}$, $\varphi_\mathrm{diagonal}$, and $\varphi_\mathrm{cubes2d}$ over $\R$
and can therefore be used as a termination check in the other algorithms.
Finally, note that the increase of the running time for $\varphi_\mathrm{cubes2d}$, $\varphi_\mathrm{cubes10}$ over $\R$ and $\varphi_\mathrm{mixed}$
is due to the large number of atoms in the output, which is problematic not only for the elimination procedure but especially for the satisfiability check with Z3.
We observe that for large instances, the running time is dominated by the satisfiability check.

\begin{table}
\resizebox{.8\columnwidth}{!}{\begin{tabular}{|c|c|c|c c|c c|c|c|c|c|}
\hline
\multirow{2}{3em}{formula} & \multirow{2}{2em}{dom} & \multirow{2}{3em}{mondec} & \multicolumn{2}{|c|}{input} & \multicolumn{6}{|c|}{output}\\
\cline{4-11}
& & & \#vars & \#atoms & \#vars & \#atoms & \multicolumn{4}{|c|}{}\\
\hline\hline
\multirow{3}{3em}{$\varphi_\mathrm{imp}$} & & & & & & & $d=1$ & $d=5$ & $d=10$ & $d=20$\\
\cline{8-11}
& $\Z$ & yes & \multirow{2}{3em}{$4d-1$} & \multirow{2}{3em}{$12d$} & $84d-8$ & $516d-62$ & $0.12$s & $6.19$s & $34.11$s & $224.53$s\\
& $\R$ & no & & & $136d-7$ & $1678d-130$ & $0.32$s & $2.27$s & $6.69$s & $19.01$s\\
\hline
\multirow{3}{3em}{$\varphi_\mathrm{diagonal}$} & & & & & & & $d=2$ & $d=10$ & $d=20$ & $d=30$\\
\cline{8-11}
& $\Z$ & yes & \multirow{2}{3em}{$2d-1$} & \multirow{2}{3em}{$4d+4$} & $52d-8$ & $322d-62$ & $0.15$s & $8.20$s & $46.48$s & $151.42$s\\
& $\R$ & no & & & $35d+59$ & $416d+716$ & $0.32$s & $1.33$s & $3.82$s & $7.37$s\\
\hline
\multirow{3}{3em}{$\varphi_\mathrm{cubes2d}$} & & & & & & & $k=50$ & $k=100$ & $k=150$ & $k=250$\\
\cline{8-11}
& $\Z$ & yes & \multirow{2}{3em}{$3$} & \multirow{2}{3em}{$16k+4$} & $80k+36$ & $512k+198$ & $7.74$s & $18.77$s & $37.73$s & $109.17$s\\
& $\R$ & no & & & $176k+63$ & $2256k+702$ & $210.12$s & $>500$s & $>500$s & $>500$s\\
\hline
\multirow{3}{3em}{$\varphi_\mathrm{cubes10}$} & & & & & & & $d=2$ & $d=10$ & $d=15$ & $d=20$\\
\cline{8-11}
& $\Z$ & yes & \multirow{2}{3em}{$2d-1$} & \multirow{2}{3em}{$80d$} & $412d-8$ & $2626d-62$ & $1.18$s & $66.40$s & $231.67$s & $482.39$s\\
& $\R$ & yes & & & $893d-7$ & $11414d-130$ & $5.83$s & $>500$s & $>500$s & $>500$s\\
\hline
\multirow{2}{3em}{$\varphi_\mathrm{mixed}$} & & & & & & & $d=1$ & $d=2$ & $d=3$ & $d=4$\\
\cline{8-11}
& $\R,\Z$ & yes & \multirow{1}{3em}{$6d-1$} & \multirow{1}{3em}{$28d$} & $842d-28$ & $7710d-198$ & $3.67$s & $42.84$s & $192.76$s & $>500$s\\
\hline
\end{tabular}}
\caption{Experiments for monadic decomposability with a 500 seconds timeout.}
\label{fig:mondec-experiments}
\end{table}
 
\section{Conclusion and Future Work}
\label{sec:conc}

We have given efficient algorithms for removing Ramsey quantifiers from the
theories of Linear Integer Arithmetic (LIA), Linear Real Arithmetic (LRA), and
Linear Integer Real Arithmetic (LIRA). The algorithm runs in polynomial time and
is guaranteed to produce formulas of linear size. We have shown that this leads
to applications in proving termination/non-termination of programs, as well as
checking variable dependencies (a.k.a. monadic decomposability) in a given
formula. 

We mention several future research avenues. First, combined with existing 
results on computation of reachability relations 
\cite{BFLP08,BFLS05,BH06,BLW03,TORMC}, we obtain fully-automatic methods for
proving termination/non-termination. Recent software verification frameworks,
however, rely on 
\emph{Constraint Horn Clauses} (CHC), which extend SMT with recursive predicate, e.g., see \cite{BMR12,BGMR15}. 
To extend the framework for proving termination, one typically extends CHC with
ad-hoc well-foundedness conditions \cite{BPR13}. Our results suggest that we
can instead extend CHC with Ramsey quantifiers, and develop synthesis algorithms
for the framework. We leave this for future work. Second, we also mention that
eliminability of Ramsey quantifiers from other theories (e.g. non-linear real
arithmetics and EUF) remains open, which we also leave for future work.
 
\begin{acks}
    We thank anonymous reviewers, Arie Gurfinkel, and Philipp R\"{u}mmer
    for their helpful comments. Moreover, we are grateful to Christoph Haase for discussions on existing quantifier elimination techniques.
Funded by the European Union (\grantsponsor{501100000781}{ERC}{http://dx.doi.org/10.13039/501100000781}, LASD, \grantnum[https://doi.org/10.3030/101089343]{501100000781}{101089343}, and FINABIS, \grantnum[https://doi.org/10.3030/101077902]{501100000781}{101077902})\marginpar{\includegraphics[width=1.3cm]{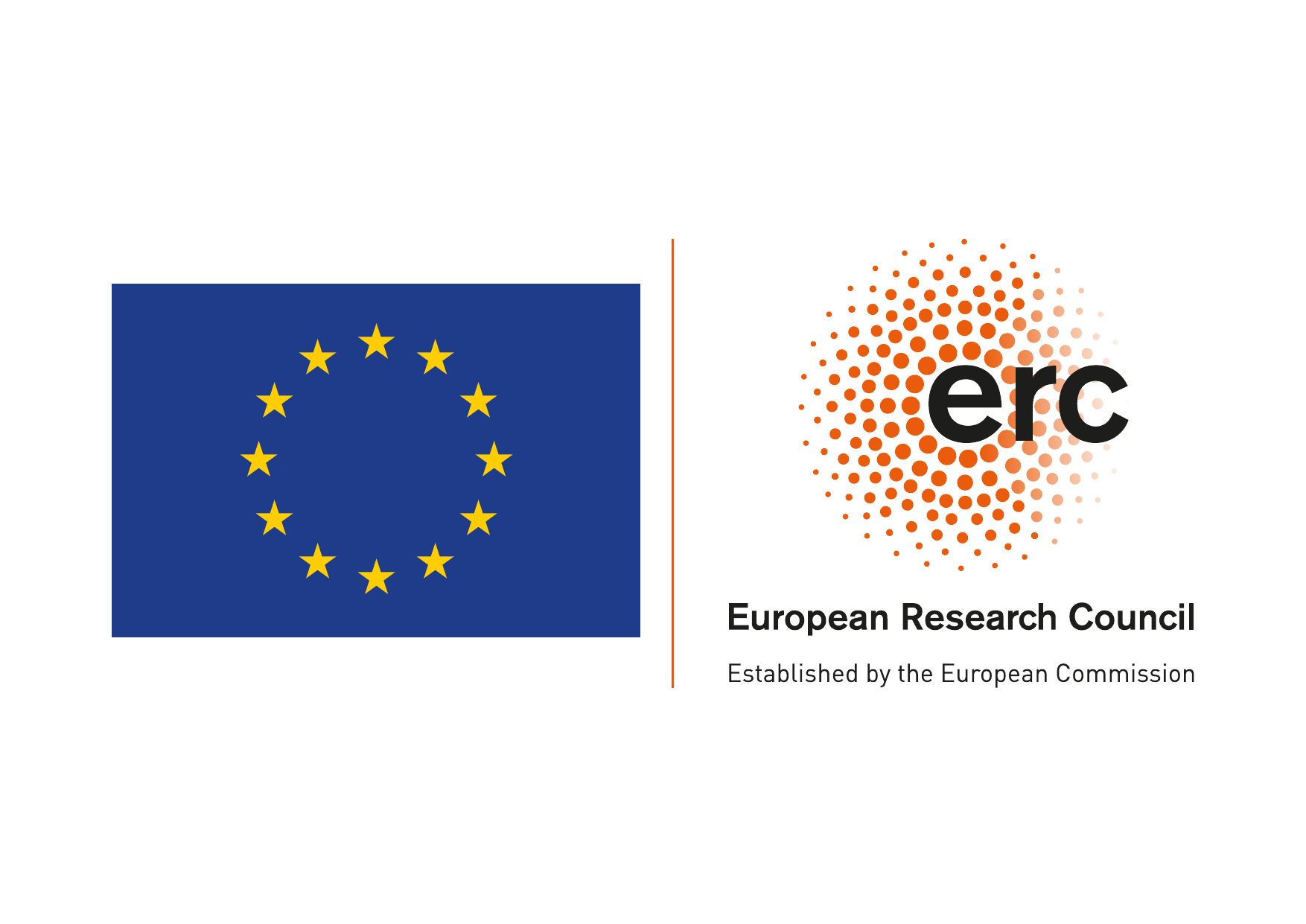}}. Views and opinions expressed are however those of the authors only and do not necessarily reflect those of the European Union or the European Research Council Executive Agency. Neither the European Union nor the granting authority can be held responsible for them.
\end{acks}

\bibliographystyle{ACM-Reference-Format}
\bibliography{bib}

\end{document}